\newtheorem{theorem}{Theorem}[section]
\newtheorem{lemma}[theorem]{Lemma}
\newtheorem{proposition}[theorem]{Proposition}
\newtheorem{corollary}[theorem]{Corollary}
\newtheorem{remark}[theorem]{Remark}
\newcommand{\ba}{\boldsymbol{a}}
\newcommand{\bb}{\boldsymbol{b}}
\newcommand{\be}{\boldsymbol{e}}
\newcommand{\bff}{\boldsymbol{f}}
\newcommand{\bg}{\boldsymbol{g}}
\newcommand{\bh}{\boldsymbol{h}}
\newcommand{\bp}{\boldsymbol{p}}
\newcommand{\bq}{\boldsymbol{q}}
\newcommand{\bu}{\boldsymbol{u}}
\newcommand{\bv}{\boldsymbol{v}}
\newcommand{\bw}{\boldsymbol{w}}
\newcommand{\bx}{\boldsymbol{x}}
\newcommand{\by}{\boldsymbol{y}}
\newcommand{\bz}{\boldsymbol{z}}
\newcommand{\blambda}{\boldsymbol{\lambda}}
\newcommand{\BA}{\boldsymbol{A}}
\newcommand{\BB}{\boldsymbol{B}}
\newcommand{\BD}{\boldsymbol{D}}
\newcommand{\BE}{\boldsymbol{E}}
\newcommand{\BF}{\boldsymbol{F}}
\newcommand{\BI}{\boldsymbol{I}}
\newcommand{\BJ}{\boldsymbol{J}}
\newcommand{\BH}{\boldsymbol{H}}
\newcommand{\BM}{\boldsymbol{M}}
\newcommand{\BP}{\boldsymbol{P}}
\newcommand{\BS}{\boldsymbol{S}}
\newcommand{\BT}{\boldsymbol{T}}
\newcommand{\BU}{\boldsymbol{U}}
\newcommand{\BV}{\boldsymbol{V}}
\newcommand{\BW}{\boldsymbol{W}}
\newcommand{\BX}{\boldsymbol{X}}
\newcommand{\BY}{\boldsymbol{Y}}
\newcommand{\BZ}{\boldsymbol{Z}}
\newcommand{\BPHI}{\boldsymbol{\Phi}}
\newcommand{\BPHIB}{\boldsymbol{\Phi}^{\bot}}
\newcommand{\BPhi}{\boldsymbol{\Phi}}
\newcommand{\bzero}{\boldsymbol{0}}
\newcommand{\beps}{\boldsymbol{\eps}}
\newcommand{\btheta}{\boldsymbol{\theta}}
\newcommand{\hby}{\hat{\boldsymbol{y}}}
\newcommand{\hBX}{\hat{\boldsymbol{X}}}
\newcommand{\tbu}{\widetilde{\boldsymbol{u}}}
\newcommand{\tbv}{\widetilde{\boldsymbol{v}}}
\newcommand{\A}{\mathcal{A}}
\newcommand{\PP}{\mathcal{P}}
\newcommand{\CM}{\mathcal{M}}
\renewcommand{\Pr}{\mathbb{P}}
\newcommand{\CC}{\mathbb{C}}
\newcommand{\CZ}{\mathcal{Z}}
\newcommand{\muh}{\mu_h}
\newcommand{\mum}{\mu_{\max}}
\newcommand{\mumin}{\mu_{\min}}
\newcommand{\I}{\boldsymbol{I}}
\newcommand{\RR}{\mathbb{R}}
\newcommand{\lag}{\langle}
\newcommand{\rag}{\rangle}
\newcommand{\ol}{\overline}
\newcommand{\lp}{\left(} 
\newcommand{\rp}{\right)} 
\newcommand{\ls}{\left[} 
\newcommand{\rs}{\right]} 
\newcommand{\Tr}{\text{Tr}}
\newcommand{\eps}{\epsilon}
\newcommand{\TB}{T^{\bot}}
\DeclareMathOperator{\VEC}{vec}
\DeclareMathOperator{\E}{\mathbb E}
\DeclareMathOperator{\diag}{diag}
\DeclareMathOperator{\Null}{Null}
\DeclareMathOperator{\SNR}{SNR}
\DeclareMathOperator{\Ran}{Ran}
\DeclareMathOperator{\mix}{mix}
\DeclareMathOperator{\subjectto}{\text{subject to}}
\begin{document}
%
\title{ {Blind Deconvolution Meets Blind Demixing: \\ Algorithms and Performance Bounds} \thanks{This work was supported by the National Science Foundation under  grant~DTRA-DMS 1322393 and~DMS 1620455. This paper was also presented at  the 2016 50th Asilomar Conference on the Signals, Systems and Computers, see~\cite{LS16b}.} }


\author{Shuyang~Ling, 
        Thomas~Strohmer
\thanks{S.~Ling and T.~Strohmer are with the Department of Mathematics, University of California at Davis, Davis, CA 95616, USA (E-mail: syling@math.ucdavis.edu; strohmer@math.ucdavis.edu).}}


%

\markboth{IEEE Transactions on Information Theory}{}


\maketitle



%

\begin{abstract}
Suppose that we have $r$ sensors and each one intends to send a function $\boldsymbol{g}_i$ (e.g.\ a signal or an image)  to a
receiver common to all $r$ sensors.
During transmission, each  $\boldsymbol{g}_i$  gets convolved with a function $\boldsymbol{f}_i$.
The receiver records the function $\by$, given by the sum of all these convolved signals.
When and under which conditions is it possible to recover the individual signals $\boldsymbol{g}_i$ and the blurring functions 
$\boldsymbol{f}_i$
from just one received signal $\boldsymbol{y}$? This challenging problem, which intertwines blind deconvolution with blind demixing, appears in a variety of applications, such as audio processing, image
processing, neuroscience, spectroscopy, and astronomy. It is also expected to play a central role in
connection with the future Internet-of-Things. We will prove that under reasonable and practical assumptions, it is possible
to solve this otherwise highly ill-posed problem and recover the $r$ transmitted functions $\boldsymbol{g}_i$  and the impulse responses $\boldsymbol{f}_i$ in a robust, reliable, and efficient
manner from just one single received function $\by$ by solving a semidefinite program. We derive explicit bounds on the number
of measurements needed for successful recovery and prove that our method is robust in the presence of noise. Our theory is actually sub-optimal, since numerical experiments demonstrate that, quite remarkably, recovery is still possible if the number of measurements is close to the number of degrees of freedom.
\end{abstract}

\begin{IEEEkeywords}
blind deconvolution, demixing, semidefinite programming, nuclear norm minimization, channel estimation, low-rank matrix.
\end{IEEEkeywords}

\IEEEpeerreviewmaketitle

\section{Introduction\label{s:intro}}

Suppose we are given $r$ sensors and each one sends a function $\bg_i$ (e.g.\ a signal or an image)  to a receiver common to all $r$ sensors. 
During transmission each  $\bg_i$  gets convolved with a function $\bff_i$ where $\{\bff_i\}_{i=1}^r$ may all differ from each other.
The receiver records the function $\by$, given by the sum of all these convolved signals. More precisely,
\begin{equation}
\by = \sum_{i=1}^{r}  \bff_i \ast \bg_i + \bw,
\label{convmodel}
\end{equation}
where $\bw$ is additive noise. Assume that the receiver  knows neither $\{\bff_i\}_{i=1}^r$ nor $\{\bg_i\}_{i=1}^r$. 
When and under which conditions is it possible to recover all the individual signals $\bff_i$ and  $\bg_i$ from just one received signal $\by$? 

Blind deconvolution (when $r = 1$) by itself is already a hard problem to solve. Here we deal with the even more difficult situation -- \emph{a mixture of blind deconvolution problems}. 
Thus we need to correctly blindly deconvolve and demix at the same time.
This challenging problem appears in a variety of applications, such as audio processing~\cite{LXQZ09}, image processing~\cite{SSZ05,SZY11}, 
neuroscience~\cite{SGP15}, spectroscopy~\cite{Toumi14}, astronomy~\cite{Cardoso2002}. It also arises in wireless communications\footnote{In  wireless communications this is also known as ``multiuser joint 
channel estimation and equalization.''}~\cite{WP98} and is expected to play a central role in 
connection with the future Internet-of-Things~\cite{WBSJ14}.
Common to almost all approaches to tackle this problem is the assumption that
we have multiple received signals at our disposal, often at least as many received signals as there are transmitted signals. Indeed, many of the 
existing methods fail if the assumption of multiple received signals is not fulfilled.

In this paper, we consider the rather difficult case, where only one received signal is given, as  shown in~\eqref{convmodel}.
Of course, without further assumptions, this problem is highly underdetermined and not solvable.
We will prove that under reasonable and practical conditions, it is indeed
possible to recover the $r$ transmitted signals and the associated channels in a robust, reliable, and efficient manner from just one single received signal.
Our theory has important implications for applications, such as the Internet-of-Things, since it paves the way for an efficient multi-sensor communication strategy
with minimal signaling overhead.

\vskip0.25cm
To provide a glimpse of the kind of results we will prove, let us assume that each of the $\bg_i \in \RR^L$ lies in a known subspace of dimension $N$ ($L\geq N$), i.e., there exist matrices $\BA_i$ of size $L \times N$ such that $\bg_i = \BA_i \bx_i$. In addition the matrices $\BA_i$ need to satisfy a certain ``local'' mutual incoherence condition 
described in detail in~\eqref{def:mu}. This condition can be satisfied if the $\BA_i$ are e.g.\ Gaussian random matrices.
We will prove a formal and slightly more general version (see Theorem~\ref{thm:main} and Theorem~\ref{thm:noise}) of the following informal theorem.
For simplicity for the moment we consider a noiseless scenario, that is, $\bw=0$. Below and throughout the paper ``$\ast$" denotes circular convolution.

\begin{theorem}{\bf [Informal version]}\label{thm:informal}
Let $\bx_i \in \RR^N$ and let the $\BA_i$ be $L \times N$ i.i.d.\ Gaussian random matrices. Furthermore, assume that the impulse responses $\bff_i \in \CC^L$ have {\em maximum delay spread} $K$, i.e., for each $\bff_i$ there holds $\bff_i(k) = 0$ if $k > K$.  Let  $\mu^2_h$ be a certain  ``incoherence parameter'' related to the measurement matrices, 
defined in~\eqref{def:muh}.
Suppose we are given
\begin{equation}
\by = \sum_{i=1}^{r}    \bff_i \ast (\BA_i \bx_i).
\label{convmodel1}
\end{equation}
Then, as long as the number of measurements $L$ satisfies
\begin{equation*}
L \gtrsim C r^2 \max\{  K,\mu^2_h N\},
\end{equation*}
(where $C$ is a numerical constant), all $\bx_i$ (and thus $\bg_i = \BA_i \bx_i$) as well as all $\bff_i$ can be recovered from $\by$ with high probability by solving a semidefinite program.
\end{theorem}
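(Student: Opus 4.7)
The plan is to lift the bilinear inverse problem into a linear one over matrices and then solve a convex nuclear-norm relaxation. Taking the DFT in each summand, circular convolution becomes componentwise multiplication, so $\widehat{\by}[\ell] = \sum_{i=1}^r \widehat{\bff_i}[\ell]\cdot\widehat{\BA_i\bx_i}[\ell]$. The delay-spread assumption $\bff_i(k)=0$ for $k>K$ lets one write $\widehat{\bff_i} = \BF \BB \bh_i$ with $\bh_i \in \CC^K$ and $\BB$ an $L\times K$ zero-padding matrix, so each Fourier sample becomes a linear functional of the rank-one matrix $\BX_i^0 := \bh_i \bx_i^*$. This produces a linear operator $\A$ with $\by = \A(\BX_1^0,\ldots,\BX_r^0)$, and the convex program to analyze is
\begin{equation*}
\min_{\{\BX_i\}} \; \sum_{i=1}^{r}\|\BX_i\|_{*} \quad \text{subject to} \quad \A(\BX_1,\ldots,\BX_r) = \by.
\end{equation*}
The goal is to show that $\{\BX_i^0\}$ is the unique minimizer with high probability whenever $L \gtrsim r^2 \max\{K,\mu_h^2 N\}$ up to logarithmic factors.

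I would then follow the standard dual-certificate strategy for nuclear-norm recovery. Let $T_i$ be the tangent space at $\BX_i^0$ to the variety of rank-one $K\times N$ matrices and set $T := T_1 \oplus \cdots \oplus T_r$. Two ingredients are needed:
\begin{enumerate}
\item \textbf{Local isometry:} $\|\PP_T \A^*\A \PP_T - \PP_T\| \le 1/4$, which yields injectivity of $\A$ on $T$;
\item \textbf{Approximate dual certificate:} there is $\bLambda \in \mathrm{Ran}(\A^*)$ whose projection onto $T$ is close in Frobenius norm to the sign matrix $\sum_i \bh_i \bx_i^{*}/(\|\bh_i\|\|\bx_i\|)$ and whose projection onto $T^{\perp}$ is, block by block, of operator norm strictly less than one.
\end{enumerate}
For (1), I would apply matrix Bernstein to $\PP_T\A^*\A\PP_T$ written as a sum of $L$ independent rank-one operators coming from the rows of the $\BA_i$ and the DFT. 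For (2), I would use Gross's golfing scheme, partitioning the $L$ Fourier measurements into $O(\log L)$ independent batches and refining $\bLambda$ iteratively so that the residual on $T$ contracts geometrically, with each step controlled by sub-exponential matrix Bernstein bounds and the incoherence parameter $\mu_h^2$.

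The main obstacle, and the source of the $r^2$ scaling, lies in the off-diagonal cross terms between different users. When one expands $\PP_T \A^*\A \PP_T$ block by block, the diagonal pieces $\PP_{T_i}\A_i^*\A_i\PP_{T_i}$ reduce to the single-user blind-deconvolution operator of Ahmed--Recht--Romberg and require only $L \gtrsim \max\{K,\mu_h^2 N\}$ for concentration. The cross pieces $\PP_{T_i}\A_i^*\A_j\PP_{T_j}$ with $i\ne j$ are sums of independent mean-zero matrices built from products $\bh_i\bh_j^*\,\langle \BA_{i,\ell}\bx_i,\BA_{j,\ell}\bx_j\rangle$; controlling all $r(r-1)$ such pairs uniformly via matrix Bernstein and a union bound costs one factor of $r$, while a second factor of $r$ enters through the size of the signed ground-truth sum $\sum_i \bh_i\bx_i^*/(\|\bh_i\|\|\bx_i\|)$ that the dual certificate must match on $T$, which forces correspondingly stronger concentration in each golfing step. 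Together these yield the claimed $L \gtrsim r^2\max\{K,\mu_h^2 N\}$. The noise-robust version is then obtained by the usual recipe: relax the equality to $\|\A(\BX)-\by\|\le\|\bw\|$ and combine the local isometry with the certificate to produce an error bound $\sum_i \|\BX_i-\BX_i^0\|_F \lesssim \|\bw\|$.
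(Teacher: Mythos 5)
Your overall architecture is the same as the paper's: lift to $\BX_i=\bh_i\bx_i^*$, minimize $\sum_i\|\BX_i\|_*$, verify a block local isometry on the diagonal, control the cross blocks $\PP_{T_i}\A_i^*\A_j\PP_{T_j}$, and build the certificate by golfing over a partition of the $L$ measurements. The one place where your sketch goes wrong is the accounting of the $r^2$ factor, and it is not a cosmetic issue. You attribute one factor of $r$ to the union bound over the $r(r-1)$ cross pairs and a second to the size of the ground-truth sum the certificate must match. Neither is correct: a union bound over polynomially many events costs only a logarithmic factor (it is absorbed into the constant $C_{\alpha+2\log r}$ in the paper), and the certificate only needs per-block accuracy $\|\bh_i\bx_i^*-\PP_{T_i}(\A_i^*(\blambda))\|_F\le(5r\gamma)^{-1}$, which the golfing scheme reaches after $P\gtrsim\log_2(5r\gamma)$ iterations, i.e.\ at the cost of an extra $\log r$, not an extra $r$. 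The genuine source of $r^2$ is that each individual cross term must be driven down to $\|\PP_{T_i}\A_i^*\A_j\PP_{T_j}\|\le\frac{1}{4r}$, because the lower bound $\bigl\|\sum_i\A_i(\BH_{i,T_i})\bigr\|_F^2\gtrsim\sum_i\|\BH_{i,T_i}\|_F^2$ hinges on the $r\times r$ comparison matrix with diagonal $\frac34$ and off-diagonals $-\mu$ staying positive definite, which forces $\mu\lesssim 1/r$; since the Bernstein deviation of each cross block scales like $\sqrt{\max\{K,\mu_h^2N\}/L}$ (up to logs), pushing it below $1/(4r)$ is what requires $L\gtrsim r^2\max\{K,\mu_h^2N\}$.

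This matters for executing the proof: under your stated plan — a constant bound per cross pair plus a union bound — the block operator $\PP_T\A^*\A\PP_T$ need not be bounded below on $T$ once $r\ge 4$ (the comparison matrix $\frac34\I_r-\frac14(\bone\bone^*-\I_r)$ has a negative eigenvalue), so the null-space inequality that drives both the exact-recovery argument and the stability bound would fail. Two smaller omissions worth noting: the golfing scheme requires a \emph{common} partition $\{\Gamma_p\}$ of $\{1,\dots,L\}$ that works simultaneously for all $r$ users (automatic here because all $\BB_i$ are low-frequency DFT blocks, but it has to be said), and the paper's $\mu_h^2$ is defined with the correction matrices $\BS_{i,p}=\BT_{i,p}^{-1}$ precisely so that $\E(\A_{i,p}^*\A_{i,p}\BS_{i,p})$ is the identity on each batch; a golfing step built on raw batches would be biased whenever $\BT_{i,p}\neq\frac{Q}{L}\I_{K_i}$.
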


Recovering $\{\bff_i\}_{i=1}^r$ and $\{\bx_i\}_{i=1}^r$ is only possible up to a constant, since we can always multiply each
$\bx_i$ with $c_i\neq 0$ and each $\bff_i$ with $1/c_i$ and still get the same result. Hence, here and throughout the paper, recovery
of the vectors $\bff_i$ and $\bx_i$ always means recovery modulo constants $c_i$.

\bigskip
We point out that the emphasis of this paper is on developing a theoretical and algorithmic framework for joint blind deconvolution and blind demixing. A detailed discussion of applications
is beyond the scope of this paper. There are several aspects, such as time synchronization, that do play a role in some applications and need further attention. We postpone
such details to a forthcoming paper, in which we plan to elaborate on the proposed framework in connection with specific applications.

\subsection{Related work}

Problems of the type~\eqref{convmodel} or~\eqref{convmodel1} are ubiquitous in many applied scientific disciplines and in applications,  
see e.g~\cite{gesbert1999blind,WP98,LXQZ09,SSZ05,shin2007blind,li2001direct,SZY11,SGP15,Toumi14,Cardoso2002,WBSJ14}. 
Thus, there is a large body of works to solve different versions of these problems. Most of the existing works however require the availability of
multiple received signals $\by_1,\dots,\by_m$. And indeed, it is not hard to imagine that for instance an SVD-based
approach will succeed if $m \ge r$ (and must fail if $m=1$).  A sparsity-based approach can be found in~\cite{sudhakar2010double}.
However, in this paper we are interested in the case where we have only one single received signal $\by$ --
a single snapshot, in the jargon of array processing. Hence, there is little overlap between these methods heavily relying on multiple
snapshots (many of which do not come with any theory) and the work presented here.

The setup in~\eqref{convmodel1} is reminiscent of a single-antenna multi-user spread spectrum communication scenario~\cite{Ver98}. 
There, the matrix $\BA_i$ represents the spreading matrix assigned to the $i$-th user and $\bff_i$  models the associated multipath channel. 
There are numerous papers on blind channel estimation in connection with CDMA, including the previously cited articles~\cite{gesbert1999blind,WP98,li2001direct}.
Our work differs from the existing literature on this topic in several ways: As mentioned before, we do not require that we have multiple received signals, we allow all 
multipath channels $\bff_i$ to differ from each other, and do not impose a particular channel model. Moreover, 
we provide a rigorous mathematical theory, instead of just empirical observations.

The special case $r=1$ (one unknown signal and one unknown convolving function) reduces~\eqref{convmodel} to the standard blind deconvolution problem, which has been heavily studied in the literature, cf.~\cite{chaudhuri2014blind} and the references therein. Many of the techniques 
for ``ordinary'' blind deconvolution do not extend (at least not in any obvious manner) to the case $r>1$. Hence, there is essentially no overlap
with this work -- with one notable exception. The pioneering paper~\cite{RR12} has definitely inspired our work and also informed many of the proof techniques used in this paper. Hence, our paper can and should be seen as an extension of the ``single-user'' ($r=1$) 
results in~\cite{RR12}  to the multi-user setting ($r>1$). However, it will not come as a big surprise to the reader familiar with~\cite{RR12}, that there is no simple way to extend the results in~\cite{RR12} to the multi-user setting unless  we assume that we have multiple received signals $\by_1,\dots,\by_m$. Indeed, as may be obvious from the length of the proofs in our paper, there are substantial differences in the theoretical derivations between this manuscript and~\cite{RR12}. In particular, the sufficient condition for exact recovery in this paper is more complicated since $r$ ($r >1$) users are considered and the ``incoherence" between users needs to be introduced properly. Moreover,  the construction of approximate dual certificate is nontrivial as well (See Section~\ref{s:dual}) in the ``multi-user" scenario.

The paper~\cite{ACD15} considers the following generalization of~\cite{RR12}\footnote{Since the main result in~\cite{ACD15} relies on Lemma~4 of~\cite{RR12}, the issues raised in Remark~\ref{remark_lemma} apply
to~\cite{ACD15}  as well.}. 
Assume that we are given signals 
$\by_i = \bff \ast \bg_i, i=1,\dots, r$, the goal is to recover the $\bff$ and $\bg_i$ from $\by_1,\dots,\by_r$. This setting is somewhat in the spirit of~\eqref{convmodel}, but
it is significantly less challenging, since (i) it assumes the same convolution function $\bff$ for each signal $\bg_i$ and (ii) there are as many output signals $\by_i$  as we have input signals $\bg_i$.

Non-blind versions of~\eqref{convmodel} or~\eqref{convmodel1} can be found for instance in~\cite{WGMM13,mccoy2014sharp,mccoy2014convexity,ALMT14}.
In the very interesting paper~\cite{WGMM13}, the authors analyze various problems of decomposing a given observation into multiple incoherent components, which can be expressed as
\begin{equation}
\label{wright}
\text{minimize}  \,\,\, \sum_i \lambda_i \|\BZ_i \|_{(i)}   \qquad \text{subject to} \quad \sum_i \BZ_i = \BM.
\end{equation}
Here $\| \cdot \|_{(i)}$ are (decomposable) norms that encourage various types of low-complexity structure. However, as mentioned before, there is no ``blind'' component in the problems analyzed in~\cite{WGMM13}. Moreover, while~\eqref{wright} is formally somewhat similar to the semidefinite program that we derive to solve the blind deconvolution-blind demixing problem (see~\eqref{cvxprog}), the dissimilarity of the right-hand sides in~\eqref{wright} 
and~\eqref{cvxprog} makes all the differences when theoretically analyzing these two problems.

The current manuscript can as well be seen as an extension of our work on self-calibration~\cite{LS15} to the multi-sensor  case.
In this context, we also refer to related (single-input-single-output) analysis in~\cite{LLB16ID,Chi16}.

\subsection{Organization of this manuscript}

In  Section~\ref{s:prelim} we describe in detail the setup and the problem we are solving. We also introduce some notations and key concepts used throughout the manuscript.
The main results for the noiseless as well as the noisy case are stated in Section~\ref{s:maintheorem}. 
Numerical experiments can be found in Section~\ref{s:numerics}. 
Section~\ref{s:proofs} is devoted to the proofs of these results.
We conclude in Section~\ref{s:conclusion} and present some auxiliary results in the Appendix.

\section{Preliminaries and Basic Setup\label{s:prelim}}

\subsection{Notation}
Before moving to the basic model, we introduce notation which will be used throughout the paper. Matrices and vectors are denoted in boldface such as $\BZ$ and $\bz$. The individual entries of a matrix or a vector are denoted in normal font such as $Z_{ij}$ or $z_i.$
For any matrix $\BZ$, $\|\BZ\|_*$ denotes nuclear norm, i.e., the sum of its singular values; $\|\BZ\|$ denotes operator norm, i.e., its largest singular value, and $\|\BZ\|_F$ denotes the Frobenius norm, i.e., $\|\BZ\|_F =\sqrt{\sum_{ij} |Z_{ij}|^2 }$. For any vector $\bz$, $\|\bz\|$ denotes its Euclidean norm. For both matrices and vectors, $\BZ^T$ and $\bz^T$ stand for the transpose of $\BZ$ and $\bz$ respectively while $\BZ^*$ and $\bz^*$ denote their complex conjugate transpose.  $\bar{z}$ and $\bar{\bz}$  denote the complex conjugate of $z$ and $\bz$ respectively. We equip the matrix space $\CC^{K\times N}$ with the inner product defined as $\lag \BU, \BV\rag : = \Tr(\BU\BV^*).$ A special case is the inner product of two vectors, i.e., $\lag \bu, \bv\rag = \Tr(\bu\bv^*) = \bv^*\bu = (\bu^*\bv)^*.$ The identity matrix of size $n$ is denoted by $\BI_n$. For a given vector $\bv$, $\diag(\bv)$ represents the diagonal matrix whose diagonal entries are given by the vector $\bv$. 

Throughout the paper, $C$ stands for a constant and $C_{\alpha}$ is a constant which depends linearly on $\alpha$ (and on no other numbers). For the two linear subspaces $T_i$ and $\TB_i$ defined in~\eqref{cond:proj} and~\eqref{cond:projc}, we denote the projection of $\BZ$ on $T_i$ and $\TB_i$ as $\BZ_{T_i} : = \PP_{T_i}(\BZ)$ and $\BZ_{\TB_i} : = \PP_{\TB_i}(\BZ)$ respectively. $\PP_{T_i}$ and $\PP_{\TB_i}$ are the corresponding projection operators onto $T_i$ and $\TB_i.$ 
\subsection{The basic model}

We develop our theory for  a more general model than the blind deconvolution/blind demixing model discussed in Section~\ref{s:intro}.
Our framework also covers certain self-calibration scenarios~\cite{LS15} involving multiple sensors. We consider the following setup\footnote{ In~\eqref{eq:model} we assume
a {\em common clock} among the different sources. For sources whose distance to the receiver differs greatly, his assumption would require additional synchronization.
A detailed discussion of this timing aspect is beyond the scope of this paper, as it is application dependent.}
\begin{equation}\label{eq:model}
\by = \sum_{i=1}^r\diag(\BB_i\bh_i) \BA_i \bx_i,
\end{equation}
where $\by \in \CC^L$, $\BB_i \in \CC^{L\times K_i}$, $\BA_i \in \RR^{L\times N_i}$, $\bh_i \in \RR^{K_i}$ and $\bx_i \in \RR^{N_i}$.
We assume that all the matrices $\BB_i$ and  $\BA_i$ are given, but none of the  $\bx_i$ and  $\bh_i$ are known.
Note that all $\bh_i$ and $\bx_i$ can be of different lengths. We point out that the total number of measurements is given by the
length of $\by$, i.e., by $L$. Moreover, we let $K : = \max K_i$ and $N: = \max  N_i$ throughout our presentation. 

This model includes the blind deconvolution-blind demixing problem~\eqref{convmodel} as a special case, as we will explain in Section~\ref{s:maintheorem}.
But it also includes other cases as well. Consider for instance a linear system $\by = \sum_{i=1}^r \BA_i(\btheta_i) \bx_i$, where the measurement matrices
$\BA_i$ are not fully known due to lack of calibration~\cite{FW91,BPGD13,LS15} and $\btheta_i$ represents the unknown calibration parameters associated with
$\BA_i$. An important special situation that arises e.g.\ in array calibration~\cite{FW91}
is the case where we only know the direction of the rows of  $\BA_i$.  In other words, the norms of each of the rows of $\BA_i$ are unknown.
 If in addition each of the $\btheta_i$ belongs to a known subspace represented by
$\BB_i$, i.e., $\btheta_i = \BB_i \bh_i$, then we can write  such an $\BA_i(\btheta_i)$ as $\BA_i(\btheta_i) = \diag(\BB_i \bh_i) \BA_i$.

\bigskip

Let $\bb_{i,l}$ denote the $l$-th column of $\BB_i^*$ and $\ba_{i,l}$ the $l$-th column of $\BA_{i}^*$. A simple application of linear algebra gives
\begin{equation}\label{def:A}
y_l= \sum_{i=1}^r (\BB_i \bh_i )_l \bx_i^*\ba_{i,l} = \sum_{i=1}^r \bb_{i,l}^*\bh_i \bx_i^*\ba_{i,l}
\end{equation}
where $y_l$ is the $l$-th entry of $\by.$ One may find an obvious difficulty of this problem as the nonlinear relation between the measurement vectors $(\bb_{i,l}, \ba_{i,l})$ and the unknowns $(\bh_i, \bx_i).$
Proceeding with the meanwhile well-established lifting trick~\cite{CSV11}, we let $\BX_i := \bh_i \bx_i^* \in\RR^{K_i\times N_i}$ and define 
 the {\em linear} mapping $\A_i: \CC^{K_i\times N_i}\rightarrow \CC^L$ for $i=1,\dots, r$ by
\begin{equation*}
 \A_i(\BZ) := \{ \bb_{i,l}^* \BZ \ba_{i,l} \}_{l=1}^L.
\end{equation*}
Note that the adjoint operator of $\A_i$ is 
\begin{equation}\label{def:A}
\A_i^*: \CC^L \rightarrow \CC^{K_i\times N_i}, \qquad  \A_i^*(\bz) = \sum_{l=1}^L z_l \bb_{i,l}\ba_{i,l}^*
\end{equation}
since $\CC^{K_i\times N_i}$ is equipped with the inner product $\lag \BU, \BV\rag = \Tr(\BU\BV^*)$ for any $\BU$ and $\BV\in\CC^{K_i\times N_i}$. $\A_i^*(\bz)$ can be also written into simple matrix form, i.e., $\A_i^*(\bz) = \BB_i^*\diag(\bz)\BA_i$, which is easily verified by definition. 
Thus we have lifted the {\em non-linear vector-valued} equations~\eqref{eq:model} to {\em linear matrix-valued} equations given by
\begin{equation}\label{lineq}
\by = \sum_{i=1}^r\A_i(\BX_i).
\end{equation}
Alas, the set of linear equations~\eqref{lineq}  will be highly underdetermined, unless we make the number of measurements $L$ very large, which may not be desirable or feasible in practice. Moreover, finding such $r$ rank-1 matrices  satisfying~\eqref{lineq} is generally an NP-hard problem~\cite{RechtSIAM,Fazel02}. Hence, to combat this underdeterminedness, we attempt to recover $(\bh_i,\bx_i)_{i=1}^{r}$ 
by solving the following nuclear norm minimization problem,
\begin{equation}\label{cvxprog}
\min \,\, \sum_{i=1}^r\|\BZ_i\|_* \quad \subjectto \quad \sum_{i=1}^r\A_i(\BZ_i) = \by.
\end{equation}
If the solutions (or the minimizers to~\eqref{cvxprog}) $\hBX_1,\dots,\hBX_r$ are all rank-one, we can easily extract  $\bh_i$ and $\bx_i$ from $\hBX_i$ via a simple matrix factorization.
In case of noisy data, the $\hBX_i$ will not be exactly rank-one, in which case we set $\bh_i$ and $\bx_i$  to be the left and right singular vector
respectively, associated with the largest singular value of $\hBX_i$.
Naturally, the question arises if and when the solution to~\eqref{cvxprog} coincides with the true solution $(\bh_i,\bx_i)_{i=1}^{r}$.
It is the main purpose of this paper to shed light on this question.

\subsection{Incoherence conditions on the matrices $\BB_i$ \label{ss:incoherence}}

Analogous to matrix completion, where one needs to impose certain incoherence conditions on the singular vectors (see e.g.~\cite{CR08}), 
we introduce two quantities that describe a notion of incoherence of the matrices $\BB_i$. 
We require $\BB_i^*\BB_i = \I_{K_i}$  and define
\begin{align}\label{def:mumax}
\begin{split}
\mum^2 & := \max_{1\leq l\leq L, 1\leq i\leq r} \frac{L}{K_i} \|\bb_{i,l}\|^2, \\ \mumin^2 & := \min_{1\leq l\leq L, 1\leq i\leq r} \frac{L}{K_i} \|\bb_{i,l}\|^2.
\end{split}
\end{align}
$\BB_i^*\BB_i = \I_{K_i}$ implies that
$1\leq \mum^2 \leq \frac{L}{K_i}$ and $0\leq \mumin^2\leq 1$. In particular,  if each $\BB_i$ is a partial DFT matrix then $\mum^2 = \mumin^2 = 1$. The quantity
$\mumin^2$ will be useful to establish Theorem~\ref{thm:noise}, while the main purpose of  introducing $\mu^2_{\max}$ is to quantify 
a ``joint incoherence pattern'' on all $\BB_i$. Namely, there is a {\em common} partition $\{\Gamma_p\}_{p=1}^P$
of the index set $\{1,\cdots, L\}$ with $|\Gamma_p| = Q$ and $L = PQ$ such that for each pair of $(i,p)$ with $1\leq i\leq r$ and $1\leq p\leq P$, we
have

\begin{equation}\label{cond:iso}
\max_{1\leq i\leq r, 1\leq p\leq P}\left\|\BT_{i,p} - \frac{Q}{L}\I_{K_i}\right\| \leq \frac{Q}{4L},\quad \BT_{i,p} : = \sum_{l\in\Gamma_p} \bb_{i,l}\bb_{i,l}^*,
\end{equation}
which says that each $\BT_{i,p}$ does not deviate   too much from $\I_{K_i}$. The  key question here is whether  such a {\em common} partition exists. It is hard to answer it in general. To the best of our knowledge, it is known that for each $\BB_i$, there exists a partition $\{\Gamma_{i,p}\}_{p=1}^P$ (where $\Gamma_{i,p}$ depends on $i$) such that 
\begin{equation*}
\max_{1\leq p\leq P}\left\| \sum_{l\in\Gamma_{i,p}} \bb_{i,l}\bb_{i,l}^*- \frac{Q}{L}\I_{K_i}\right\| \leq \frac{Q}{4L}, \quad \forall 1\leq i\leq r,
\end{equation*}
if $Q \geq C\mu^2_{\max}K_i \log L$ where this argument is shown to be true in~\cite{RR12} by Theorem 1.2 in~\cite{candes07sparsity}.
Based on this observation, at least we have following several special cases which satisfy~\eqref{cond:iso} for a common partition $\{\Gamma_p\}_{p=1}^P$.
\begin{enumerate}
\item All $\BB_i$ are the same. Then the common partition $\{\Gamma_p\}_{p=1}^P$ can be chosen the same as $\{\Gamma_{i,p}\}_{p=1}^P$ for any particular $i.$
\item If each $\BB_i, i\neq j$ is a submatrix of $\BB_j$, then we simply let $\Gamma_p = \Gamma_{j,p}$ such that~\eqref{cond:iso} holds.

\item If all $\BB_i$ are ``low-frequency'' DFT matrices, i.e., the first $K_i$ columns of an $L\times L$ DFT matrix with $\BB_i^*\BB_i = \I_{K_i}$, an {\em explicit} partition of $\Gamma_{p}$ can be constructed such that 
\begin{equation}\label{eq:subfourier}
\BT_{i,p} = \sum_{l\in\Gamma_p}\bb_{i,l}\bb_{i,l}^* = \frac{Q}{L}\I_{K_i}.
\end{equation} 
For example, suppose $L =PQ$ and $Q\geq K_{i}$, we can achieve $\BT_{i,p} = \frac{Q}{L}\I_{K_i}$ and $|\Gamma_p| = Q$ by letting $\Gamma_p = \{p , P +  p, \cdots, (Q - 1)P + p\}$.  A short proof will be provided in Section~\ref{sub:fourier}. 

\end{enumerate}
\begin{remark}
The existence of the common partition $\{\Gamma_p\}_{p=1}^P$ satisfying~\eqref{cond:iso} is an extremely important ingredient of constructing the dual certificate by golfing scheme. 

\end{remark}

Some direct implications of~\eqref{cond:iso}
are
\begin{equation}\label{cond:ST}
\|\BT_{i,p}\| \leq \frac{5Q}{4L}, \quad \|\BS_{i,p}\| \leq \frac{4L}{3Q}, \quad \forall 1\leq i\leq r, 1\leq p\leq P.
\end{equation}
where $\BS_{i,p} := \BT_{i,p}^{-1}$.
Now let us introduce the second incoherence quantity, which is also crucial in the proof of Theorem~\ref{thm:main},
\begin{align}\label{def:muh}
\begin{split}
\mu^2_h & := \max\Big\{ \frac{Q^2}{L} \max_{l\in \Gamma_p, 1\leq p\leq P,1\leq i\leq r} \frac{|\lag \BS_{i,p}\bh_i,
\bb_{i,l}\rag|^2}{\|\bh_i\|^2},  L \max_{1\leq l\leq L, 1\leq i\leq r}   \frac{|\lag \bh_i, \bb_{i,l}\rag|^2}{\|\bh_i\|^2} \Big\}.
\end{split}
\end{align}
The range of $\mu^2_h$ is given in Proposition~\ref{prop:muh}.
\begin{remark}\label{remark_lemma}
The attentive reader may have noticed that the definition of $\mu_h^2$ is a bit more intricate than the one in~\cite{RR12}, where  $\mu_h^2$ only depends on $
|\lag \bh_i, \bb_{i,l}\rag|^2$.
The reason is that we need to establish a result similar to Lemma~4 in~\cite{RR12}, but the proof of Lemma 4
as stated is not entirely accurate,
and a fairly simple way to fix this issue is to slightly modify the definition of $\mu_h^2$. 
Another easy way to fix the issue is to consider all $\BB_i$ as low-frequency Fourier matrices.  If so,  $\muh^2$ in~\eqref{def:muh} reduces to a simpler form of $\mu^2_h$, i.e., $\mu^2_h = L\max_{1\leq l\leq L}  |\lag \bb_l, \bh\rag|^2/\|\bh\|^2$ in~\cite{RR12} because the explicit partition of low-frequency DFT matrices allows $\BT_{i,p} = \frac{Q}{L}\I_{K_i}$ and $\BS_{i,p} = \frac{L}{Q}\I_{K_i}$.
\end{remark}

Both $\mu^2_{\max}$ and $\mu^2_h$ measure the incoherence of $\BB_i$ and the latter one, depending $\bh_i$, also describes the interplay between $\bh_i$ and $\BB_i$. To sum up,  for all  $1\leq l\leq L$ and $1 \leq i \leq  r,$
\begin{eqnarray}
\begin{split}\label{cond:i}
& \|\bb_{i,l}\|^2 \leq \frac{\mum^2K_i}{L}, \quad |\lag \bh_i, \bb_{i,l}\rag|^2 \leq \frac{\mu^2_h}{L}\|\bh_i\|^2, \\
& |\lag \BS_{i,p}\bh_i, \bb_{i,l}\rag|^2 \leq \frac{L\mu^2_h}{Q^2}\|\bh_i\|^2.
\end{split}
\end{eqnarray}
\begin{proposition}{\bf [Range of $\mu^2_h$]}\label{prop:muh}
Under the condition of~\eqref{cond:iso} and~\eqref{cond:ST},
\begin{equation*}
1 \leq  \mu^2_h \leq \frac{16}{9}\mum^2K_i, \quad \forall 1\leq i\leq r.
\end{equation*}
\end{proposition}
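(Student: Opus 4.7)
The statement has two parts, and both follow from elementary arguments once the right identities are in place. The key facts I will use are: (i) $\BB_i^*\BB_i = \I_{K_i}$, which, since the $\bb_{i,l}$ are the columns of $\BB_i^*$, is equivalent to the resolution of identity $\sum_{l=1}^L \bb_{i,l}\bb_{i,l}^* = \I_{K_i}$; (ii) the operator-norm bound $\|\BS_{i,p}\| \leq \frac{4L}{3Q}$ from~\eqref{cond:ST}; and (iii) the column-norm bound $\|\bb_{i,l}\|^2 \leq \mum^2 K_i / L$ from the definition of $\mum^2$.

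For the lower bound $\mu_h^2 \geq 1$, I would focus only on the second quantity in the max defining $\mu_h^2$. Using the resolution of identity (i), I average $|\langle \bh_i, \bb_{i,l}\rangle|^2$ over $l$:
\begin{equation*}
\sum_{l=1}^L |\langle \bh_i,\bb_{i,l}\rangle|^2 \;=\; \bh_i^*\Bigl(\sum_{l=1}^L \bb_{i,l}\bb_{i,l}^*\Bigr)\bh_i \;=\; \bh_i^*\bh_i \;=\; \|\bh_i\|^2.
\end{equation*}
A pigeonhole argument then gives $\max_l |\langle \bh_i,\bb_{i,l}\rangle|^2 \geq \|\bh_i\|^2/L$, so $L\max_l |\langle \bh_i,\bb_{i,l}\rangle|^2/\|\bh_i\|^2 \geq 1$, which forces $\mu_h^2 \geq 1$.

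For the upper bound $\mu_h^2 \leq \tfrac{16}{9}\mum^2 K_i$, I bound the two quantities in the max separately using Cauchy--Schwarz. First,
\begin{equation*}
L\cdot\frac{|\langle \bh_i,\bb_{i,l}\rangle|^2}{\|\bh_i\|^2} \;\leq\; L\|\bb_{i,l}\|^2 \;\leq\; \mum^2 K_i.
\end{equation*}
Second, applying Cauchy--Schwarz together with (ii) and (iii),
\begin{equation*}
\frac{Q^2}{L}\cdot\frac{|\langle \BS_{i,p}\bh_i,\bb_{i,l}\rangle|^2}{\|\bh_i\|^2} \;\leq\; \frac{Q^2}{L}\,\|\BS_{i,p}\|^2\,\|\bb_{i,l}\|^2 \;\leq\; \frac{Q^2}{L}\cdot\frac{16L^2}{9Q^2}\cdot\frac{\mum^2 K_i}{L} \;=\; \frac{16}{9}\mum^2 K_i.
\end{equation*}
Taking the max of these two estimates yields the claimed upper bound.

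There is essentially no serious obstacle here; this is a routine verification. The only subtle point is remembering that $\bb_{i,l}$ is the $l$-th column of $\BB_i^*$ (not of $\BB_i$), so that $\BB_i^*\BB_i = \I_{K_i}$ rewrites as $\sum_l \bb_{i,l}\bb_{i,l}^* = \I_{K_i}$ — this identity is what makes the lower bound work and also underlies the operator-norm estimate $\|\BS_{i,p}\| \leq 4L/(3Q)$ used for the upper bound.
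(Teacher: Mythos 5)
Your proposal is correct and follows essentially the same route as the paper: Cauchy--Schwarz with $\|\BS_{i,p}\|\leq 4L/(3Q)$ and $\|\bb_{i,l}\|^2\leq \mum^2K_i/L$ for the upper bound, and the resolution of identity $\sum_{l}\bb_{i,l}\bb_{i,l}^*=\I_{K_i}$ plus averaging for the lower bound. The only cosmetic difference is that the paper also evaluates the first term of the max in the lower-bound step (obtaining $\frac{Q}{L}\bh_i^*\BS_{i,p}\bh_i$, which lies in $[1,4/3]$), whereas you observe, correctly, that the second term alone already forces $\mu_h^2\geq 1$.
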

\begin{proof}
We start with~\eqref{def:muh} and~\eqref{cond:i} to find the lower bound of $\mu_h^2$ first. Without loss of generality, all $\bh_i$ are of unit norm. The definition of $\mu_h^2$ and $|\Gamma_p| = Q$ immediately imply that
\begin{eqnarray*}
\mu^2_h & \geq &\max_{i,p}\left\{  \frac{Q}{L} \sum_{l\in\Gamma_p} |\lag \BS_{i,p}\bh_i, \bb_{i,l}\rag |^2,
\sum_{l=1}^L |\lag \bh_i, \bb_{i,l}\rag|^2 \right\} \\
& = & \max_{i,p}\left\{  \frac{Q}{L} \sum_{l\in\Gamma_p}\bh_i^*\BS_{i,p} \bb_{i,l}\bb_{i,l}^*\BS_{i,p}\bh_i,
\sum_{l=1}^L \bh_i^*\bb_{i,l}\bb_{i,l}^*\bh_i \right\} \\
& = & \max_{i,p}\left\{ \frac{Q}{L} \bh_i^*\BS_{i,p}\bh_i, 1\right\}.
\end{eqnarray*}
Note that 
\begin{equation*}
1\leq \max_{i,p}\left\{ \frac{Q}{L} \bh_i^*\BS_{i,p}\bh_i, 1\right\} \leq \frac{4}{3},
\end{equation*}
which follows from $\|\BS_{i,p}\| \leq \frac{4L}{3Q}$ and thus we can conclude the lower bound of $\mu^2_h$ is between $1$ and $\frac{4}{3}.$

We proceed to derive the upper bound for $\mu^2_h.$ Applying Cauchy-Schwarz inequality to~\eqref{def:muh} gives
\begin{align*}
\mu^2_h 
& \leq \max_{p,i,l} \left\{ \frac{Q^2}{L} \|\BS_{i,p}\|^2 \|\bb_{i,l}\|^2, L\|\bb_{i,l}\|^2 \right\} \leq \frac{Q^2}{L} \frac{16L^2}{9Q^2} \cdot \frac{\mum^2K_i}{L} \leq \frac{16}{9} \mum^2K_i
\end{align*}
where $\|\BS_{i,p}\| \leq \frac{4L}{3Q}$ and $\|\bb_{i,l}\|^2 \leq \frac{\mum^2K_i}{L}.$
\end{proof}

\subsection{Is  the incoherence parameter $\mu_h^2$ necessary?}
This subsection is devoted to a further discussion of the role of $\mu_h^2.$ In order to provide a clearer explanation of the significance of $\mu_h^2$, we first reformulate the recovery of $\{\BX_i\}_{i=1}^r$ subject to~\eqref{lineq} as a rank-$r$ matrix recovery problem. Each entry of $\by$ is actually the inner product of two rank-$r$ block-diagonal matrices, i.e.,  
\begin{equation*}
y_l = 
\left\lag
\begin{bmatrix}
\bh_1\bx_1^* & \cdots & \bzero \\
\vdots  & \ddots & \vdots \\
\bzero & \cdots & \bh_r\bx_r^* \\
\end{bmatrix}
,
\begin{bmatrix}
\bb_{1,l}\ba_{1,l}^* & \cdots & \bzero \\
\vdots  & \ddots & \vdots \\
\bzero & \cdots & \bb_{r,l}\ba_{r,l}^* \\
\end{bmatrix}
\right\rag.
\end{equation*}
Recall that in matrix completion~\cite{CR08,Recht11MC},  the left and right singular vectors of the true matrix cannot be too aligned with those of the test matrix. A similar spirit applies to this problem as well, i.e., both
\begin{equation}
\label{mu_explained}
\max_{1\leq l\leq L, 1\leq i\leq r}\frac{L| \lag\bb_{i,l}, \bh_i \rag |^2}{\|\bh_i\|^2}, \quad \max_{1\leq l\leq L, 1\leq i\leq r}\frac{| \lag\ba_{i,l}, \bx_i \rag |^2}{\|\bx_i\|^2}
\end{equation}
are required to be small. We can ensure that the second term in~\eqref{mu_explained} is small since each $\ba_{i,l}$ is a Gaussian random vector and randomness contributes a lot to making the quantity small (with high probability). However, the first term is deterministic and could in principle be very large for certain $\bh_i$ (more precisely, the worst case could be ${\cal O}(K)$), hence we need to put a constraint on $\mu^2_h$ in order to control its size.   As numerical simulations presented in Section~\ref{s:numerics} show,  the relevance of $\mu^2_h$ goes beyond ``proof-technical reasons''. The required number of measurements for successful recovery does indeed depend on $\mu^2_h$, see Figure~\ref{fig:Lvsmuh}, at least when using the suggested approach via semidefinite programing.

\subsection{Conditions on the matrices $\BA_i$}
Throughout the proof of main theorem, we also need to be able to control a certain ``mutual incoherence'' of the matrices $\A_i$ on the subspaces $T_i$, cf.~\eqref{def:mu}.
This condition involves the quantity
\begin{equation}
\max_{j\neq k}\|\PP_{T_j}\A_j^*\A_k\PP_{T_k}\|.
\end{equation}
This quantity is formulated in terms of the matrices $\A_i$ (and not the $\BA_i$), but in order to get a grip
on this quantity, it will be convenient and necessary to impose some conditions on the matrices $\BA_i$. For instance we may assume that the $\BA_i$ are i.i.d.~Gaussian random matrices, which we will do henceforth. Thus, we  require that the $l$-th column of $\BA_{i}^*$, $\ba_{i,l} \sim\mathcal{N}(0,\I_{N_i})$, i.e., $\ba_{i,l}$ is an
$N_i\times 1$ standard real Gaussian random vector. In that case the expectation of $\A_i^*\A_i(\BZ_i) = \sum_{l=1}^L
\bb_{i,l}\bb_{i,l}^* \BZ_i \ba_{i,l}\ba_{i,l}^*$ can be computed
\begin{equation*}
\E(\A_i^*\A_i(\BZ_i)) = \sum_{l=1}^L \bb_{i,l}\bb_{i,l}^*\BZ_i \E(\ba_{i,l}\ba_{i,l}^*) = \BZ_i
\end{equation*}
for all $\BZ_i \in\CC^{K_i\times N_i},$
which says that the expectation of $\A_i^*\A_i$ is the identity.
In the proof, we also need to examine $\A^*_{i,p}\A_{i,p}$ to construct the so-called~\emph{dual certificate} via golfing scheme. Considering the common partition $\{\Gamma_p\}_{p=1}^P$ satisfying~\eqref{cond:iso}, we define $\A_{i,p}:
\CC^{K_i\times N_i}\rightarrow \CC^Q$ and $\A_{i,p}^*: \CC^Q\rightarrow \CC^{K_i\times N_i}$  correspondingly by
\begin{equation}\label{def:AP}
\A_{i,p}(\BZ_i) = \{\bb_{i,l}^* \BZ_i \ba_{i,l}\}_{l\in\Gamma_p},
\quad \A_{i,p}^*(\bz) = \sum_{l\in\Gamma_p} z_l \bb_{i,l}\ba_{i,l}^*.
\end{equation}
The definition of $\A_{i,p}$ is the same as that of $\A_i$ except that $\A_{i,p}$ only uses a subset of all measurements.
However, the expectation of $\A_{i,p}^*\A_{i,p}$ is no longer the identity in general (except the case when all $\BB_i$ are low-frequency DFT matrices and satisfy~\eqref{eq:subfourier}), i.e.,
\begin{equation*}
\A_{i,p}^*\A_{i,p}(\BZ_i) = \sum_{l\in\Gamma_p} \bb_{i,l}\bb_{i,l}^*\BZ_i \ba_{i,l}\ba_{i,l}^*,
\end{equation*}
and
\begin{equation}\label{def:T}
\E(\A_{i,p}^*\A_{i,p}(\BZ_i)) = \BT_{i,p}\BZ_i, \quad \BT_{i,p} := \sum_{l\in\Gamma_p}\bb_{i,l}\bb_{i,l}^*.
\end{equation}
The non-identity expectation of $\A_{i,p}^*\A_{i,p}$ will create challenges throughout the proof. However, there is
an easy trick to fix this issue. By properly assuming $Q > K_i$, $\BT_{i,p}$ is actually invertible. Consider $\A_{i,p}^*\A_{i,p} (\BS_{i,p}\BZ_i)$ and its expectation now yields
\begin{equation}\label{def:S}
\E(\A_{i,p}^*\A_{i,p} (\BS_{i,p}\BZ_i)) = \BT_{i,p}\BS_{i,p} \BZ_i = \BZ_i, \quad \BS_{i,p} := \BT_{i,p}^{-1}.
\end{equation}
This trick, i.e., making the expectation of $\A_{i,p}^*\A_{i,p}\BS_{i,p}$ equal to the identity, plays an important
role in the proof.

\section{Main Results\label{s:maintheorem}}

\subsection{The noiseless case}

Our main finding is that solving~\eqref{cvxprog} enables demixing and blind deconvolution simultaneously. Moreover, our method is also robust to noise.
\begin{theorem}{\bf [Main Theorem]}\label{thm:main}
Consider the model in~\eqref{eq:model} and assume that each $\BB_i\in\CC^{L\times K_i}$ with $\BB_i^*\BB_i = \I_{K_i}$ and each $\BA_i$ is a Gaussian random matrix, i.e., each entry in $\BA_i\stackrel{\text{i.i.d}}{\sim} \mathcal{N}(0,1)$. Let $\mum^2$ and $\mu^2_h$ be as defined in~\eqref{def:mumax} and~\eqref{def:muh} respectively, and denote $K := \max_{1\leq i\leq r}K_i$ and $N := \max_{1\leq i\leq r}N_i$. If
\begin{equation*}
L \geq C_{\alpha+\log(r)} r^2 \max\{\mum^2K, \mu^2_h N\}\log^2 L\log\gamma,
\end{equation*}
where $\gamma$ is the upper bound of $\|\A_i\|$ and obeys $\gamma \leq \sqrt{N\log(NL/2) + \alpha \log L}$,
then solving~\eqref{cvxprog} recovers $\{\BX_i = \bh_i\bx_i^*, 1\leq i\leq r\}$ exactly
with probability at least $1 - {\cal O}(L^{-\alpha + 1})$.
\end{theorem}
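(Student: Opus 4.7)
The proof follows the standard primal-dual framework for nuclear norm minimization, adapted to $r$ coupled rank-one blocks. For each $i$, define the tangent space at $\BX_i = \bh_i \bx_i^*$ by $T_i = \{\bh_i \bv^* + \bu \bx_i^* : \bu \in \CC^{K_i},\, \bv \in \CC^{N_i}\}$, and write $\bu_i := \bh_i/\|\bh_i\|$, $\bv_i := \bx_i/\|\bx_i\|$. A sufficient condition for $\{\BX_i\}_{i=1}^r$ to be the unique minimizer of~\eqref{cvxprog} is: (i) a local injectivity bound asserting that any $\{\BH_i \in T_i\}$ with $\sum_i \A_i(\BH_i) = 0$ must satisfy $\BH_i = 0$, and (ii) the existence of a certificate $\BY = \sum_i \A_i^*(\blambda_i)$ in the range of the adjoint with $\PP_{T_i}(\BY) \approx \bu_i \bv_i^*$ in Frobenius norm and $\|\PP_{T_i^{\bot}}(\BY)\| < 1$ for every $i$. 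The task is to prove quantitative versions of (i) and (ii) that hold with probability $1 - \mathcal{O}(L^{-\alpha+1})$.

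For (i) I would establish two concentration estimates. The per-block isometry $\|\PP_{T_i}\A_i^*\A_i \PP_{T_i} - \PP_{T_i}\| \leq 1/2$ is proved by the operator Bernstein inequality applied to $\A_i^*\A_i(\BZ) = \sum_l \bb_{i,l}\bb_{i,l}^* \BZ\, \ba_{i,l}\ba_{i,l}^*$, which has mean $\BZ$ since $\BB_i^*\BB_i = \I_{K_i}$ and the $\ba_{i,l}$ are standard Gaussian. The per-summand sub-exponential bounds invoke the bounds $\|\bb_{i,l}\|^2 \leq \mum^2 K/L$ and $|\langle \bh_i,\bb_{i,l}\rangle|^2 \leq \mu_h^2 \|\bh_i\|^2/L$ from~\eqref{cond:i}, producing the sample-complexity scaling $L \gtrsim \max\{\mum^2 K,\, \mu^2_h N\} \log^2 L$. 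The mutual incoherence $\|\PP_{T_j}\A_j^*\A_k \PP_{T_k}\| \leq 1/(2r)$ for $j \neq k$ is handled the same way: independence of $\BA_j$ and $\BA_k$ gives zero mean, and the extra $1/r$ shrinkage, needed to union-bound over $\binom{r}{2}$ pairs and dominate the contribution in (i), is where the $r^2$ factor in the sample complexity enters.

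The technical centerpiece is the construction of the dual certificate for (ii) via a multi-block golfing scheme. Using the common partition $\{\Gamma_p\}_{p=1}^P$ of~\eqref{cond:iso} and the rescaled operators $\A_{i,p}^*\A_{i,p} \BS_{i,p}$ whose restriction to $T_i$ has expectation $\PP_{T_i}$ by~\eqref{def:S}, I would initialize $\BY_0 = \bzero$ and iterate, for $p = 1,\dots,P$ with fresh randomness drawn from $\Gamma_p$,
\begin{equation*}
\BY_p \;=\; \BY_{p-1} \;+\; \sum_{i=1}^r \A_{i,p}^*\A_{i,p}\,\BS_{i,p}\bigl(\bu_i \bv_i^* - \PP_{T_i}(\BY_{p-1})\bigr),
\end{equation*}
setting the certificate $\BY := \BY_P$ with $P \asymp \log L$. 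Per-batch analogues of the bounds from (i) (applied with $Q = L/P$ in place of $L$) imply that $\sum_i \|\PP_{T_i}(\BY_p) - \bu_i \bv_i^*\|_F$ contracts by a constant factor per iteration, yielding the desired approximation $\PP_{T_i}(\BY) \approx \bu_i\bv_i^*$ after logarithmically many rounds. At each step the increment to $\|\PP_{T_i^{\bot}}(\BY_p)\|$ is bounded via a separate operator-Bernstein estimate on $\PP_{T_i^{\bot}}\A_{i,p}^*\A_{i,p}\BS_{i,p}(\BW_i)$ for $\BW_i \in T_i$ and on cross-block increments $\PP_{T_j^{\bot}}\A_{j,p}^*\A_{k,p}\BS_{k,p}(\BW_k)$ for $j \neq k$; summing the geometrically decaying contributions over batches yields $\|\PP_{T_i^{\bot}}(\BY)\| < 1$.

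The hardest part is managing the coupling among the $r$ blocks inside the golfing scheme. Each iteration must simultaneously shrink $r$ tangent-space residuals that feed into one another through the mutual-incoherence terms, while maintaining the off-tangent bound uniformly across all $i$. The rescaling $\BS_{i,p}$ (unavoidable because the common partition does not force $\BT_{i,p} = (Q/L)\I_{K_i}$ except in the low-frequency DFT case~\eqref{eq:subfourier}) injects a $\|\BS_{i,p}\| \leq 4L/(3Q)$ factor into every concentration bound, and the incoherence parameter $\mu^2_h$ must be defined as in~\eqref{def:muh} rather than the simpler form in~\cite{RR12}, precisely so that the rescaled per-batch quantities $\langle \BS_{i,p}\bh_i, \bb_{i,l}\rangle$ can be controlled—this is the correction to the issue flagged in Remark~\ref{remark_lemma}. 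Absorbing the $r$-dependence from cross-block mutual incoherence and the per-batch $\BS_{i,p}$ rescaling into the single measurement budget $L \gtrsim r^2 \max\{\mum^2 K,\mu^2_h N\}\log^2 L\log\gamma$ is the crux of the argument.
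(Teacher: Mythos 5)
Your overall architecture matches the paper's: a subgradient-based sufficient condition, a per-block local isometry, a cross-block mutual incoherence at level $1/r$ (the source of the $r^2$), and a golfing scheme on the common partition with the $\BS_{i,p}$ correction. However, there are two concrete gaps. First, the dual certificate for block $i$ must be $\A_i^*(\blambda)$ for a \emph{single} $\blambda\in\CC^L$ common to all blocks, because the only way the certificate interacts with a null-space perturbation is through the identity $\sum_i \langle \BH_i, \A_i^*(\blambda)\rangle = \langle \sum_i \A_i(\BH_i), \blambda\rangle = 0$, which requires the same $\blambda$ in every summand. You instead posit a certificate of the form $\sum_i \A_i^*(\blambda_i)$ with block-dependent $\blambda_i$, and your golfing update $\BY_p = \BY_{p-1} + \sum_i \A_{i,p}^*\A_{i,p}\BS_{i,p}(\bu_i\bv_i^* - \PP_{T_i}(\BY_{p-1}))$ uses only the diagonal operators. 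In the paper the common dual iterate is $\blambda_{p-1} = \sum_j \A_{j,p}(\BS_{j,p}\BW_{j,p-1})$ and each block receives $\A_{i,p}^*(\blambda_{p-1})$, which necessarily contains the cross terms $\A_{i,p}^*\A_{j,p}(\BS_{j,p}\BW_{j,p-1})$ for $j\neq i$; these couple the $r$ residuals in both the contraction estimate and the off-tangent bound. As written, your construction does not produce an element of the range of the adjoint of the joint measurement map, so it certifies nothing; and if you repair it by inserting the cross terms, the contraction of $\max_i\|\BW_{i,p}\|_F$ must be re-derived using both the local isometry and the $1/(4r)$ mutual incoherence, exactly as in Lemma~\ref{lem:Y-hx}.

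Second, your claim that the off-tangent increments decay geometrically does not follow from tracking only $\|\BW_{i,p}\|_F$. The operator-Bernstein bound on $\|\A_{i,p}^*(\blambda_{p-1}) - \BW_{i,p-1}\|$ has $\psi_1$-norms and variances governed by $\max_{l\in\Gamma_p}\|\BW_{j,p-1}^*\BS_{j,p}\bb_{j,l}\|$, i.e.\ by a sup-type incoherence of the \emph{random} residuals against the rows of $\BB_j$, not by their Frobenius norms. Bounding this crudely by $\|\BW\|_F\max_l\|\BS_{j,p}\bb_{j,l}\|$ introduces a factor $\mum\sqrt{K}$ in place of $\mu_h$ and destroys the claimed sample complexity. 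The paper therefore introduces $\mu_p := \frac{Q}{\sqrt{L}}\max_{i,l\in\Gamma_{p+1}}\|\BW_{i,p}^*\BS_{i,p+1}\bb_{i,l}\|$ and proves, in a separate and fairly long probabilistic argument (Lemma~\ref{lem:mup-half}), that $\mu_p\leq \frac{1}{2}\mu_{p-1}$, hence $\mu_p\leq 2^{-p}\mu_h$; only with this input does Lemma~\ref{lem:normbound} give $\|\A_{i,p}^*(\blambda_{p-1})-\BW_{i,p-1}\|\leq 2^{-p-1}$ and hence $\|\PP_{\TB_i}(\BY_i)\|\leq \frac12$. Your proposal acknowledges the role of $\mu_h^2$ at initialization (where $\mu_0\leq\mu_h$) but omits the propagation of this incoherence through the iterations, which is an essential and nontrivial missing step. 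A smaller omission of the same flavor: the tangent-space approximation must be driven below $(5r\gamma)^{-1}$, not just made ``small,'' so $P$ must be taken of order $\log(r\gamma)$, which is where the $\log\gamma$ factor in the hypothesis on $L$ comes from.
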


Even though the proof of Theorem~\ref{thm:main} follows a meanwhile well established route, the details of the proof itself are nevertheless quite involved and technical. Hence, for convenience we give a brief overview of the proof architecture. In Section~\ref{s:sufficient} we derive a sufficient condition and an approximate dual certificate condition for the minimizer of~\eqref{cvxprog} to be the unique solution to~\eqref{eq:model}. These conditions stipulate that the  matrices $\A_i$ need to satisfy two key properties. The first property, proved in Section~\ref{s:local}, can be considered as a modification of the celebrated \emph{Restricted Isometry Property (RIP)}~\cite{candes2006}, as it requires the $\A_i$ to act in a certain sense as ``local'' approximate isometries~\cite{candes_plan2011,CSV11}. The second property, proved in Section~\ref{s:incoherence}, requires the two operators $\A_i$ and $\A_j$ to satisfy a \emph{``local'' mutual
incoherence property}. With these two key properties in place, we can now construct an approximate dual certificate that fulfills the conditions derived in 
Section~\ref{s:sufficient}. We use the golfing scheme~\cite{gross11recovering} for this purpose, the constructing of which can be found in Section~\ref{s:dual}. 
With all these tools in place, we assemble the proof of Theorem~\ref{thm:main} in Section~\ref{mainproof}.

The theorem assumes for convenience that the $\bh_i$ and the $\bx_i$ are real-valued, but it is easy to modify the proof for complex-valued $\bh_i$ and  $\bx_i$. We leave this modification to the reader.
While Theorem~\ref{thm:informal} is the first of its kind, the derived condition on the number of measurements in~\eqref{convmodel1} is not optimal. Numerical experiments suggest (see e.g.\ Figure~\ref{fig:Lr} in Section~\ref{s:numerics}), that the number of measurements required for a successful solution of the blind deconvolution-blind demixing problem  scales with $r$ and not with $r^2$. Indeed, the simulations indicate that successful recovery via semidefinite programming is possible with a number of measurements close to the theoretical minimum, i.e., with  $L \gtrsim r (K+N)$, see Section~\ref{s:numerics}. This is a good news from a viewpoint of application and means that there is room for improvement in our theory. Nevertheless, this brings up the question whether we can improve upon our bound. A closer inspection of the proof shows that the $r^2$-bottleneck comes from the 
requirement $ \max_{j\neq k}\|\PP_{T_j}\A_j^*\A_k\PP_{T_k}\| \le \frac{1}{4r}$, see condition~\eqref{cond:suffcond}. In order to achieve this we need that $L$, the number of measurements, scales essentially like $r^2 \max\{\mum^2K, \mu^2_h N\}$ (up to $\log$-factors),  see Section~\ref{s:incoherence}.  Is it possible, perhaps with a different condition that does not rely on mutual incoherence between the $\A_j$, to reduce this requirement
on $L$ to one that scales like $ r \max\{\mum^2K, \mu^2_h N\}$?

Now we take a little detour to revisit the blind deconvolution problem described in the introduction and in the informal Theorem~\ref{thm:informal}, 
which is actually contained in our proposed framework as a special case.  Recall the model in~\eqref{eq:model} that $\by$ is actually the sum of Hadamard products of $\BB_i \bh_i$ and $\BA_i \bx_i$. 
Let $\BF$ be the Discrete Fourier Transform matrix of size $L \times L$ with $\BF^*\BF = \I_L$ and let the $L \times K_i$ matrix $\BB_i$ consist of the first $K_i$ columns of $\BF$ (then $\BB_i^{\ast} \BB_i = \BI_{K_i}$).
Now we can express~\eqref{eq:model}  equivalently as the sum of circular convolutions of  $ \BF^{-1} ( \BB_i \bh_i)$ and $\BF^{-1} (\BA_i \bx_i)$, i.e.,
\begin{align}
\label{diag2conv}
\begin{split}
\frac{1}{\sqrt{L}}\BF^{-1} {\by} & = \sum_{i=1}^r \BF^{-1} (\BB_i \bh_i) \ast \BF^{-1} (\BA_i \bx_i ) = \sum_{i=1}^r ( \BF^{-1}\BB_i)  \bh_i \ast (\BF^{-1} \BA_i) \bx_i.  
\end{split}
\end{align}
Set
\begin{equation*}
\bff_i: = 
\begin{bmatrix}
\bh_i \\ {\bf 0}_{L-K_i}
\end{bmatrix}.
\end{equation*}
Then there holds 
\begin{equation*}
\BF^{-1} \BB_i \bh_i = \BF^{-1}
\begin{bmatrix}
\BB_i & {\bf 0}_{L\times (L-K_i)} 
\end{bmatrix}
\begin{bmatrix}
\bh_i \\ {\bf 0}_{L-K_i}
\end{bmatrix}  = \bff_i.
\end{equation*}
Hence with a slight abuse of notation (replacing $\frac{1}{\sqrt{L}}\BF^{-1} {\by}$ in~\eqref{diag2conv} by $\by$ and  $\BF^{-1} \BA_i$ by $\BA_i$, 
using the fact that the Fourier transform of a Gaussian random matrix is again a Gaussian random matrix), we can express~\eqref{eq:model}
equivalently as
\begin{equation*}
\by = \sum_{i=1}^r \bff_i \ast (\BA_i \bx_i),
\end{equation*}
which is exactly~\eqref{convmodel} up to a normalization factor.

Thus we can easily derive the following corollary from Theorem~\ref{thm:main} (using the fact that  $\mu_{\max} = 1$ for the particular
choice of $\BB_i$  above). This corollary is the precise version of the informal Theorem~\ref{thm:informal}.

\begin{corollary}\label{cor:bd}
Consider the model in~\eqref{eq:model}, i.e.,
\begin{equation*}
\by = \sum_{i=1}^{r}    \bff_i \ast (\BA_i \bx_i),
\end{equation*}
where we assume that $\bff_i(k) = 0$ for $k > K_i$. Suppose that each $\BA_i$ is a Gaussian random matrix, i.e., each entry in $\BA_i\stackrel{\text{i.i.d}}{\sim} \mathcal{N}(0,1)$. Let 
$\mu^2_h$ be as defined in~\eqref{def:muh}. If
\begin{equation*}
L \geq C_{\alpha+ \log(r)} r^2\max\{K, \mu^2_h N\}\log^2 L\log\gamma,
\end{equation*}
where $\max_{i}\|\A_i\| \leq \gamma \leq \sqrt{N\log(NL/2) + \alpha \log L},$
then solving~\eqref{cvxprog} recovers $\{\BX_i: = \bh_i\bx_i^*, 1\leq i\leq r\}$ exactly 
with probability at least $1 - {\cal O}(L^{-\alpha + 1})$.
\end{corollary}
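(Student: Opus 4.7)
The plan is to derive Corollary~\ref{cor:bd} as a direct consequence of Theorem~\ref{thm:main} by passing to the Fourier domain, which converts the circular-convolution model~\eqref{convmodel1} into an instance of the diagonal-times-matrix form~\eqref{eq:model}. The reduction is essentially the one sketched in the paragraph just before the corollary statement; what remains is to execute it carefully and to check that the hypotheses required by Theorem~\ref{thm:main} are satisfied under the particular choice of basis matrices dictated by the reduction.

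First I would set $\BB_i$ to be the matrix of the first $K_i$ columns of the normalized $L\times L$ DFT matrix $\BF$. Zero-padding $\bff_i$ to length $L$ with its first $K_i$ entries equal to $\bh_i$ gives $\BF\bff_i = \BB_i\bh_i$. Using the fact that circular convolution diagonalizes under $\BF$, the received signal satisfies
\begin{equation*}
\tfrac{1}{\sqrt{L}}\BF\by \;=\; \sum_{i=1}^r \diag(\BB_i\bh_i)\,\tBA_i\,\bx_i,
\qquad \tBA_i := \BF\BA_i,
\end{equation*}
which has exactly the form~\eqref{eq:model} with $(\BB_i,\tBA_i)$ playing the roles of $(\BB_i,\BA_i)$. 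The nuclear-norm program~\eqref{cvxprog} built from $(\BB_i,\tBA_i)$ is equivalent to the one built from $(\bff_i,\BA_i)$, so it suffices to apply Theorem~\ref{thm:main} to the transformed system.

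Next I would verify the hypotheses. By orthonormality of DFT columns, $\BB_i^*\BB_i = \BI_{K_i}$. Each row $\bb_{i,l}$ of $\BB_i$ consists of $K_i$ entries of modulus $1/\sqrt{L}$, so $\|\bb_{i,l}\|^2 = K_i/L$ for every $l$ and every $i$, which yields $\mum^2 = \mumin^2 = 1$ in~\eqref{def:mumax}. Because $\BF$ is unitary, the distribution of $\tBA_i = \BF\BA_i$ is preserved in the Gaussian sense (its rows remain jointly Gaussian with identity covariance), and $\|\tilde{\A}_i\|$ agrees with $\|\A_i\|$ up to the unitary reparameterization of the measurement index used in the definition of $\A_i$. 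Plugging $\mum^2 = 1$ into the bound of Theorem~\ref{thm:main} collapses it precisely to the bound stated in the corollary, and the exact-recovery conclusion with probability $1 - \mathcal{O}(L^{-\alpha+1})$ carries over verbatim.

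The only real subtlety, and the place where the proposal is not entirely mechanical, is the real-versus-complex discrepancy: $\BA_i$ is real Gaussian while $\tBA_i$ is complex. This is exactly the situation addressed by the remark following Theorem~\ref{thm:main}, which states that extending the proof from real-valued $(\bh_i,\bx_i,\BA_i)$ to complex-valued ones is routine. Invoking that extension closes the gap, and the corollary follows without further work. I do not anticipate any other obstacle; every ingredient (incoherence constants, Gaussianity, operator-norm bound) is either immediate from the DFT structure or transfers directly through the unitary change of basis.
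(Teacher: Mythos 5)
Your proposal is correct and follows essentially the same route as the paper: both identify the convolution model with the general model~\eqref{eq:model} via the DFT (the paper applies $\BF^{-1}$ to~\eqref{eq:model}, you apply $\BF$ to the convolution model — the same bijection traversed in opposite directions), take $\BB_i$ to be the first $K_i$ columns of the DFT so that $\mu_{\max}=1$, and then invoke Theorem~\ref{thm:main}. Your explicit flagging of the real-versus-complex Gaussian issue for $\BF\BA_i$ is a point the paper glosses over with the phrase ``the Fourier transform of a Gaussian random matrix is again a Gaussian random matrix,'' and handling it via the complex-valued extension remarked after Theorem~\ref{thm:main} is the right resolution.
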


For the special case $r=1$, Corollary~\ref{cor:bd} becomes Theorem 1 in~\cite{RR12} with the proviso that in principle our $\mu_h^2$ is defined slightly differently than in~\cite{RR12}, see Remark~\ref{remark_lemma}. Yet, if we choose the partition of the matrix $\BB$ as suggested in the third example in Subsection~\ref{ss:incoherence}, then the difference between the two definitions of $\mu_h^2$ vanishes.)

\subsection{Noisy data}

In reality, measurements are always noisy. Hence, suppose  $\hby = \by + \beps$ where $\beps$ is noise with  $\|\beps\| \le \eta$. In this case we solve the following optimization program to recover $\{\BX_i\}_{i=1}^r,$
\begin{equation}\label{cvx:noise}
\min \sum_{i=1}^r\|\BZ_i\|_* \quad \text{subject to} \quad \left\| \sum_{i=1}^r\A_i(\BZ_i) - \hby \right\| \leq \eta.
\end{equation}
We should choose $\eta$ properly in order to make $\BX_{i}$ inside the feasible set and $\|\hby\| > \eta.$
Let $\{\hBX_i\}_{i=1}^r$ be the minimizer to~\eqref{cvx:noise}. We immediately know 
\begin{equation}\label{inq:noiseobj}
\sum_{i=1}^r\|\hBX_i\|_*  \leq \sum_{i=1}^r \|\BX_{i}\|_*.
\end{equation}
Our goal is to see how $\sqrt{\sum_{i=1}^r \| \hBX_i - \BX_{i}\|^2_F}$ varies with respect to the noise level $\eta$.

\begin{theorem}{\bf [Stability Theory]}\label{thm:noise}
Assume we observe $\hat{\by} = \by + \beps = \sum_{i=1}^r \A_i(\BX_{i}) + \beps$ with $\|\beps\| \leq \eta$. 
Then, under the same conditions as in Theorem~\ref{thm:main}, the minimizer $\{\hat{\BX}_i\}_{i=1}^r$  to~\eqref{cvx:noise} satisfies
\begin{equation*}
\sqrt{ \sum_{i=1}^r \|\hat{\BX}_i - \BX_{i}\|_F^2 } \leq  C  \frac{\lambda_{\max}}{\lambda_{\min}} \, r  \sqrt{\max\{K, N\}} \eta
\end{equation*}
with probability at least $1 - {\cal O}(L^{-\alpha + 1})$. Here, $\lambda^2_{\max}$ and $\lambda^2_{\min}$ are the largest and the smallest eigenvalue of $\sum_{i=1}^r \A_i\A_i^*$, respectively. 
\end{theorem}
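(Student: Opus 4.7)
Set $\BH_i := \hat\BX_i - \BX_{i}$ and $\|\PP_T\BH\|_{F,\oplus} := \big(\sum_i \|\PP_{T_i}\BH_i\|_F^2\big)^{1/2}$. The plan is to follow the classical three-step route for stability of nuclear-norm minimization, adapted to the demixing setting. Step one is immediate: both $\{\hat\BX_i\}$ and $\{\BX_{i}\}$ are feasible for~\eqref{cvx:noise}, so the triangle inequality gives $\|\sum_i \A_i(\BH_i)\| \le 2\eta$. For step two I would feed the \emph{approximate} dual certificate $\BY_i=\A_i^*(\blambda)$ produced by the golfing construction of Section~\ref{s:dual} (with $\|\PP_{T_i}\BY_i - \sgn(\BX_i)\|_F \le \epsilon_1$ and $\|\PP_{T_i^\perp}\BY_i\| \le 1/2$) into the nuclear-norm subgradient inequality: using $\sum_i\|\hat\BX_i\|_* \le \sum_i\|\BX_{i}\|_*$, choosing $\BW_i\in T_i^\perp$ with $\|\BW_i\|\le 1$ dual to $\PP_{T_i^\perp}\BH_i$, and decomposing $\sgn(\BX_i)=\PP_{T_i}\BY_i - (\PP_{T_i}\BY_i-\sgn(\BX_i))$, the quantity $\sum_i\lag\A_i^*(\blambda),\BH_i\rag$ collapses into $\lag\blambda,\sum_i\A_i(\BH_i)\rag$; together with step one and the dual-norm pairing $|\lag \BA,\BB\rag|\le \|\BA\|\|\BB\|_*$ this yields the \emph{cone condition}
\[ \tfrac{1}{2}\sum_i\|\PP_{T_i^\perp}\BH_i\|_* \le 2\eta\|\blambda\| + \epsilon_1\sqrt{r}\,\|\PP_T\BH\|_{F,\oplus}. \]

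Step three is a restricted-injectivity bound on $\bigoplus_i T_i$. Combining the local isometry from Section~\ref{s:local} with the mutual-incoherence bound $\|\PP_{T_j}\A_j^*\A_k\PP_{T_k}\|\le 1/(4r)$ from Section~\ref{s:incoherence} to absorb the cross-terms in the expansion of $\|\sum_i\A_i(\PP_{T_i}\BH_i)\|^2$ produces a bound of the form $\|\sum_i\A_i(\PP_{T_i}\BH_i)\|^2\ge c_0\|\PP_T\BH\|_{F,\oplus}^2$ for an absolute constant $c_0>0$. Splitting $\sum_i\A_i(\PP_{T_i}\BH_i)=\sum_i\A_i(\BH_i)-\sum_i\A_i(\PP_{T_i^\perp}\BH_i)$, invoking step one on the first piece and the crude upper bound $\|\sum_i\A_i(\PP_{T_i^\perp}\BH_i)\|\le \lambda_{\max}\sum_i\|\PP_{T_i^\perp}\BH_i\|_*$ on the second, gives
\[ \|\PP_T\BH\|_{F,\oplus} \lesssim \eta + \lambda_{\max}\sum_i\|\PP_{T_i^\perp}\BH_i\|_*. \]
Substituting the cone condition into this estimate and taking $\epsilon_1$ small enough (by slightly enlarging the constant $C$ in Theorem~\ref{thm:main}, at cost of a slightly stronger condition on $L$) so that $\sqrt{r}\,\epsilon_1\lambda_{\max}\|\PP_T\BH\|_{F,\oplus}$ can be moved to the left, one obtains the closed bound $\|\PP_T\BH\|_{F,\oplus}+\sum_i\|\PP_{T_i^\perp}\BH_i\|_*\lesssim \eta(1+\lambda_{\max}\|\blambda\|)$. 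Re-auditing the golfing construction in Section~\ref{s:dual} shows that the vector obtained by summing the iterates $\blambda_p$ satisfies $\|\blambda\|\lesssim \sqrt{r\max\{K,N\}}/\lambda_{\min}$, the factor $1/\lambda_{\min}$ entering through the (effective) inversion of $\sum_i\A_i\A_i^*$ at each iteration. Finally, using $\|\cdot\|_F\le \|\cdot\|_*$ and a final Cauchy--Schwarz,
\[ \sqrt{\sum_i\|\BH_i\|_F^2}\le \|\PP_T\BH\|_{F,\oplus}+\sqrt{\sum_i\|\PP_{T_i^\perp}\BH_i\|_F^2}\le \|\PP_T\BH\|_{F,\oplus}+\sum_i\|\PP_{T_i^\perp}\BH_i\|_*, \]
which delivers the announced bound $C(\lambda_{\max}/\lambda_{\min})\,r\sqrt{\max\{K,N\}}\,\eta$.

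The main obstacle is not the cone inequality or the restricted-injectivity estimate, both of which are mild elaborations of the machinery already in place for Theorem~\ref{thm:main}; it is the extraction of a \emph{quantitative} norm bound on the golfing dual certificate $\blambda$. In the noiseless proof one only needs to know that $\blambda$ exists and that $\A_i^*(\blambda)$ meets the approximation requirements, but in the noisy setting the product $\lambda_{\max}\|\blambda\|$ appears explicitly in the final error, so one must carry through the geometric-decay bookkeeping across the golfing iterates and track exactly how $1/\lambda_{\min}$ enters through the inverses of the partial operators $\A_{i,p}^*\A_{i,p}\BS_{i,p}$ and, at the aggregation step, through $\sum_i\A_i\A_i^*$. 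A secondary technical concern is demanding a sharper tangent-space approximation $\epsilon_1\lesssim (\lambda_{\max}\sqrt{r})^{-1}$ than what Theorem~\ref{thm:main} strictly requires, so that the coupled system between $\|\PP_T\BH\|_{F,\oplus}$ and $\sum_i\|\PP_{T_i^\perp}\BH_i\|_*$ closes without degrading the $r\sqrt{\max\{K,N\}}$ rate.
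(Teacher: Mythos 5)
Your three-step skeleton (feasibility, cone condition via the dual certificate, restricted injectivity on the direct sum of the $T_i$) is the classical Cand\`es--Plan route, and steps one and three are essentially sound. The genuine gap is in step two and in the final accounting: by pairing the certificate against the \emph{full} error, your bound contains the product $\lambda_{\max}\|\blambda\|$, and the whole argument hinges on the asserted estimate $\|\blambda\|\lesssim \sqrt{r\max\{K,N\}}/\lambda_{\min}$, which you do not prove and which does not follow from ``re-auditing'' the golfing scheme. The golfing iteration~\eqref{eq:lambda} never inverts $\sum_i\A_i\A_i^*$; the only inversions it performs are of the deterministic matrices $\BT_{i,p}$ (through $\BS_{i,p}$), which are controlled by~\eqref{cond:ST} independently of $\lambda_{\min}$. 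A direct audit of the construction gives only $\|\blambda\|\le\sum_{p=1}^P\|\blambda_{p-1}\|\le\sum_{p}\sum_{j}\|\A_{j,p}\|\,\|\BS_{j,p}\|\,\|\BW_{j,p-1}\|_F\lesssim r\gamma L/Q$, which contains no factor $1/\lambda_{\min}$ and is not of the form you need. So the mechanism by which $\lambda_{\min}$ is supposed to enter your estimate is simply absent, and the coupled system you close at the end does not deliver the stated rate.

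The paper's proof avoids this issue entirely by a different decomposition. Writing the vectorized error as $\be=\be_{\BPhi}+\be_{\BPHIB}$ with $\be_{\BPhi}=\BPhi^*(\BPhi\BPhi^*)^{-1}\BPhi\be\in\Ran(\BPhi^*)$ and $\be_{\BPHIB}\in\Null(\BPhi)$, the range component is bounded outright by $\|\be_{\BPhi}\|\le\lambda_{\min}^{-1}\|\BPhi\be\|\le 2\eta/\lambda_{\min}$ --- this pseudoinverse step is the \emph{only} place $\lambda_{\min}$ enters --- while the null-space component $\{\BH_i\}$ satisfies $\sum_i\A_i(\BH_i)=\bzero$, so when the dual certificate is paired against it the term $\lag\blambda,\sum_i\A_i(\BH_i)\rag$ vanishes identically and no norm bound on $\blambda$ is ever required. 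One then reverses the argument of Lemma~\ref{lemma:suffcond} to get $\sum_i\|\BH_{i,\TB_i}\|_F\le\frac{1}{1-\beta-2r\gamma\alpha}\sum_i\|\BJ_i\|_*\le\frac{\sqrt{\max\{K,N\}}}{1-\beta-2r\gamma\alpha}\sum_i\|\BJ_i\|_F$ with $\sum_i\|\BJ_i\|_F^2=\|\be_{\BPhi}\|^2$, controls the $T_i$-components via the null space property~\eqref{nsp}, and assembles everything with the Pythagorean theorem. To salvage your route you would need to actually establish a quantitative norm bound on the golfing certificate with the correct $\lambda_{\min}$-dependence (which I do not believe it has); switching to the null-space/range splitting renders the question moot.
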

Note that with a little modification of Lemma 2 in~\cite{RR12}, it can be shown that $\frac{\lambda_{\max}}{\lambda_{\min}} \sim \frac{\mum}{\mumin}.$ The proof of Theorem~\ref{thm:noise} will be given in Section~\ref{s:stability}.

With Theorem~\ref{thm:noise} and Wedin's $\sin(\theta)$ theorem~\cite{Wedin72,Stewart90} for singular value/vector perturbation theory, we immediately have the performance guarantees of recovering individual $(\bh_i, \bx_i)_{i=1}^r$ by applying SVD to $\hBX_i.$ 

\begin{corollary}\label{colvec}
Let $\hat{\bh}_i = \sqrt{\hat{\sigma}_{i1}} \hat{\bu}_{i1}$ and $\hat{\bx}_i = \sqrt{\hat{\sigma}_{i1}} \hat{\bv}_{i1}$ where $\sigma_{i1}$, $\hat{\bu}_{i1}$ and $\hat{\bv}_{i1}$ are the leading singular value, left and right singular vectors of $\hBX_i$ respectively. Then there exist  $\{c_i\}_{i=1}^r$ and a constant $c_0$ such that
\begin{align*}
\| \bh_i - c_i \hat{\bh}_i\| & \leq c_0\min(\eps/\|\bh_i\|, \|\bh_i\|), \\
\| \bx_i - c_i^{-1} \hat{\bx}_i\| & \leq c_0\min(\eps/\|\bx_i\|, \|\bx_i\|)
\end{align*} 
where $\eps = \sqrt{\sum_{i=1}^r\|\hat{\BX}_i - \BX_i\|_F^2}.$
\end{corollary}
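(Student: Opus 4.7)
The plan is to reduce the statement to a classical rank-one singular-value/vector perturbation question for each $\hat{\BX}_i \approx \BX_i$ separately, and then invoke Wedin's $\sin(\theta)$ theorem. The key observation is that $\BX_i = \bh_i\bx_i^{\ast}$ is \emph{exactly} of rank one, with leading singular value $\sigma_{i1} := \|\bh_i\|\|\bx_i\|$, left and right singular vectors $\bu_i := \bh_i/\|\bh_i\|$ and $\bv_i := \bx_i/\|\bx_i\|$, and vanishing second singular value. Theorem~\ref{thm:noise} supplies the perturbation bound $\|\hat{\BX}_i-\BX_i\| \le \|\hat{\BX}_i-\BX_i\|_F \le \eps$ for every $i$, and this is the only quantitative input that the remainder of the argument needs.

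First I would treat the \emph{well-conditioned regime} $\eps \le \tfrac12\sigma_{i1}$. Weyl's inequality then gives $|\hat{\sigma}_{i1}-\sigma_{i1}|\le\eps$ and $\hat{\sigma}_{i2}\le\eps\le\tfrac12\sigma_{i1}$, hence a spectral gap $\sigma_{i1}-\hat{\sigma}_{i2}\gtrsim\sigma_{i1}$ that makes Wedin's theorem applicable. Wedin yields a common sign $\epsilon_i \in \{\pm 1\}$ with both $\|\hat{\bu}_{i1}-\epsilon_i\bu_i\|$ and $\|\hat{\bv}_{i1}-\epsilon_i\bv_i\|$ of order $\eps/\sigma_{i1}$. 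I would then choose $c_i$ to reconcile the SVD normalization $\|\hat{\bh}_i\|=\|\hat{\bx}_i\|=\sqrt{\hat{\sigma}_{i1}}$ against the possibly different true norms $\|\bh_i\|,\|\bx_i\|$; the natural choice $c_i := \epsilon_i \|\bh_i\|/\sqrt{\hat{\sigma}_{i1}}$ gives
\begin{equation*}
c_i\hat{\bh}_i-\bh_i = \|\bh_i\|\bigl(\epsilon_i\hat{\bu}_{i1}-\bu_i\bigr),
\end{equation*}
to which Wedin applies directly, and an analogous split
\begin{equation*}
c_i^{-1}\hat{\bx}_i-\bx_i = \tfrac{\hat{\sigma}_{i1}}{\|\bh_i\|}\bigl(\epsilon_i\hat{\bv}_{i1}-\bv_i\bigr) + \Bigl(\tfrac{\hat{\sigma}_{i1}}{\|\bh_i\|}-\|\bx_i\|\Bigr)\bv_i
\end{equation*}
is handled by Wedin together with the Weyl bound on $|\hat{\sigma}_{i1}-\sigma_{i1}|$. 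These two estimates deliver the $\eps$-proportional parts of the bounds in the corollary.

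In the complementary \emph{ill-conditioned regime} $\eps > \tfrac12\sigma_{i1}$ the $\min$ in the statement collapses to the trivial quantity $\|\bh_i\|$ (resp.\ $\|\bx_i\|$): picking $c_i$ so that $\|c_i\hat{\bh}_i\|=\|\bh_i\|$ with the sign that maximizes $\langle c_i\hat{\bh}_i,\bh_i\rangle$, the reverse triangle inequality immediately yields $\|c_i\hat{\bh}_i-\bh_i\|\le 2\|\bh_i\|$, and a symmetric argument controls $\|c_i^{-1}\hat{\bx}_i-\bx_i\|$ up to a Weyl-controlled mismatch $|\hat{\sigma}_{i1}-\sigma_{i1}|\le\eps$. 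Taking the smaller of the two regime bounds produces the $\min(\cdot,\cdot)$ form of the statement. The main obstacle is this \emph{coupling}: a single scalar $c_i$ must simultaneously control both error terms, because the SVD pins down $\|\hat{\bh}_i\|=\|\hat{\bx}_i\|=\sqrt{\hat{\sigma}_{i1}}$ independently of the true ratio $\|\bh_i\|/\|\bx_i\|$. Once $c_i$ is fixed as above, the remainder is a coordinate-wise application of classical rank-one singular-value perturbation theory to the $r$ independent targets $\BX_i$.
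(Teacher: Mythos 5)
Your route---Weyl's inequality for the singular values, Wedin's $\sin(\theta)$ theorem for the singular vectors, a case split at $\eps\asymp\sigma_{i1}$, and a single scalar $c_i$ chosen to reconcile the balanced SVD normalization $\|\hat{\bh}_i\|=\|\hat{\bx}_i\|=\sqrt{\hat{\sigma}_{i1}}$ with the true factors---is exactly the argument the paper has in mind: the text offers no proof of this corollary beyond the remark that it follows from Theorem~\ref{thm:noise} together with Wedin's $\sin(\theta)$ theorem, so your write-up is that one-line pointer made explicit.

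There is, however, one point where what you actually prove does not match what you claim to prove. Writing $\sigma_{i1}=\|\bh_i\|\,\|\bx_i\|$, Wedin gives $\|\epsilon_i\hat{\bu}_{i1}-\bu_i\|\lesssim\eps/\sigma_{i1}$, so your choice $c_i=\epsilon_i\|\bh_i\|/\sqrt{\hat{\sigma}_{i1}}$ yields $\|\bh_i-c_i\hat{\bh}_i\|=\|\bh_i\|\,\|\epsilon_i\hat{\bu}_{i1}-\bu_i\|\lesssim\eps/\|\bx_i\|$, and symmetrically $\|\bx_i-c_i^{-1}\hat{\bx}_i\|\lesssim\eps/\|\bh_i\|$: these are the corollary's bounds with the denominators interchanged, and they coincide with the stated ones only under the balanced normalization $\|\bh_i\|=\|\bx_i\|$. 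For unbalanced factors the bound as literally written is unattainable; e.g.\ for $\|\bh\|=M\gg 1=\|\bx\|$ and $\hat{\BX}=\BX+\eps\,\bq\bx^*$ with $\bq$ a unit vector orthogonal to $\bh$, one checks $\min_{c}\|\bh-c\hat{\bh}\|=M\eps/\sqrt{M^2+\eps^2}\approx\eps$, whereas the claimed bound is $c_0\eps/M$. So you should either invoke the balanced convention explicitly (it is the one consistent with defining $\hat{\bh}_i,\hat{\bx}_i$ through $\sqrt{\hat{\sigma}_{i1}}$, and the one for which the two branches of the $\min$ cross at $\eps\asymp\sigma_{i1}$) or record the bounds with $\eps/\|\bx_i\|$ and $\eps/\|\bh_i\|$. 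A secondary issue of the same kind sits in your ill-conditioned regime: normalizing $\|c_i\hat{\bh}_i\|=\|\bh_i\|$ forces $\|c_i^{-1}\hat{\bx}_i\|=\hat{\sigma}_{i1}/\|\bh_i\|$, which is $\lesssim\|\bx_i\|$ only if $\hat{\sigma}_{i1}\lesssim\sigma_{i1}$; Weyl alone gives only $\hat{\sigma}_{i1}\le\sigma_{i1}+\eps$, so when $\eps\gg\sigma_{i1}$ you need an extra ingredient (e.g.\ the feasibility constraint $\sum_i\|\hat{\BX}_i\|_*\le\sum_i\|\BX_i\|_*$, or again a normalization convention) to close the ``trivial'' branch of the $\min$ for the $\bx_i$-estimate.
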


\section{Numerical Simulations \label{s:numerics}}

In this section we present a range of simulations that illustrate and complement the theoretical results of the previous section.

\subsection{Number of measurements vs.\ number of sources \label{s:minL}}

We investigate empirically the minimal $L$ required to simultaneously demix and deconvolve $r$ sources. Here are the parameters and settings used in the simulations: the number of sources $r$ varies from $1$ to $7$ and $L = 50,100,\cdots,750$ and $800$.  For each $1\leq i\leq r$, $K_i = 30$ and $N_i = 25$ are fixed. Each $\BB_i$ is the first $K_i$ columns of an $L\times L$ DFT matrices with $\BB_i^*\BB_i = \I_{K_i}$ and each $\BA_i$ is an $L\times N_i$ Gaussian random matrix. $\bh_i$ and $\bx_i$ satisfy $\mathcal{N}(\bzero, \I_{K_i})$ and $\mathcal{N}(\bzero, \I_{N_i})$ respectively. We denote $\BX_{i} = \bh_i\bx_i^*$, the ``lifted" matrix and solve~\eqref{cvxprog} to recover $\BX_{i}.$ For each pair of $(L,r)$, 10 experiments are performed and the recovery is regarded as  a success if
\begin{equation}\label{eq:success}
\frac{\sqrt{\sum_{i=1}^r \|\hat{\BX}_{i} - \BX_{i}\|_F^2}}{\sqrt{\sum_{i=1}^r \|\BX_{i}\|_F^2}} < 10^{-3}
\end{equation}
where each $\hat{\BX}_i$, given by solving~\eqref{cvxprog} via the CVX package~\cite{cvx}, serves as an  approximation of $\BX_{i}$. 
Theorem~\ref{thm:main} implies that the minimal required $L$ scales with $r^2$, which is not optimal in terms of number of degrees of freedom. Figure~\ref{fig:Lr} validates the non-optimality of our theory. Figure~\ref{fig:Lr} shows a sharp phase transition boundary between success and failure and furthermore the minimal $L$ for exact recovery seems to have a strongly linear correlation with number of sources $r$. Note that if $L$ is approximately greater than $80r$, solving~\eqref{cvxprog} gives the exact recovery of $\BX_{i}$ numerically, which is quite close to the theoretical limit $(K_i + N_i)r = 55r.$ 

\bigskip

Moreover, our method extends to other types of settings  although we do not have theories for them yet. In wireless communication, it is particularly interesting to see the recovery performance if $\BA_i = \BD_i \BH_i$ where $\BD_i$ is a diagonal matrix with Bernoulli random variables (taking value $\pm 1$ with equal probabilities) on the diagonal and $\BH_i$ is fixed as the first $N_i$ columns of a non-random Hadamard matrix. In other words, the only randomness of $\BA_i$ comes from $\BD_i$. Both $\BH_i$ and $\BD_i$ are matrices of $\pm 1$ entries and can be easily generated in many applications. 
By using the same settings on $L$, $r$, $\bh_i$ and $\bx_i$ as before and $K_i = N_i = 15$, we apply~\eqref{cvxprog} to recover $(\bh_i, \bx_i)_{i=1}^r$. Since the existence of Hadamard matrices of order $4k$ with positive integer $k$ is still an open problem~\cite{HW78HM}, we only test $L = 2^{s}$ with $s = 6, 7,8$ and $9$. Surprisingly,  Figure~\ref{fig:Lr} (the bottom one) also demonstrates that the minimal $L$ scales linearly with $r$ and our algorithm almost reaches the information theoretic optimum even if all $\BA_i$ are partial Hadamard matrices. 

\begin{figure}[h!]
\centering
\includegraphics[width = 80mm]{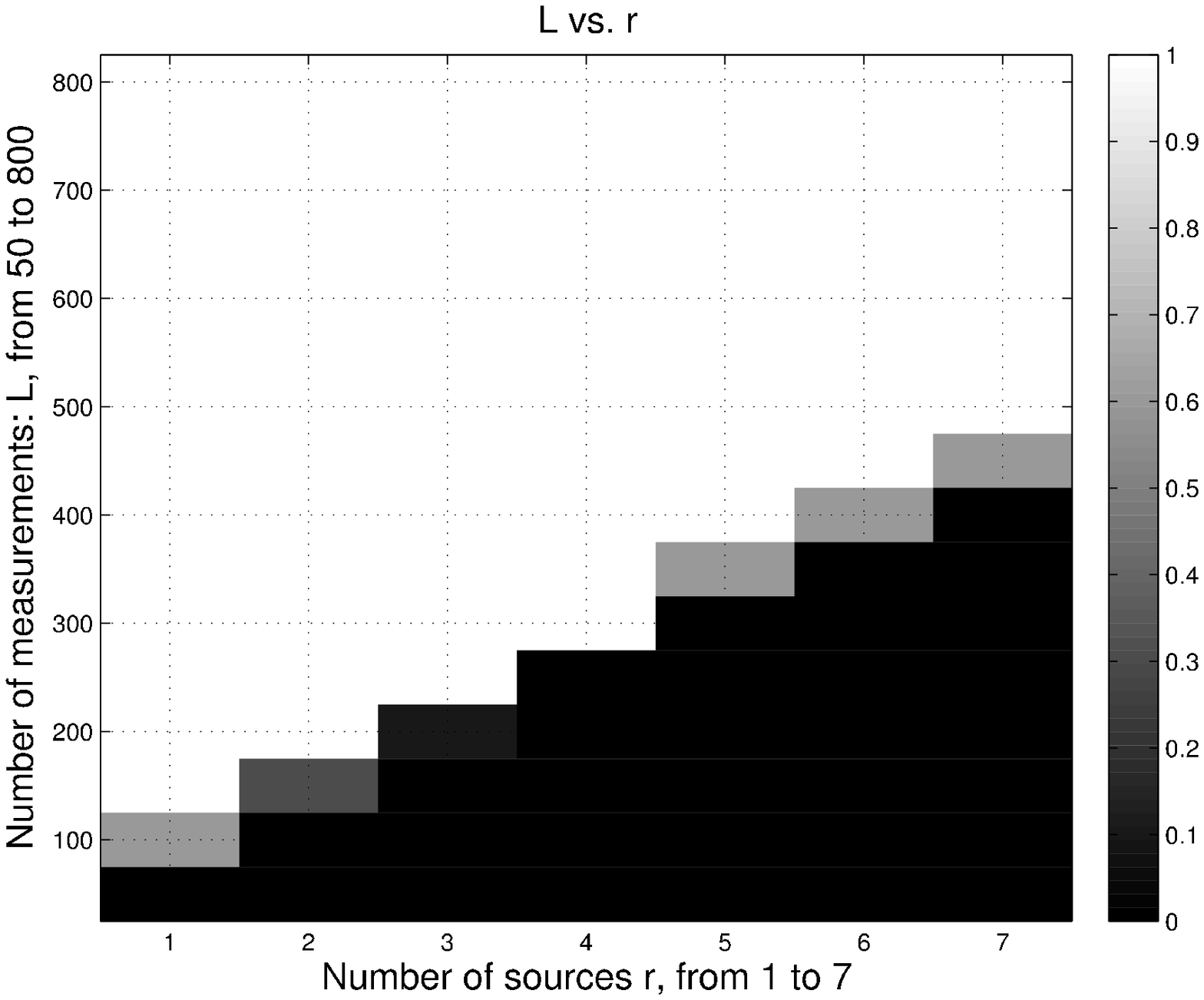}
\vfill
\vskip0.5cm
\centering
\includegraphics[width = 80mm]{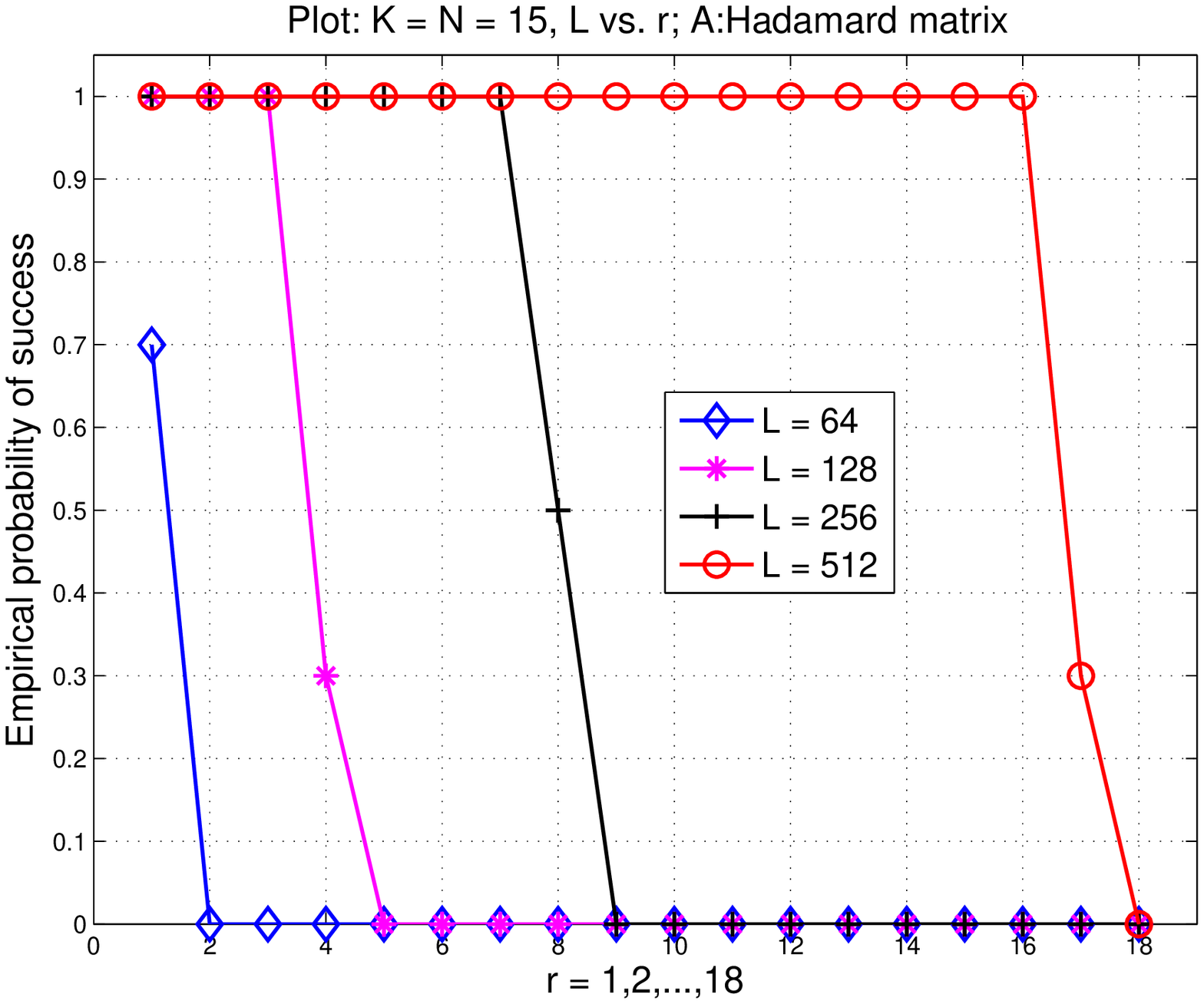}
\caption{Phase transition plot: performance of~\eqref{cvxprog} for different pairs of $(L,r)$. White: 100$\%$ success and black: $0\%$ success. Top: $\BA_i:$ $L\times N_i$ Gaussian random matrices. $K_i = 30$ and $N_i = 25$.  $1\leq r\leq 7$ and $L = 50, 100,\cdots, 800$; Bottom: $\BA_i = \BD_i \BH_i$ where $\BH_i$ is the first $N_i$ columns of an $L\times L$ Hadamard matrix and $\BD_i$ is a diagonal matrix with i.i.d. random entries taking $\pm 1$ with equal probability.  $K_i = N_i = 15$ with $r = 1,\cdots, 18$ and $L = 64, 128, 256, 512$. }
\label{fig:Lr}
\end{figure}


\bigskip
Figure~\ref{fixedL} shows the performance of recovery via solving~\eqref{cvxprog} under the assumption that $L$ is fixed and $K_i$ and $N_i$ are changing. The results are presented for two cases: (i) the $\BA_i$ are Gaussian random matrices, and (ii) the $\BA_i$ are Hadamard matrices premultiplied by a binary diagonal matrix as explained above.
In the simulations, we assume there exist two sources $(r = 2)$ with $K_1 = K_2$ and $N_1 = N_2$. We fix $L = 128$ and let $K_i$ and $N_i$ vary from 5 to 50. $\BB_i$ consists of the first $K_i$ columns of an $L\times L$ DFT matrix.
Both $\bh_i$ and $\bx_i$ are random Gaussian vectors. 
The boundary between success and failure in the phase transition plot is well approximated by a line, which matches the relationship between $L, K_i$, and $N_i$ stated in Theorem~\ref{thm:main}. More precisely, the probability of success is quite satisfactory if $L = 128 \geq 1.5 r(K_i + N_i)$ in this case.

\begin{figure}[h!]
\begin{minipage}{0.48\textwidth}
\centering
\includegraphics[width = 75mm]{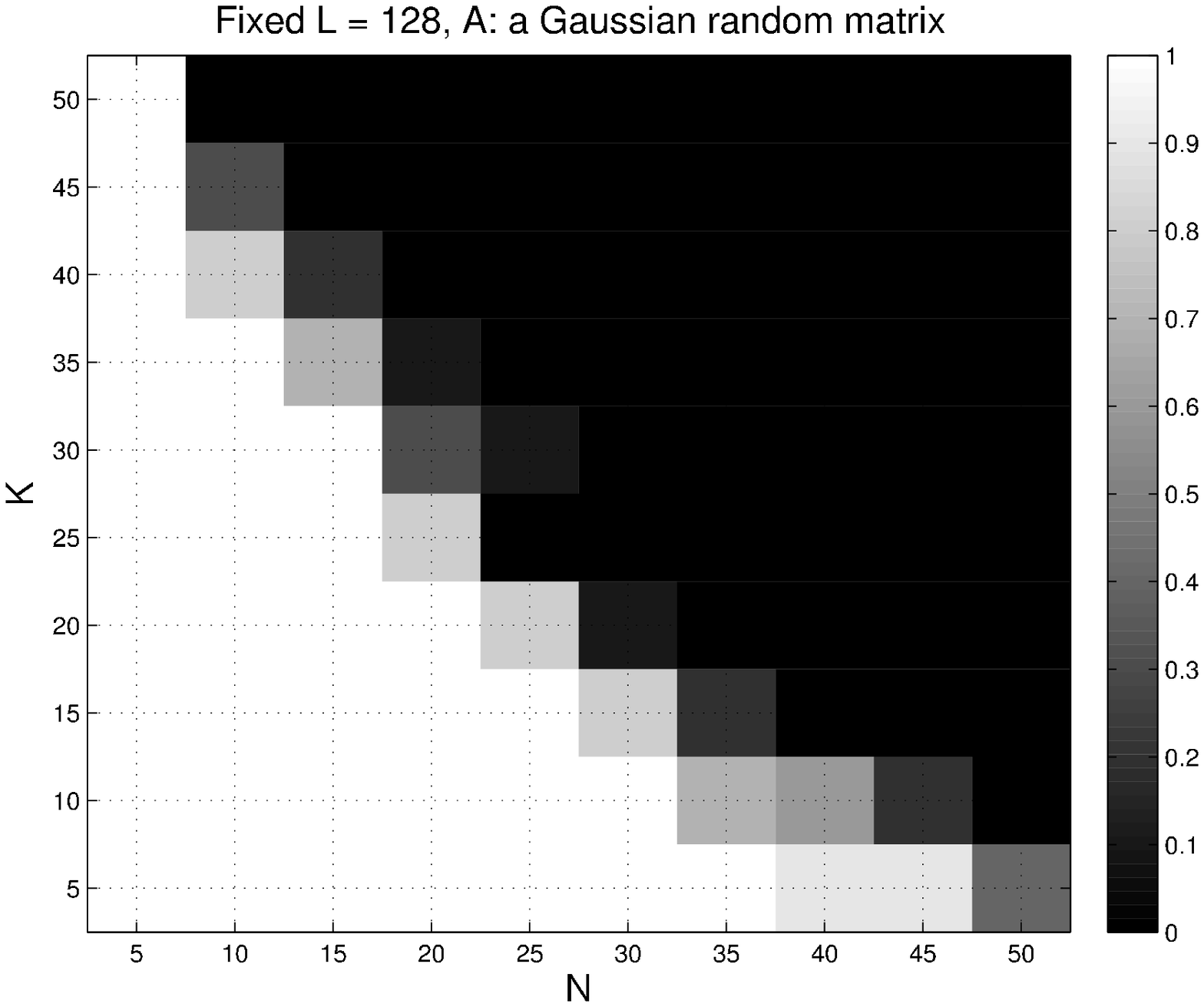}
\end{minipage}
\hfill
\begin{minipage}{0.48\textwidth}
\centering
\includegraphics[width = 75mm]{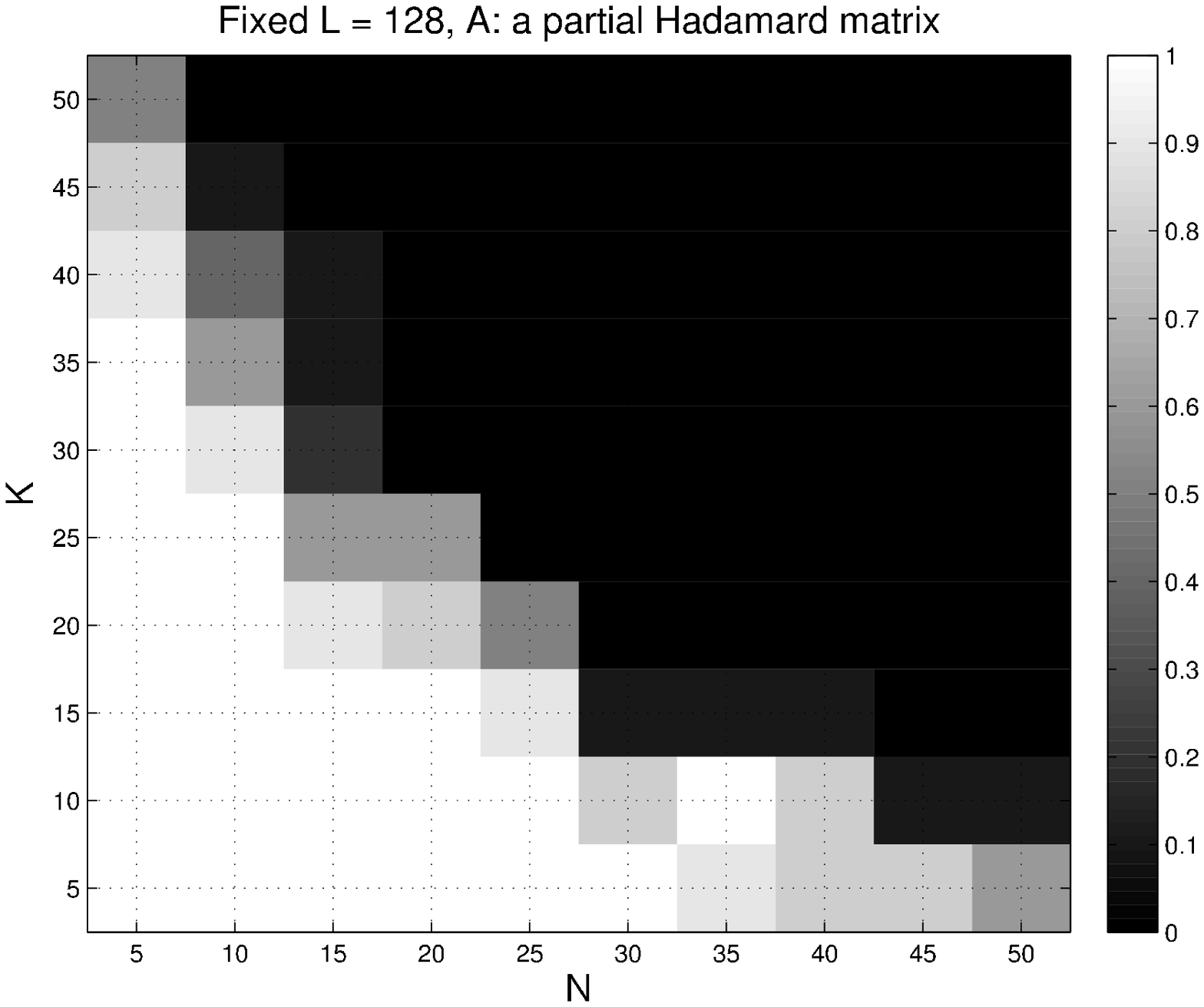}
\end{minipage}
\caption{Phase transition plot: empirical probability of recovery success for $(K_i,N_i)$ where $K_i$ and $N_i$ both vary from 5 to 50 and $L = 128$ is fixed. White: 100$\%$ success and black: $0\%$ success. Left: each $\BA_i$ is a $L\times N_i$ Gaussian random matrix; Right: $\BA_i = \BD_i\BH_i$ with $\BH_i$ being the first $N_i$ columns of the $L\times L$ Hadamard matrix and $\BD_i$ a diagonal matrix with entries taking value on $\pm 1$ with equal probabilities. }
\label{fixedL}
\end{figure}

\subsection{Number of measurements vs.\  incoherence parameter}
Theorem~\ref{thm:main} indicates that $L$ scales with $\mu^2_h$ defined in~\eqref{def:muh} and $\mu^2_h$ also plays an important role in the proof. Moreover, Figure~\ref{fig:Lvsmuh} implies that $\mu_h^2$ is not only necessary for ``technical reasons" but also related to the numerical performance. In the experiment, we fix $r=1$ and $K = N = 30$. $\BA$ is a Gaussian random matrix, and $\BB$ is a low-frequency Fourier matrix, while $L$ and $\mu^2_h$ vary. 
Thanks to the properties of low-frequency Fourier matrices, we are able to construct a vector $\bh$ whose associated incoherence parameter $\mu^2_h$ in~\eqref{def:muh} is equal to a particular number. 
 In particular, we choose $\bh$ to be one of those vectors whose first $3,6,\cdots, 27, 30$ entries are $1$ and the others are zero. The advantage of those choices is that $\max_{1\leq l\leq L} L|\lag  \bb_l, \bh\rag |^2/\|\bh\|^2$ will not change with $L$ and can be computed explicitly.  
We can see in Figure~\ref{fig:Lvsmuh} that the minimal $L$ required for exact recovery seems strongly linearly associated with $\mu^2_h = L \max |\lag \bb_l, \bh\rag|^2/\|\bh\|^2$. 

\begin{figure}[h!]
\centering
\includegraphics[width = 75mm]{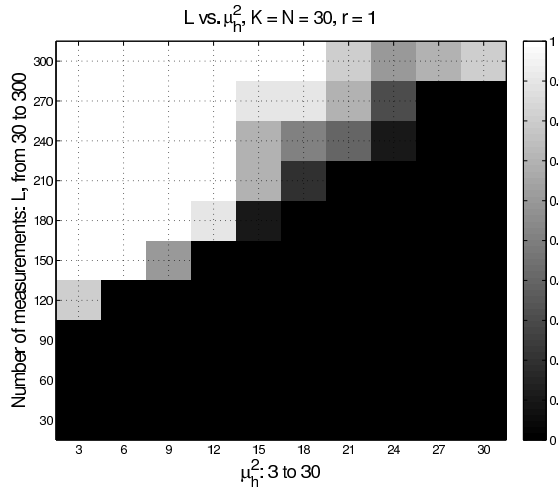}
\caption{Phase transition plot: Empirical probability of recovery success for $(L, \max L|\lag \bb_l, \bh\rag |^2/\|\bh\|^2)$ where $r = 1$, $K = N = 30$. White: 100$\%$ success and black: $0\%$ success.  }
\label{fig:Lvsmuh}
\end{figure}

\subsection{Robustness}

In order to illustrate the robustness of our algorithm with respect to noise as stated in Theorem~\ref{thm:noise}, we conduct two simulations to study how the relative error $\frac{\sqrt{\sum_{i=1}^r \|\hat{\BX}_{i} - \BX_{i}\|_F^2}}{\sqrt{\sum_{i=1}^r \|\BX_{i}\|_F^2}}$ behaves under different levels of noise. In the first experiment we  choose $r = 3$, i.e., there are totally 3 sources. They are  of different sizes, i.e., $(K_1, N_1) = (20, 20)$, $(K_2, N_2) = (25, 25)$ and $(K_3, N_3) = (20, 20).$
$L$ is fixed to be 256, the $\BB_i$ are as outlined in Section~\ref{s:minL} and the $\BA_i$  are Gaussian random matrices. In the simulation, we choose $\beps_i$ to be a normalized Gaussian  
random vector. Namely, we first sample $\beps_i$ from a multivariate Gaussian distribution and then normalize $\| \beps_i\|_F = \sigma \sqrt{\sum_{i=1}^r \|\BX_{i}\|_F^2}$ where $\sigma = 1, 0.5, 0.1, 0.05, 0.01, \cdots$ and $0.0001$. For each $\sigma,$ we run 10 experiments and compute the average relative error in the scale of dB, i.e., $10\log_{10}(\text{Avg.RelErr})$. 

\bigskip

We run a similar experiment, this time with $r=15$ sources (all  $N_i$ are equal to 10, and all $K_i$ are equal to 15) and the $\BA_i$ are the ``random'' Hadamard matrices described above.
For both experiments, Figure~\ref{noise} indicates that the average relative error (dB) is linearly correlated with $\SNR = 10\log_{10}(\sum_{i=1}^r \|\BX_{i}\|_F^2/ \|\beps\|^2_F)$, as one would wish.
\begin{figure}[h!]
\begin{minipage}{0.48\textwidth}
\centering
\includegraphics[width = 80mm]{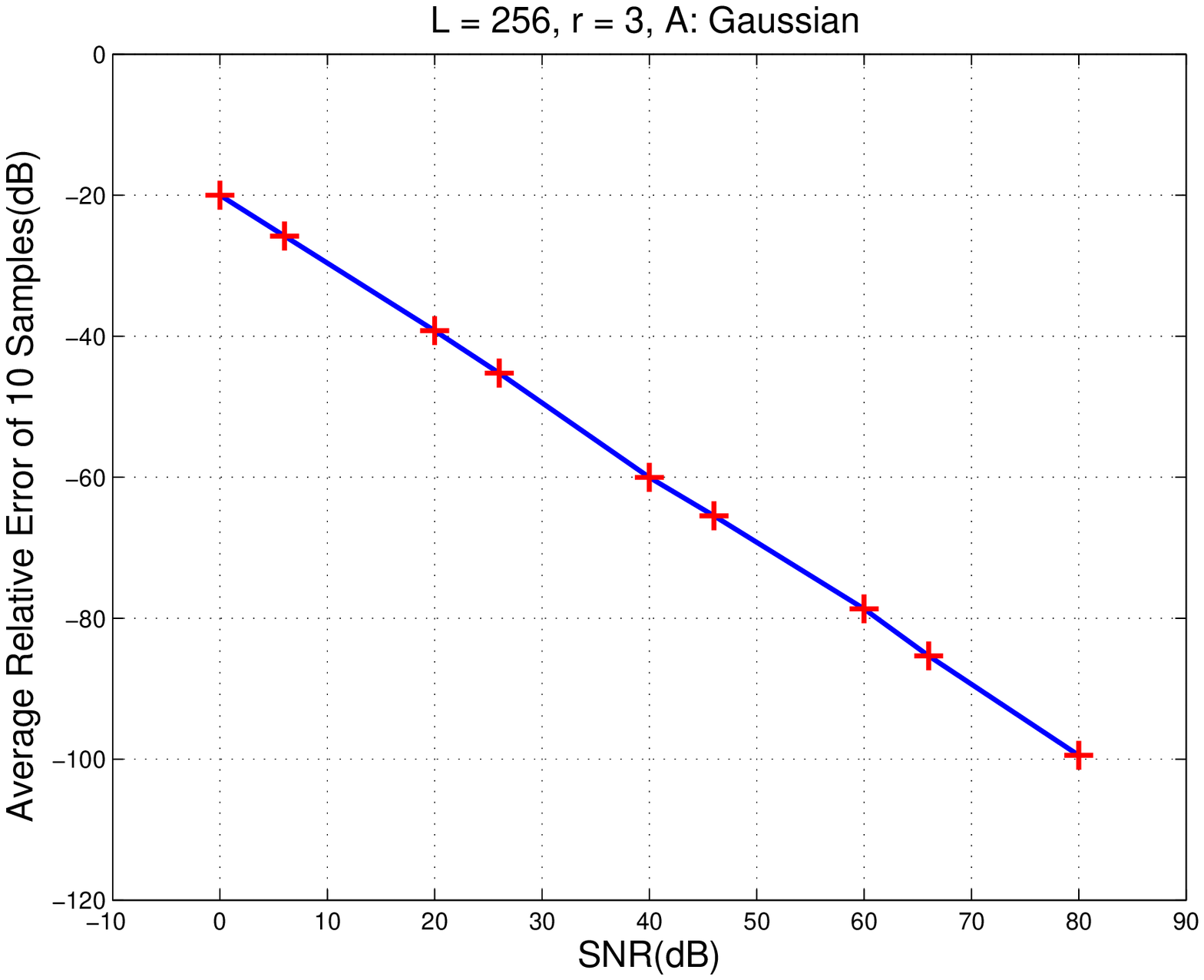}
\end{minipage}
\hfill
\begin{minipage}{0.48\textwidth}
\centering
\includegraphics[width = 80mm]{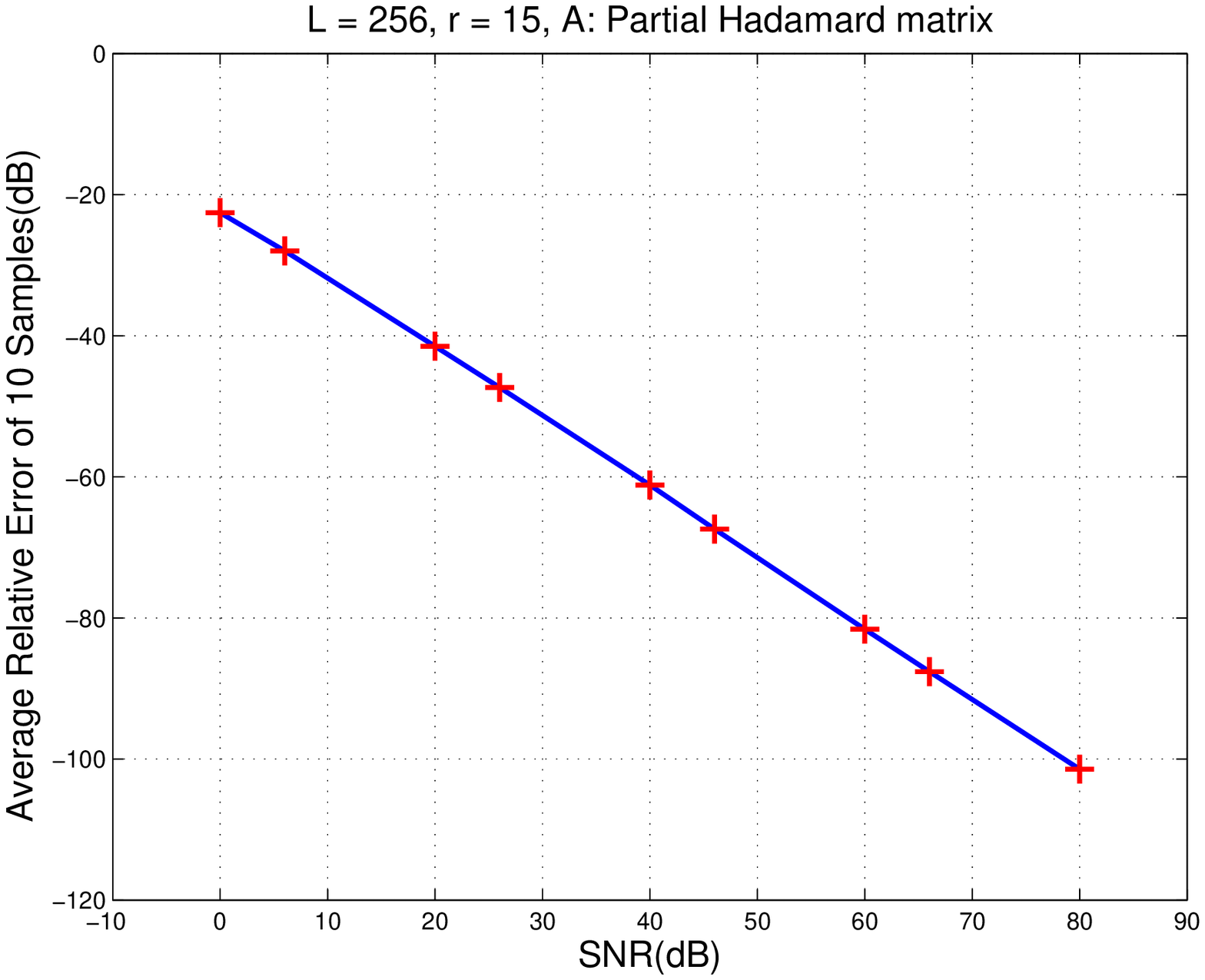}
\end{minipage}
\caption{Performance of~\eqref{cvx:noise} under different SNR. Left: $\{\BA_i\}$ are Gaussian and there are 3 sources and $L = 256$; Right: $\BA_i = \BD_i\BH_i$ where $\BH_i$ is a partial Hadamard matrix and $\BD_i$ is a diagonal matrix with random $\pm 1$ entries. Here there are $15$ sources in total and $L = 512$.}
\label{noise}
\end{figure}

\section{Proofs}\label{s:proofs}

This section is dedicated to proving the theorems presented in Section~\ref{s:maintheorem}. Since the proofs are rather involved and technical, we have arranged the arguments into individual subsections. We first start out with stating sufficient conditions under which the main theorems will hold, see
Subsection~\ref{s:sufficient}. In Subsection~\ref{s:local} we state and analyze a certain form of \emph{local restricted isometry property} and a specific \emph{local incoherence property} is established in
Subsection~\ref{s:incoherence}. Both of them are
associated with the assumptions of the sufficient conditions in Subsection~\ref{s:sufficient}. We construct a dual certificate in Subsection~\ref{s:dual}. With all these results in place, Subsections~\ref{mainproof} and
\ref{s:stability} are devoted to assembling the proofs of the main theorems.

\subsection{Sufficient conditions\label{s:sufficient}}

Without loss of generality, we assume that the lifted matrix $\BX_i = \alpha_i\bh_i\bx_i^*$, where  $\bh_i$ and $\bx_i$ are all real and of unit norm and $\alpha_i$ is a  scalar  for all $1\leq i\leq r$ throughout Section~\ref{s:sufficient}--\ref{s:stability}.  
We also define a linear space which $\bh_i\bx_i^*$ lies in and which will be useful in the further analysis:
\begin{equation}\label{cond:proj}
T_i = \{ \bh_i\bh_i^*\BZ_i + (\I_{K_i} - \bh_i\bh_i^*)\BZ_i\bx_i\bx_i^* | \BZ_i\in\CC^{K_i\times N_i}  \}
\end{equation} 
and its corresponding complement is defined as
\begin{equation}\label{cond:projc}
\TB_i = \{ (\I_{K_i} - \bh_i\bh_i^*)\BZ_i (\I_{N_i} - \bx_i\bx_i^*) | \BZ_i\in\CC^{K_i\times N_i}  \}.
\end{equation} 

Now we present the first sufficient condition, under which $\{\alpha_i\bh_i\bx_i^*\}_{i=1}^r$ is the unique minimizer. However, Lemma~\ref{lem:suff-1st} is not easy to use in reality and therefore, we derive Lemma~\ref{lemma:suffcond}, a more useful condition, from Lemma~\ref{lem:suff-1st}.
\begin{lemma}\label{lem:suff-1st}
Assume that 
\begin{equation}\label{eq:suff-1st}
\sum_{i=1}^r \lag \BH_i, \bh_i\bx_i^* \rag + \|\BH_{i,\TB_i}\|_* > 0.
\end{equation}
for any real $\{\BH_i\}_{i=1}^r$ satisfying $\sum_{i=1}^r \A_i(\BH_i) = 0$ and at least one of $\BH_i$ is nonzero.  Then $\{\alpha_i\bh_i\bx_i^*\}_{i=1}^r$ is the unique minimizer to the convex program~\eqref{cvxprog}. 
\end{lemma}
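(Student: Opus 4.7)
The plan is to use the standard subgradient argument for nuclear-norm minimization. Let $\{\BX_i' = \alpha_i\bh_i\bx_i^* + \BH_i\}_{i=1}^r$ be any feasible point of~\eqref{cvxprog}; since $\sum_i \A_i(\alpha_i\bh_i\bx_i^*) = \by$, feasibility is equivalent to $\sum_{i=1}^r \A_i(\BH_i) = 0$, and ``different from the proposed solution'' means at least one $\BH_i$ is nonzero. I will show under~\eqref{eq:suff-1st} that
$\sum_i \|\BX_i'\|_* > \sum_i \|\alpha_i\bh_i\bx_i^*\|_*$.

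First I would normalize signs: without loss of generality $\alpha_i > 0$ (otherwise flip the sign of $\bh_i$), so that $\bh_i\bx_i^*$ is the outer product from the (rank-one) SVD of $\alpha_i\bh_i\bx_i^*$. Then the subdifferential of the nuclear norm at $\alpha_i\bh_i\bx_i^*$ is exactly
\begin{equation*}
\partial\|\alpha_i\bh_i\bx_i^*\|_* = \bigl\{ \bh_i\bx_i^* + \BW_i \,:\, \BW_i \in \TB_i,\ \|\BW_i\|\leq 1\bigr\},
\end{equation*}
where $\TB_i$ is the subspace in~\eqref{cond:projc}. Since $T_i$ and $\TB_i$ are orthogonal complements inside $\CC^{K_i\times N_i}$, the duality between the nuclear and operator norms (applied inside $\TB_i$) lets me pick $\BW_i^\star \in \TB_i$ with $\|\BW_i^\star\|\leq 1$ satisfying $\langle \BW_i^\star, \BH_{i,\TB_i}\rangle = \|\BH_{i,\TB_i}\|_*$; moreover $\langle \BW_i^\star, \BH_i\rangle = \langle \BW_i^\star, \BH_{i,\TB_i}\rangle$ because $\BW_i^\star \perp T_i$.

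The convexity inequality for the nuclear norm at $\alpha_i\bh_i\bx_i^*$, with subgradient $\bh_i\bx_i^* + \BW_i^\star$, then gives
\begin{equation*}
\|\alpha_i\bh_i\bx_i^* + \BH_i\|_* \;\geq\; \|\alpha_i\bh_i\bx_i^*\|_* + \langle \bh_i\bx_i^*, \BH_i\rangle + \|\BH_{i,\TB_i}\|_*.
\end{equation*}
Summing over $i$ and invoking the hypothesis~\eqref{eq:suff-1st} yields $\sum_i \|\BX_i'\|_* > \sum_i \|\alpha_i\bh_i\bx_i^*\|_*$, which proves both optimality and uniqueness.

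There is no serious obstacle in this lemma itself; the proof is entirely a packaging of the nuclear-norm subgradient together with the nuclear/operator norm duality on $\TB_i$. The real work of the paper comes in the \emph{next} step, namely verifying condition~\eqref{eq:suff-1st} via an approximate dual certificate and the associated local RIP/incoherence bounds, which is why Lemma~\ref{lem:suff-1st} is immediately refined into a more operational sufficient condition in Lemma~\ref{lemma:suffcond}.
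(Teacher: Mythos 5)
Your proposal is correct and follows essentially the same route as the paper: both arguments write an arbitrary feasible point as $\{\alpha_i\bh_i\bx_i^*+\BH_i\}$, pick an element of $\TB_i$ with unit operator norm attaining $\|\BH_{i,\TB_i}\|_*$ to form a subgradient $\bh_i\bx_i^*+\BV_i$ of the nuclear norm at $\alpha_i\bh_i\bx_i^*$, and sum the subgradient inequality over $i$ before invoking~\eqref{eq:suff-1st}. Your explicit sign normalization of $\alpha_i$ is a minor tidying of a step the paper leaves implicit; otherwise the two proofs coincide.
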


\begin{proof}
For any feasible element of the convex program~\eqref{cvxprog}, it must have the form
of $\{\alpha_i\bh_i\bx_i^* + \BH_i\}_{i=1}^r.$
It suffices to show that the $\sum_{i=1}^r\|\alpha_i\bh_i\bx_i^* + \BH_i\|_* > \sum_{i=1}^r \|\alpha_i\bh_i\bx_i^*\|_*$ for any nontrivial set of $\{\BH_i\}_{i=1}^r$, i.e., at least one of $\BH_i$ is nonzero. For each $\BH_i$, there exists a $\BV_i\in \TB_i$ such that
\begin{equation*}
\lag\BH_i, \BV_i \rag = \lag\BH_{i,\TB_i}, \BV_i  \rag = \|\BH_{i,\TB_i}\|_*
\end{equation*}
where $\BH_{i,\TB_i}$ is the projection of $\BH_i$ on $\TB_i$ and $\| \BV_i \| = 1$. Thus $\bh_i\bx_i^* + \BV_i$ belongs to the \footnote{See~\cite{CR11} and~\cite{FR13} for detailed discussions of the sub-differential and the sub-gradient.}{sub-differential} of $\|\cdot\|_*$ at $\BX_i = \alpha_i\bh_i\bx_i^*.$
\begin{align*}
\sum_{i=1}^r\|\alpha_i\bh_i\bx_i^* + \BH_i\|_* & \geq  \sum_{i=1}^r \left( \| \alpha_i\bh_i\bx_i^*  \|_* + \lag  \bh_i\bx_i^* +  \BV_i, \BH_i\rag \right)\\
 & = \sum_{i=1}^r \left(\|\alpha_i\bh_i\bx_i^*\|_* + \lag \BH_{i, T_i}, \bh_i\bx_i^* \rag  + \|\BH_{i,\TB_i}\|_* \right)\\
& > \sum_{i=1}^r \|\alpha_i\bh_i\bx_i^*\|_*
\end{align*}
where the first inequality follows from the definition of subgradient and~\eqref{eq:suff-1st}. 
\end{proof}
Now we consider under what condition on $\A_i$, the unique minimizer is $\{\alpha_i \bh_i\bx_i^*\}.$ 
Define $\mu$ by
\begin{equation}\label{def:mu}
\mu : = \max_{j\neq k}\|\PP_{T_j}\A_j^*\A_k\PP_{T_k}\| 
\end{equation}
as a measure of \emph{incoherence} between any pairs of linear operators. $\A_i\PP_{T_i}$ is the restriction of $\A_i$ onto $T_i.$

\begin{lemma}{\bf [Key Lemma]}\label{lemma:suffcond}
Assume that
\begin{equation}\label{cond:suffcond}
\|\PP_{T_i}\A_i^*\A_i \PP_{T_i} - \PP_{T_i}\| \leq \frac{1}{4}, \quad \mu \leq \frac{1}{4r}, \quad \|\A_i\| \leq \gamma
\end{equation}
and also there exists a $\blambda\in\CC^L$ such that
\begin{equation}\label{cond:suff2}
\|\bh_i\bx_i^* - \PP_{T_i}(\A_i^*(\blambda))\|_F \leq \alpha, \quad \|\PP_{\TB_i}(\A_i^*(\blambda))\| \leq \beta
\end{equation}
for all $1\leq i\leq r$ and $ (1 - \beta)-2r\gamma \alpha> 0$,
then $\{\alpha_i\bh_i\bx_i^*\}_{i=1}^r$ is the unique minimizer to~\eqref{cvxprog}. In particular, we can choose $\alpha = (5r \gamma )^{-1}$ and $\beta = \frac{1}{2}$. Here $\|\A_i\| := \sup_{\BZ\neq \bzero} \|\A_i(\BZ)\|_F/\|\BZ\|_F.$
\end{lemma}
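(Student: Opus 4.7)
\textbf{Proof plan for Lemma~\ref{lemma:suffcond}.}

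The natural route is to verify the hypothesis of Lemma~\ref{lem:suff-1st}: fix any nontrivial $\{\BH_i\}_{i=1}^r$ with $\sum_i \A_i(\BH_i)=\bzero$ and prove
$\sum_i \lag \BH_i, \bh_i\bx_i^*\rag + \|\BH_{i,\TB_i}\|_* > 0$. Writing $\BY_i := \A_i^*(\blambda)$, the constraint yields $\sum_i \lag \BH_i, \BY_i\rag = \lag \sum_i \A_i(\BH_i), \blambda\rag = 0$. I would decompose $\BH_i = \BH_{i,T_i} + \BH_{i,\TB_i}$ (and likewise $\BY_i$), use $\lag\BH_i,\bh_i\bx_i^*\rag = \lag\BH_{i,T_i},\bh_i\bx_i^*\rag$, and add and subtract $\PP_{T_i}(\BY_i)$ to rewrite the target quantity as
\begin{equation*}
\sum_i \lag \BH_{i,T_i},\bh_i\bx_i^* - \PP_{T_i}(\BY_i)\rag \,-\, \sum_i \lag \BH_{i,\TB_i},\PP_{\TB_i}(\BY_i)\rag \,+\, \sum_i \|\BH_{i,\TB_i}\|_*.
\end{equation*}
Applying the two dual-certificate bounds in~\eqref{cond:suff2}, together with the operator/nuclear duality $|\lag\BH_{i,\TB_i},\PP_{\TB_i}(\BY_i)\rag|\le \beta\|\BH_{i,\TB_i}\|_*$, gives the lower bound $(1-\beta)\sum_i\|\BH_{i,\TB_i}\|_* - \alpha\sum_i\|\BH_{i,T_i}\|_F$.

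The core of the argument is then to dominate $\sum_i \|\BH_{i,T_i}\|_F$ by $\sum_i \|\BH_{i,\TB_i}\|_*$. From $\sum_i \A_i(\BH_i)=\bzero$ we have $\sum_i \A_i(\BH_{i,T_i}) = -\sum_i \A_i(\BH_{i,\TB_i})$; the upper side is bounded by the triangle inequality and $\|\A_i\|\le\gamma$:
\begin{equation*}
\Bigl\|\sum_i \A_i(\BH_{i,T_i})\Bigr\| \,\le\, \gamma \sum_i \|\BH_{i,\TB_i}\|_F \,\le\, \gamma \sum_i \|\BH_{i,\TB_i}\|_*.
\end{equation*}
For the lower side I would expand the squared norm as $\sum_i \lag \PP_{T_i}\A_i^*\A_i\PP_{T_i}(\BH_{i,T_i}),\BH_{i,T_i}\rag + \sum_{i\neq j}\lag \PP_{T_j}\A_j^*\A_i\PP_{T_i}(\BH_{i,T_i}),\BH_{j,T_j}\rag$; the first condition of~\eqref{cond:suffcond} makes the diagonal terms at least $\tfrac34\sum_i\|\BH_{i,T_i}\|_F^2$, and the mutual-incoherence bound $\mu\le 1/(4r)$ controls the off-diagonal terms by $\tfrac14\sum_i\|\BH_{i,T_i}\|_F^2$ after Cauchy--Schwarz on the cross sums. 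This leaves $\bigl\|\sum_i \A_i(\BH_{i,T_i})\bigr\|^2 \ge \tfrac12\sum_i\|\BH_{i,T_i}\|_F^2 \ge \tfrac{1}{2r}\bigl(\sum_i\|\BH_{i,T_i}\|_F\bigr)^2$.

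Combining the two directions yields $\sum_i\|\BH_{i,T_i}\|_F \le \sqrt{2r}\,\gamma\sum_i\|\BH_{i,\TB_i}\|_*$ (which is the key inequality the statement's $2r\gamma\alpha$ factor absorbs), so the whole expression is at least $\bigl[(1-\beta) - \sqrt{2r}\,\gamma\alpha\bigr]\sum_i\|\BH_{i,\TB_i}\|_*$. Under the stated smallness hypothesis this is strictly positive \emph{provided} some $\BH_{i,\TB_i}\neq 0$; if instead all $\BH_{i,\TB_i}=0$ then each $\BH_i\in T_i$ and the lower bound on $\bigl\|\sum_i\A_i(\BH_{i,T_i})\bigr\|$ forces every $\BH_{i,T_i}=0$, contradicting nontriviality. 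Finally, plugging $\alpha=(5r\gamma)^{-1}$ and $\beta=\tfrac12$ one checks $(1-\beta)-\sqrt{2r}\,\gamma\alpha \ge \tfrac12 - \tfrac{\sqrt 2}{5\sqrt r} > 0$ for every $r\ge 1$.

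The one step I expect to be the genuine obstacle is the bilinear bookkeeping in Step~3: getting the $\sum_{i\neq j}$ cross terms controlled by exactly $\tfrac14\sum_i\|\BH_{i,T_i}\|_F^2$ requires using $\mu\le \frac{1}{4r}$ together with the inequality $\sum_{i\ne j}x_ix_j \le (r-1)\sum_i x_i^2$ (or equivalently $(\sum x_i)^2\le r\sum x_i^2$), and the constants have to align so that, combined with the $\tfrac34$ from local RIP, one still has a positive margin that survives the $\tfrac{1}{\sqrt{2r}}$ Cauchy--Schwarz step converting $\sum\|\BH_{i,T_i}\|_F^2$ into $(\sum\|\BH_{i,T_i}\|_F)^2$. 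Everything else is essentially dual-certificate bookkeeping.
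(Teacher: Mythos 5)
Your proposal is correct and follows essentially the same route as the paper's proof: reduce to Lemma~\ref{lem:suff-1st}, subtract the dual certificate $\A_i^*(\blambda)$ using $\sum_i\lag\BH_i,\A_i^*(\blambda)\rag=0$, bound the $T_i$ and $\TB_i$ contributions via~\eqref{cond:suff2}, and establish the null-space inequality $\sum_i\|\BH_{i,T_i}\|_F\lesssim r\gamma\sum_i\|\BH_{i,\TB_i}\|_*$ from the local RIP and the mutual incoherence $\mu\le\frac{1}{4r}$. Your constant $\sqrt{2r}$ in that inequality is in fact slightly sharper than the paper's $2r$, and is absorbed by the stated hypothesis, so the argument goes through.
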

\begin{remark}
This sufficient condition is quite standard and useful to prove that convex relaxation is able to recover the desired ground truth, see~\cite{Recht11MC,RR12,gross11recovering,CSV11,LS15} for more examples. 

With~\eqref{cond:suffcond}, we will show a variant of null space property $($where the null space refers to $(\{\{\BH_i\}_{i=1}^r: \sum_{i=1}^r \A_i(\BH_i) = \bzero\})$ under Frobenius norm, as shown in~\eqref{nsp}. The key component in~\eqref{cond:suff2} is the existence of $\blambda$. In fact, $\blambda$ is  an approximate dual feasible solution which certifies that $\{ \bh_i\bx_i^*\}_{i=1}^r$ is the unique minimum of the nuclear norm minimization program~\eqref{cvxprog}.
\end{remark}

\begin{proof}
It suffices to show that for any nonzero $\{\BH_i\}_{i=1}^r$ with $\sum_{i=1}^r\A_i(\BH_i) = 0$, there holds
\begin{equation*}
I_1 : = \sum_{i=1}^r \left\lag \BH_i, \bh_i\bx_i^* - \A^*_i (\blambda) \right\rag + \|\BH_{i,\TB_i}\|_* > 0
\end{equation*}
under~\eqref{cond:suffcond} and~\eqref{cond:suff2}.
By decomposing the inner product in $I_1$ on $T_i$ and $\TB_i$ for each $i$, we have
\begin{align*}
 I_1 & =\sum_{i=1}^r \Big(\left\lag \BH_{i, T_i}, \bh_i\bx_i^* - \PP_{T_i}(\A_i^*(\blambda))\right\rag \\
 & \qquad \qquad - \left\lag\BH_{i,\TB_i},  \PP_{\TB_i}(\A_i^*(\blambda))\right\rag + \|\BH_{i,\TB_i}\|_*\Big).
\end{align*}
Then, by applying  the Cauchy-Schwarz inequality and $|\lag \BU, \BV\rag| \leq \|\BU\|_* \|\BV\|$ for any matrices $\BU, \BV$ of the same size, we only need to show that a lower bound of $I_1$ is positive, i.e., $I_1 \geq I_2 > 0$:
\begin{align}\label{cond:a}
\begin{split}
I_2 & : = \sum_{i=1}^r \Big(- \| \BH_{i,T_i} \|_F \|\bh_i\bx_i^* - \PP_{T_i}(\A_i^*(\blambda)) \|_F  + \|\BH_{i,\TB_i}\|_* \left(1- \|\PP_{\TB_i}(\A_i^*(\blambda))\|\right)\Big) > 0.
\end{split}
\end{align}
From now on, we aim to show 
\begin{align*}
\begin{split}
\frac{1}{2r} \left(\sum_{i=1}^r \|\BH_{i,T_i}\|_F\right) & \leq \gamma \left(\sum_{i=1}^r \|\BH_{i,\TB_i}\|_F\right)  \leq \gamma \left(\sum_{i=1}^r \|\BH_{i, \TB_i}\|_*\right)
\end{split}
\end{align*}
in order to achieve~\eqref{cond:a}.
We start with $\sum_{i=1}^r\A_i(\BH_i)  = 0.$ Note that $\BH_{i} = \BH_{i, T_i} + \BH_{i, \TB_i}$, there holds
\begin{equation*}
\left\| \sum_{i=1}^r\A_i (\BH_{i,T_i})  \right\|_F  =  \left\|  \sum_{i=1}^r\A_i (\BH_{i,\TB_i})  \right\|_F.
\end{equation*}
It is easy to bound the quantity on the right hand side by $\|\A_i\| \leq \gamma$,
\begin{equation}\label{upperbd}
\left\|  \sum_{i=1}^r\A_i (\BH_{i,\TB_i})  \right\|_F \leq \gamma \left(\sum_{i=1}^r\|\BH_{i,\TB_i}\|_F\right).
\end{equation}
The difficulty is to establish the lower bound. 
\begin{align}
\left\| \sum_{i=1}^r\A_i(\BH_{i,T_i}) \right\|^2_F  & \quad \geq  \sum_{i=1}^r \| \A_i(\BH_{i,T_i})\|^2 + 2\sum_{j\neq k}\left\lag  \A_j(\BH_{j,T_j}),  \A_k(\BH_{k,T_k})\right\rag \nonumber \\
& \quad \geq  \frac{3}{4} \sum_{i=1}^r \|\BH_{i,T_i}\|_F^2 - 2\mu \sum_{j\neq k} \|\BH_{j,T_j}\|_F \|\BH_{k,T_k}\|_F \label{eq:low-2} \\
& \quad = 
\begin{bmatrix}
\|\BH_{1, T_1}\|_F \\
\vdots \\
\|\BH_{r, T_r}\|_F
\end{bmatrix}^*
\underbrace{\begin{bmatrix}
\frac{3}{4} & \cdots & -\mu \\
\vdots & & \ddots & \vdots \\
-\mu & \cdots & \frac{3}{4}
\end{bmatrix}}_{r\times r \text{ matrix}}
\begin{bmatrix}
\|\BH_{1, T_1}\|_F \\
\vdots \\
\|\BH_{r, T_r}\|_F
\end{bmatrix}, \nonumber 
\end{align}
where~\eqref{eq:low-2}  follows from~\eqref{cond:suffcond}. 
It is easy to see that the coefficient matrix inside the quadratic form has its smallest eigenvalue $\frac{3}{4} - (r-1)\mu \geq \frac{3}{4} - \frac{r-1}{4r} > \frac{1}{2}$ and all the other eigenvalues are $\mu+ \frac{3}{4}$.
Hence
\begin{equation}\label{lowbd}
\left\|\sum_{i=1}^r\A_i(\BH_{i,T_i})\right\|_F 
\geq \sqrt{ \frac{1}{2}\sum_{i=1}^r \left\|\BH_{i, T_i}\right\|_F^2 }  
\geq \frac{1}{2r} \sum_{i=1}^r \left\|\BH_{i,T_i}\right\|_F.
\end{equation}
Combining~\eqref{lowbd} with~\eqref{upperbd} leads to
\begin{equation}\label{nsp}
\frac{1}{2r} \sum_{i=1}^r \|\BH_{i,T_i}\|_F  \leq \gamma\sum_{i=1}^r\|\BH_{i,\TB_i}\|_F.
\end{equation}
$I_2$ in~\eqref{cond:a} has its lower bound as follows:
\begin{align}
I_2 & \geq \sum_{i=1}^r \Big(- \| \BH_{i,T_i} \|_F \|\bh_i\bx_i^* - \PP_{T_i}(\A_i^*(\blambda)) \|_F + \|\BH_{i,\TB_i}\|_F \left(1- \|\PP_{\TB_i}(\A_i^*(\blambda))\|\right)\Big) \label{eq:cond-a1} \\
& \geq -\alpha  \sum_{i=1}^r  \|\BH_{i, T_i}\|_F +  (1 - \beta) \sum_{i=1}^r \|\BH_{i,\TB_i}\|_F \label{eq:cond-a2} \\
& \geq -2r\gamma \alpha  \sum_{i=1}^r \|\BH_{i,\TB_i}\|_F  + (1 - \beta) \sum_{i=1}^r \|\BH_{i,\TB_i}\|_F \label{eq:cond-a3} \\
& \geq (-2r\gamma \alpha + (1 - \beta)) \sum_{i=1}^r \|\BH_{i,\TB_i}\|_F \geq 0, \nonumber
\end{align}
where~\eqref{eq:cond-a1} uses $\|\cdot\|_* \geq \|\cdot\|_F$, ~\eqref{eq:cond-a2}  follows from the assumption~\eqref{cond:suff2}, and ~\eqref{eq:cond-a3}  follows from~\eqref{nsp}.
Under the condition $-2r\gamma \alpha + (1 - \beta) > 0$, ~\eqref{cond:a} holds if at least one of  the terms $\|\BH_{i,\TB_i}\|_F$ is nonzero. If $\BH_{i,\TB_i} = \bzero$ for all $1\leq i\leq r$, then $\BH_i = \bzero$ via~\eqref{nsp}.
\end{proof}

\subsection{Local Isometry Property\label{s:local}}

Our goal in this subsection is to prove that the first assumption in~\eqref{cond:suffcond} of Lemma~\ref{lemma:suffcond} holds with high probability if $L$ is large enough. 
Instead of studying $\|\PP_{T_i}\A_i^*\A_i\PP_{T_i} - \PP_{T_i}\|$ directly, we will focus on the more general expression $\|\PP_{T_i}\A_{i,p}^*\A_{i,p}\BS_{i,p}\PP_{T_i} - \PP_{T_i}\|$, where $\A_{i,p}$ and $\BS_{i,p}$ are defined in~\eqref{def:AP} and~\eqref{def:S} respectively.

\subsubsection{An explicit formula for $\PP_{T_{i}} \A_{i,p}^{*}\A_{i,p} \BS_{i,p} \PP_{T_{i}} $}

For each fixed pair of $(i, p)$ where $1\leq i\leq r$ and $1\leq p\leq P$, the proof of $\|\PP_{T_i}\A_{i,p}^*\A_{i,p}\BS_{i,p}\PP_{T_i} - \PP_{T_i} \| \leq \frac{1}{4}$ is actually the same. 
Therefore, for simplicity of notation, we omit the subscript $i$ and denote $\PP_{T_i}\A_{i,p}^*\A_{i,p}\BS_{i,p}\PP_{T_i}$ by $\PP_{T}\A_{p}^*\A_{p}\BS_p\PP_{T}$ 
throughout the proof of Proposition~\ref{prop:lcisop}.
By definition, $\A_{p}\BS_{p} \PP_{T}(\BZ) = \{ \bb_{l}^*\BS_{p}\PP_T(\BZ)\ba_{l} \}_{l\in\Gamma_p}$ for any $\BZ\in\CC^{K\times N}$. Using~\eqref{cond:proj} gives us an explicit expression of $\bb_{l}^*\BS_{p}\PP_{T}(\BZ)\ba_{l}$, i.e.,
\begin{align*}
\bb_{l}^*\BS_{p}\PP_{T}(\BZ)\ba_{l}  
& =  \bb_{l}^*\BS_p \left[ \bh\bh^* \BZ + (\I_K - \bh\bh^*)\BZ \bx\bx^*\right]\ba_{l} \\
& =  \lag \BS_p \bh, \bb_{l} \rag  \bh^*\BZ\ba_{l} + \lag\ba_{l}, \bx \rag \bb_{l}^*\BS_p (\I_K - \bh\bh^*)\BZ\bx \\
& =  \bh^* \BZ\tbv_{l} + \tbu_{l}^* \BZ \bx, \quad l\in\Gamma_p,
\end{align*}
where $\PP_T(\BZ) = \bh\bh^* \BZ + (\I_K - \bh\bh^*)\BZ\bx\bx^*$ and both $\bh$ and $\bx$ are assumed to be real and of unit norm. Similarly,
\begin{equation*}
\bb_l^*\PP_T(\BZ)\ba_l = \bh^* \BZ\bv_{l} + \bu_{l}^* \BZ \bx, \quad l\in\Gamma_p
\end{equation*}
where 
\begin{eqnarray}\label{def:uv}
\begin{split}
\bv_{l} & :=  \lag \bh, \bb_{l}\rag \ba_{l}, \\
\bu_{l} & :=  \lag \ba_{l}, \bx\rag (\I_K - \bh\bh^*)\bb_{l}, \\
\tbv_{l} & :=  \lag \BS_p \bh, \bb_{l}\rag \ba_{l}, \\
\tbu_{l} & :=  \lag \ba_{l}, \bx\rag (\I_K - \bh\bh^*)\BS_p\bb_{l}.
\end{split}
\end{eqnarray}
Now we have
\begin{align*}
\A_p \PP_{T}(\BS_p\BZ)  = \{  \lag \BZ, \bh\tbv_{l}^* + \tbu_{l}\bx^*\rag \}_{l\in\Gamma_p}, \quad \PP_{T}\A_p^*(\bz)  = \sum_{l\in\Gamma_p} z_l (\bh\bv_{l}^* + \bu_{l}\bx^*).
\end{align*}
By combining the terms we arrive at
\begin{align}\label{eq:paap}
\begin{split}
\PP_{T}\A_{p}^*\A_{p}\BS_p \PP_{T}(\BZ) 
 & =  \sum_{l\in\Gamma_p} \Big( \bh\bh^* \BZ\tbv_{l}\bv_{l}^* 
+ \bh\tbu_{l}^* \BZ \bx \bv_{l}^* 
+ \bu_{l}\bh^*\BZ\tbv_{l} \bx^*
+ \bu_{l}\tbu_{l}^* \BZ \bx \bx^* \Big).
\end{split}
\end{align}
The explicit form of each component in this summation is
\begin{align*}
\bh\bh^*\BZ\tbv_l\bv_l^* & = \ol{\lag\bh, \bb_l \rag} \lag \BS_p\bh, \bb_l \rag \bh\bh^* \BZ \ba_l\ba_l^*, \\
\bh\tbu_l^*\BZ\bx\bv_l^* & = \ol{\lag \bh, \bb_l\rag} \bh \bb^*_l\BS_p(\I_K - \bh\bh^*)\BZ \bx \bx^*\ba_l \ba_l^*, \\
\bu_l\bh^*\BZ\tbv_l\bx^* & =  \lag \BS_p\bh, \bb_l\rag (\I_K - \bh\bh^*)\bb_l \bh^* \BZ \ba_l \ba_l^*\bx \bx^*, \\
\bu_l\tbu_l^* \BZ\bx\bx^* & = |\lag \ba_l,\bx\rag|^2 (\I_K - \bh\bh^*)\bb_l \bb_l^* \BS_p (\I_K - \bh\bh^*) \BZ\bx\bx^*.
\end{align*}

It is easy to compute the expectation of those random matrices by using $\E(\ba_l\ba_l^*) = \I_N$ and $\E |\lag \ba_l,\bx\rag|^2 = \|\bx\|^2 = 1.$ Our goal here is to estimate the operator norm of $\PP_{T}\A_{p}^*\A_{p}\BS_p \PP_{T}  - \PP_T$ which is the sum of four components, i.e., 
\begin{equation*}
\PP_{T}\A_{p}^*\A_{p}\BS_p \PP_{T}  - \PP_T = \sum_{s=1}^4\CM_s
\end{equation*}
where each $\CM_i$ is a random linear operator with zero mean. More precisely, each of $\CM_s$ is given by
\begin{align*}
&\CM_1(\BZ) := \sum_{l\in\Gamma_p} \ol{\lag\bh, \bb_l \rag} \lag \BS_p\bh, \bb_l \rag \bh\bh^* \BZ (\ba_l\ba_l^* - \I_N),  \\ 
&\CM_2(\BZ) := \sum_{l\in\Gamma_p}  \ol{\lag \bh, \bb_l\rag} \bh \bb^*_l\BS_p(\I_K - \bh\bh^*)\BZ \bx \bx^*( \ba_l \ba_l^* - \I_N),  \\
&\CM_3(\BZ) := \sum_{l\in\Gamma_p}  \lag \BS_p\bh, \bb_l\rag (\I_K - \bh\bh^*)\bb_l \bh^* \BZ (\ba_l \ba_l^* - \I_N)\bx \bx^*,  \\
&\CM_4(\BZ) :=  \sum_{l\in\Gamma_p}  (|\lag \ba_l,\bx\rag|^2 - 1)(\I_K - \bh\bh^*)\bb_l \bb_l^* \BS_p (\I_K - \bh\bh^*) \BZ\bx\bx^*.  
\end{align*}
Each $\CM_{s}$ can be treated as a $KN\times KN$ matrix because 
it is a linear operator from $\CC^{K\times N}$ to $\CC^{K\times N}$.

\begin{proposition}\label{prop:lcisop}
Under the assumption of~\eqref{cond:i} and~\eqref{cond:iso} and that $\{\ba_{i,l}\}$ are standard Gaussian random vectors of length $N_i,$
\begin{equation}\label{eq:lcisop}
\|\PP_{T_i}\A_{i,p}^*\A_{i,p}\BS_{i,p}\PP_{T_i} - \PP_{T_i}\| \leq \frac{ 1}{4},\quad 1\leq i\leq r, 1\leq p\leq P
\end{equation}
holds simultaneously  with probability at least $1 - L^{-\alpha + 1}$ if $Q\geq C_{\alpha+\log(r)}\max\{\mum^2K, \mu^2_h N\}\log^2 L)$ where $K := \max K_i$ and $N := \max N_i.$ 
\end{proposition}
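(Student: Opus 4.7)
The plan is to apply a sub-exponential matrix Bernstein inequality (e.g.\ Proposition~2 of~\cite{RR12}, which is a variant of Tropp's user-friendly tail bounds) to the four-term decomposition
$$\PP_{T_i}\A_{i,p}^*\A_{i,p}\BS_{i,p}\PP_{T_i} - \PP_{T_i} = \sum_{s=1}^4 \CM_s$$
already derived above, and then to take a union bound over $s\in\{1,2,3,4\}$, over the $r$ sources, and over the $P$ partition blocks. For fixed $(i,p)$, each $\CM_s = \sum_{l\in\Gamma_p}\CM_{s,l}$ is a sum of $Q$ independent, mean-zero random operators on $\CC^{K_i\times N_i}$ (viewed as $K_iN_i\times K_iN_i$ matrices after vectorization); the zero-mean property is immediate from $\E(\ba_{i,l}\ba_{i,l}^*-\I_{N_i})=0$ and $\E(|\langle\ba_{i,l},\bx_i\rangle|^2-1)=0$, which is precisely the point of inserting the correction $\BS_{i,p}$ as in~\eqref{def:S}.

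For each $s$, I would extract two ingredients. First, a uniform bound $R_s$ on the Orlicz-$\psi_1$ norm of $\|\CM_{s,l}\|$: writing out each summand as a rank-one tensor and using the incoherence estimates~\eqref{cond:i} together with $\|\BS_{i,p}\|\le 4L/(3Q)$ from~\eqref{cond:ST}, one obtains for $\CM_1$ the bound
$$\|\CM_{1,l}\| \le |\langle\bh_i,\bb_{i,l}\rangle|\cdot|\langle\BS_{i,p}\bh_i,\bb_{i,l}\rangle|\cdot\|\ba_{i,l}\ba_{i,l}^*-\I_{N_i}\|,$$
and hence (using $\|\ba\ba^*-\I\|_{\psi_1}\lesssim N_i$) $R_1\lesssim \mu_h^2 N/Q$; analogously $R_4\lesssim \mu_{\max}^2 K/Q$ from $\|\bb_{i,l}\|^2\|\BS_{i,p}\|\lesssim\mu_{\max}^2K/Q$ and $\||\langle\ba,\bx\rangle|^2-1\|_{\psi_1}=O(1)$, with $R_2$ and $R_3$ handled by an essentially identical argument. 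Second, a bound $\sigma_s^2$ on the matrix variance
$\max\{\|\sum_l\E[\CM_{s,l}\CM_{s,l}^*]\|,\|\sum_l\E[\CM_{s,l}^*\CM_{s,l}]\|\}$; here one expands each quadratic form using the Gaussian fourth-moment identity and then collapses the $l$-sum via $\sum_{l\in\Gamma_p}\bb_{i,l}\bb_{i,l}^* = \BT_{i,p}$ and $\|\BT_{i,p}\|\le 5Q/(4L)$, arriving at $\sigma_s^2\lesssim\max\{\mu_{\max}^2K,\mu_h^2 N\}/Q$ in every case.

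Plugging these into the matrix Bernstein inequality yields
$$\Pr\bigl(\|\CM_s\|\ge t\bigr)\;\le\;2K_iN_i\exp\!\Big(\!-c\min\bigl\{t^2/\sigma_s^2,\;t/(R_s\log(K_iN_i))\bigr\}\Big).$$
Setting $t=1/16$ and requiring $Q\ge C_{\alpha+\log r}\max\{\mu_{\max}^2K,\mu_h^2 N\}\log^2 L$ forces each of the four exponents to exceed $(\alpha+\log r)\log L$, so a union bound over the $4rP\le 4rL$ events (using $P\le L$) gives the simultaneous bound $\|\CM_s\|\le 1/16$ for every $(s,i,p)$ with probability at least $1-L^{-\alpha+1}$. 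The triangle inequality then produces~\eqref{eq:lcisop}. The main obstacle I expect is the variance computation for $\CM_2$ and $\CM_3$, where the left and right factors of the summands live in different subspaces ($\bh_i\bh_i^*$ versus $(\I-\bh_i\bh_i^*)$) and one must exploit the projector structure together with the common-partition hypothesis~\eqref{cond:iso} to prevent an unwanted $L/Q$-type blow-up in the variance; the Orlicz-norm bounds and the final Bernstein step are essentially mechanical once the variances are in hand.
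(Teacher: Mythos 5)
Your proposal is correct and takes essentially the same approach as the paper: the same four-term zero-mean decomposition $\sum_{s=1}^4\CM_s$, the same sub-exponential matrix Bernstein inequality with $\psi_1$-norm bound $R_s\lesssim \max\{\mum^2K,\mu_h^2N\}/Q$ and variance $\sigma_s^2\lesssim \max\{\mum^2K,\mu_h^2N\}/Q$ derived from~\eqref{cond:i} and~\eqref{cond:ST}, and the same union bound over $(s,i,p)$ absorbed by replacing $\alpha$ with $\alpha+\log r$. The variance step you flag as the main obstacle (for $\CM_2$, $\CM_3$) is resolved in the paper exactly as you anticipate: the $l$-sum collapses via $\sum_{l\in\Gamma_p}\bb_{i,l}\bb_{i,l}^*=\BT_{i,p}$ together with $\|\BT_{i,p}\|\le 5Q/(4L)$ and $\|\BS_{i,p}\|\le 4L/(3Q)$, so no $L/Q$ blow-up occurs.
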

The following corollary, which is a special case of Proposition~\ref{prop:lcisop} (simply set $Q=L$ and $\BS_{i,p} = \I_{K_i}$), indicates the first condition in~\eqref{cond:suffcond} holds with high probability. 
\begin{corollary}\label{eq:lciso}
Under the assumption of~\eqref{cond:i} and~\eqref{cond:iso} and that $\{\ba_{i,l}\}$ are standard Gaussian random vectors of length $N_i,$
\begin{equation}
\|\PP_{T_i}\A_{i}^*\A_{i}\PP_{T_i} - \PP_{T_i}\| \leq \frac{ 1}{4},\quad 1\leq i\leq r
\end{equation}
holds  with probability at least $1 - L^{-\alpha+1}$ if $L\geq C_{\alpha+\log(r)}\max\{\mum^2K, \mu^2_h N\}\log^2 L$ where $K := \max K_i$ and $N := \max N_i.$ 
\end{corollary}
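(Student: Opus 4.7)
\textbf{Reduction to Proposition \ref{prop:lcisop}.} The corollary is the specialization $P=1$, $\Gamma_1 = \{1,\dots,L\}$ (so $Q=L$) of Proposition \ref{prop:lcisop}. Under this choice the hypothesis $\BB_i^{*}\BB_i = \I_{K_i}$ gives $\BT_{i,1} = \sum_{l=1}^{L} \bb_{i,l}\bb_{i,l}^{*} = \I_{K_i}$, which trivially satisfies \eqref{cond:iso}, and hence $\BS_{i,1} = \I_{K_i}$. The operator $\PP_{T_i}\A_{i,1}^{*}\A_{i,1}\BS_{i,1}\PP_{T_i}$ therefore coincides with $\PP_{T_i}\A_i^{*}\A_i\PP_{T_i}$ and \eqref{eq:lcisop} becomes the desired bound. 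So the entire content sits in the Proposition.

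\textbf{Plan for the Proposition.} With $i$ fixed and the $i$-subscript suppressed, I would work from the four-term decomposition $\PP_{T}\A_{p}^{*}\A_{p}\BS_{p}\PP_{T} - \PP_{T} = \sum_{s=1}^{4} \CM_{s}$ already displayed in the excerpt. Each $\CM_{s}$ is a sum over $l \in \Gamma_{p}$ of independent, mean-zero random linear operators on $\CC^{K\times N}$ (acting by left and right multiplication by rank-one factors). It suffices to show $\|\CM_{s}\| \leq \tfrac{1}{16}$ for each $s$ with probability $1 - {\cal O}(L^{-\alpha-\log r})$; a union bound over the $rP \leq rL$ pairs $(i,p)$ and the four summands then yields the simultaneous statement and accounts for the $\log r$ in $C_{\alpha+\log r}$. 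The natural tool for each $\CM_s$ is an operator/matrix Bernstein inequality for sub-exponential summands, requiring a variance proxy $\sigma_s^2$ (operator norms of $\E\sum_l \CM_{s,l}\CM_{s,l}^{*}$ and its adjoint) and a tail bound $R_s$ on $\|\CM_{s,l}\|$, the latter after truncating the Gaussian factors at the $\sqrt{\alpha\log L}$ level.

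\textbf{How the incoherence parameters enter.} The deterministic $\bb_l$-factors are controlled via \eqref{cond:i}: the coefficient $\overline{\langle\bh,\bb_l\rangle}\langle\BS_p\bh,\bb_l\rangle$ in $\CM_1$ is at most $\mu_h^{2}/Q$ in modulus; the terms $(\I_K-\bh\bh^{*})\bb_l$ and $\BS_p(\I_K-\bh\bh^{*})\bb_l$ in $\CM_2,\CM_3,\CM_4$ have norm bounded by $\sqrt{\mu_{\max}^{2}K/L}$ and $\tfrac{4}{3}\sqrt{L\mu_{\max}^{2}K}/Q$ respectively. The Gaussian factors $\ba_l\ba_l^{*}-\I_N$ and $|\langle\ba_l,\bx\rangle|^{2}-1$ contribute $\psi_1$ norms of order $N$ and $1$ respectively. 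Combining these through Bernstein produces $\sigma_s^{2} \lesssim \max\{\mu_{\max}^{2}K,\mu_h^{2}N\}/Q$ and $R_s \lesssim \max\{\mu_{\max}^{2}K,\mu_h^{2}N\}\log L/Q$, from which the sample-size requirement $Q \gtrsim (\alpha+\log r)\max\{\mu_{\max}^{2}K,\mu_h^{2}N\}\log^{2}L$ follows for all four $s$. Setting $Q=L$ recovers the Corollary.

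\textbf{Main obstacle.} The main technical difficulty is not the Bernstein step per se but the careful $\psi_1$-moment bookkeeping for the three most ``quadratic-in-Gaussian'' operators $\CM_2, \CM_3, \CM_4$, where a naive operator-norm bound per summand loses either a factor of $K$ or a factor of $\log L$. Threading the correct dependence on $\mu_{\max}^{2}$ versus $\mu_h^{2}$ requires invoking $\|\BS_p\|\leq 4L/(3Q)$ from \eqref{cond:ST} together with \emph{both} clauses of the definition \eqref{def:muh} of $\mu_h^{2}$ at the right places, exactly as flagged in Remark~\ref{remark_lemma}. Once these per-summand estimates are in hand, the remainder of the argument (centering, Bernstein, union bound, rescaling from $\tfrac{1}{16}$ to $\tfrac14$) is routine.
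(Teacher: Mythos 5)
Your reduction is exactly the paper's: the authors obtain the corollary from Proposition~\ref{prop:lcisop} by setting $Q=L$ and $\BS_{i,p}=\I_{K_i}$ (so that $\BT_{i,1}=\BB_i^*\BB_i=\I_{K_i}$ makes \eqref{cond:iso} trivial), and your sketch of the proposition's proof — the four-term decomposition into the $\CM_s$, per-summand $\psi_1$ estimates using \eqref{cond:i} and \eqref{cond:ST}, matrix Bernstein, and a union bound over $(i,p)$ absorbed into $C_{\alpha+\log r}$ — is the argument of Lemma~\ref{lem:CM1-ISO} and Proposition~\ref{prop:lcisop}. The only cosmetic difference is that the paper applies the sub-exponential Bernstein inequality directly via $\|\cdot\|_{\psi_1}$ rather than truncating the Gaussian factors, which changes nothing substantive.
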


\begin{remark}
Although Proposition~\ref{prop:lcisop} and Corollary~\ref{eq:lciso} are quite similar to Lemma 3 in~\cite{RR12} at the first glance, we include $\BS_{i,p}$ and the new definition of $\muh^2$ in our result. The purpose is to resolve the issue mentioned in Remark~\ref{remark_lemma} by making $\E(\PP_{T_i}\A_{i,p}^*\A_{i,p}\BS_{i,p}\PP_{T_i}) = \PP_{T_i}.$ Therefore we would prefer to rewrite the proof for the sake of completeness in our presentation, although the main tools are quite alike. 
\end{remark}

The proof of Proposition~\ref{prop:lcisop} is given as follows.
\begin{proof}
To prove Proposition~\ref{prop:lcisop}, it suffices to show that  $\|\CM_s\| \leq \frac{1}{16}$ for $1\leq s\leq 4$ and then take the union bound over all $1\leq p\leq P$ and $1\leq i\leq r$. For each fixed pair of $(p,i)$, it is shown in Lemma~\ref{lem:CM1-ISO} that
\begin{equation*}
\|\PP_{T_i}\A_{i,p}^*\A_{i,p}\BS_{i,p}\PP_{T_i} - \PP_{T_i}\| \leq \frac{ 1}{4}
\end{equation*}
with probability at least $1 - 4L^{-\alpha}$ if $Q \geq C_{\alpha}\max\{\mum^2K_i, \mu^2_h N_i\}\log^2 L$.  Now  we simply take the union bound over all $1\leq p\leq P$ and $1\leq i\leq r$ and obtain
\begin{align*}
& \Pr\left( \|\PP_{T_i}\A_{i,p}^*\A_{i,p}\BS_{i,p}\PP_{T_i} - \PP_{T_i}\| \leq \frac{ 1}{4}, \forall 1\leq i\leq r, 1\leq p\leq P \right) \geq 1 - 4P r L^{-\alpha} \geq 1 - 4rL^{-\alpha + 1}
\end{align*}
if $Q \geq C_{\alpha}\max\{\mum^2K, \mu^2_h N\}\log^2 L$ where there are $Pr$ events and $L = PQ$. In order to compensate for the loss of probability due to the union bound and to make the probability of success at least $1 - L^{-\alpha +1}$, we can just choose $\alpha' = \alpha + \log r$, or equivalently,
$Q \geq C_{\alpha+\log(r)}\max\{\mum^2K, \mu^2_h N\}\log^2 L$. 
\end{proof}

\subsubsection{Estimation of $\|\CM_s\|$}

\begin{lemma}\label{lem:CM1-ISO}
Under the assumptions of~\eqref{cond:i},~\eqref{cond:iso} and~\eqref{cond:ST} and that
$\ba_l\sim\mathcal{N}(\bzero, \I_N)$ independently, the estimate
\begin{equation*}
\|\CM_s\| \leq \frac{1}{16} \qquad \text{for $s =1,2,3,4$,}
\end{equation*}
holds with probability at least $1 - L^{-\alpha}$ if $Q\geq C_{\alpha} \max\{\mu^2_{\max}K, \mu^2_h N\} \log^2 (L)$.
\end{lemma}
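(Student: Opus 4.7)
The plan is to realize each $\CM_s$ as a sum $\sum_{l\in\Gamma_p}\CZ_{s,l}$ of independent, mean-zero random linear operators on $\CC^{K_i\times N_i}$ (viewed as $(K_iN_i)\times(K_iN_i)$ matrices via vectorization) and apply a matrix Bernstein inequality to each of the four sums separately. Because the randomness enters through the unbounded quantity $\ba_l\ba_l^*-\I_N$ (for $s=1,2,3$) or the sub-exponential scalar $|\lag\ba_l,\bx\rag|^2-1$ (for $s=4$), the summands are not almost surely bounded, so I would first truncate $\ba_l$ at level $R_0\asymp\sqrt{N+\alpha\log L}$. Off the truncation event, which has total probability at most $\mathcal{O}(L^{-\alpha})$, each summand becomes bounded, and one applies standard matrix Bernstein to the truncated sum, then argues that the replacement of truncated by untruncated summands only perturbs the sum by at most $\mathcal O(L^{-\alpha})$ in probability.

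The core computation is, for each $s$, a uniform summand bound $R_s$ and a variance proxy $\sigma_s^2 = \max\bigl(\|\sum_l \E\,\CZ_{s,l}\CZ_{s,l}^*\|,\|\sum_l \E\,\CZ_{s,l}^*\CZ_{s,l}\|\bigr)$. The inputs are the incoherence estimates in~\eqref{cond:i}, namely $|\lag\bh_i,\bb_{i,l}\rag|^2\leq \mu_h^2/L$, $|\lag\BS_{i,p}\bh_i,\bb_{i,l}\rag|^2\leq L\mu_h^2/Q^2$, $\|\bb_{i,l}\|^2\leq \mum^2 K/L$, together with $\|\BS_{i,p}\|\leq 4L/(3Q)$ and the partition identity $\sum_{l\in\Gamma_p}\bb_{i,l}\bb_{i,l}^* = \BT_{i,p}\preceq (5Q/4L)\I_{K_i}$ from~\eqref{cond:ST}. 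For $\CM_1$, the rank-one projector $\bh\bh^*$ and the scalar prefactor $\lag\bh,\bb_l\rag\lag\BS_p\bh,\bb_l\rag$ combine with the truncated bound $\|\ba_l\ba_l^*-\I_N\|\lesssim N+\alpha\log L$ to give $R_1\lesssim \mu_h^2(N+\alpha\log L)/Q$; the variance, obtained using the Gaussian identity $\E[(\ba_l\ba_l^*-\I_N)\BZ(\ba_l\ba_l^*-\I_N)] = \|\BZ\|_F^2\I_N + \BZ\BZ^T$ on the ``$\ba$-side'' and the partition identity on the ``$\bb$-side,'' is $\sigma_1^2\lesssim \mu_h^2 N/Q$. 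Matrix Bernstein then delivers $\|\CM_1\|\leq 1/16$ with probability at least $1-L^{-\alpha}$ provided $Q\gtrsim \max(\sigma_1^2,R_1)\log L$, which reduces to $Q\gtrsim C_\alpha \max\{\mum^2 K,\mu_h^2 N\}\log^2 L$.

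The analyses of $\CM_2$ and $\CM_3$ are mirror images of $\CM_1$ with $\bh\bh^*$ replaced by $\I_{K_i}-\bh\bh^*$ on one side and a $\bx\bx^*$ factor on the other; both summand and variance estimates are controlled by the same incoherence parameters and yield the same threshold. For $\CM_4$, the randomness is the sub-exponential scalar $|\lag\ba_l,\bx\rag|^2-1$ (of sub-exponential norm $\mathcal{O}(1)$, giving $\mathcal{O}(\log L)$ after truncation), while the deterministic part $(\I-\bh\bh^*)\bb_l\bb_l^*\BS_p(\I-\bh\bh^*)\otimes \bx\bx^*$ contributes $R_4\lesssim \mum^2 K\log L/Q$ and, using $\sum_l\bb_l\bb_l^*\BS_p\bb_l\bb_l^*\preceq (\mum^2K/L)\BT_p$, $\sigma_4^2\lesssim \mum^2 K/Q$; this again produces the same threshold. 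A final union bound over $s=1,2,3,4$ loses only a constant factor. The main technical obstacle, in my view, is the variance computation for $\CM_1$--$\CM_3$: one must vectorize the operator, identify the correct Hilbert–Schmidt structure on $\CC^{K_i\times N_i}$, and apply the Wick-type identity above without double-counting the rank-one factors supplied by $\bh\bh^*$ and $\bx\bx^*$. Everything else is careful bookkeeping of the incoherence constants $\mum^2$ and $\mu_h^2$ and their interplay with $\BS_{i,p}$.
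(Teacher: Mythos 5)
Your proposal follows essentially the same route as the paper: decompose each $\CM_s$ into a sum of independent mean-zero rank-structured operators over $l\in\Gamma_p$, bound the summands and the variance proxy using the incoherence estimates~\eqref{cond:i},~\eqref{cond:ST} together with the Wick-type Gaussian identities, and invoke matrix Bernstein to reach the threshold $Q\gtrsim C_\alpha\max\{\mum^2K,\mu_h^2N\}\log^2L$. The one place you diverge is the concentration tool: the paper uses a Bernstein inequality for sub-exponential summands (Theorem~\ref{BernGaussian}, stated in terms of the Orlicz norm $\|\cdot\|_{\psi_1}$), which lets it bound $\max_l\|\CZ_l\|_{\psi_1}$ directly and skip your truncation step entirely --- your route works but forces you to re-center the truncated summands and control the resulting bias, which the $\psi_1$ version spares you. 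One small correction: the identity you quote, $\E[(\ba_l\ba_l^*-\I_N)\BZ(\ba_l\ba_l^*-\I_N)]=\|\BZ\|_F^2\I_N+\BZ\BZ^T$, is not right as written (it is not even linear in $\BZ$); the version actually needed, and proved as~\eqref{lem:12JB}, is $\E[(\bu\bu^*-\I_n)\bq\bq^*(\bu\bu^*-\I_n)]=\|\bq\|^2\I_n+\bar{\bq}\bar{\bq}^*$ for the rank-one case, which is all that arises here.
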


\begin{proof}
We will only prove the bound for $\|\CM_2\|$. The estimation of $\|\CM_1\|$, $\|\CM_2\|$ and $\|\CM_4\|$ can be done similarly.

By definition of $\CM_2$, 
\begin{equation*}
\CM_2(\BZ) :=  \sum_{l\in\Gamma_p} \CZ_l(\BZ), 
\end{equation*}
where 
\begin{equation*}
\CZ_l(\BZ) = \ol{\lag \bh, \bb_l\rag} \bh \bb^*_l\BS_p(\I_K - \bh\bh^*)\BZ \bx \bx^*(\ba_l \ba_l^* - \I_N).
\end{equation*}
Immediately, we have $\|\CZ_{l}\| \leq  \| \ol{\lag \bh, \bb_l\rag} \bh \bb^*_l\BS_p \| \cdot \|(\ba_l \ba_l^* - \I_N)\|$ and $\CZ_l$ is actually a $KN\times KN$ matrix. Then we  estimate $\|\CZ_l\|_{\psi_1}$ as follows:
\begin{eqnarray*}
\|\CZ_l\|_{\psi_1} & \leq  & |\lag \bh, \bb_l\rag | \| \bh\bb_l^*\BS_p \| \cdot \| (\ba_l\ba_l^*- \I_N)\bx\|_{\psi_1} \\
& \leq & \frac{\mu_h}{\sqrt{L}} \cdot \frac{4L}{3Q} \frac{\mu_{\max}\sqrt{K}}{\sqrt{L}} \cdot \| (\ba_l\ba_l^*- \I_N)\bx\|_{\psi_1} \\
& \leq & C\frac{\mu_{\max}\mu_h\sqrt{KN}}{Q} \\
& \leq & C\frac{\max\{ \mum^2K, \mu^2_h N\}}{Q},
\end{eqnarray*}
where $\| (\ba_l\ba_l^* - \I_N)\bx \|_{\psi_1} \leq C\sqrt{N}$ follows from~\eqref{lem:11JB}. Therefore we have $R:=\max_{l\in\Gamma_p} \|\CZ_l\|_{\psi_1} \leq C\frac{ \max\{ \mu^2_{\max}K, \mu^2_hN\} }{Q}.$ 
Now we proceed to estimate $\sigma^2.$ By definition, the adjoint of $\CZ_l$ is in the form of
\begin{equation*}
\CZ_l^*(\BZ) = \lag \bh, \bb_l\rag(\I_K - \bh\bh^*)\BS_p \bb_l\bh^*\BZ (\ba_l \ba_l^* - \I_N)\bx \bx^*.
\end{equation*}
Then $\CZ^*\CZ$ and $\CZ\CZ^*$ are easily obtained by definition, 
\begin{align*}
\CZ_l^*\CZ_l(\BZ) & = |\lag \bh, \bb_l\rag|^2 (\I_K - \bh\bh^*)\BS_p \bb_l\bb_l^*\BS_p  (\I_K - \bh\bh^*)\BZ \bx\bx^*(\ba_l \ba_l^* - \I_N)^2\bx\bx^*
\end{align*}
and 
\begin{align*}
\CZ_l\CZ_l^*(\BZ) 
& = |\lag \bh, \bb_l\rag|^2\bb_l^*\BS_p(\I_K - \bh\bh^*)\BS_p \bb_l \bh\bh^*\BZ (\ba_l \ba_l^* - \I_N)\bx \bx^* (\ba_l\ba_l^* - \I_N).
\end{align*}
The expectation of $\CZ_l^*\CZ_l$ and $\CZ_l\CZ_l^*$ are computed via
\begin{align*}
\E(\CZ_l^*\CZ_l(\BZ)) 
& =  (N +1) |\lag \bh, \bb_l\rag|^2  (\I_K - \bh\bh^*)\BS_p \bb_l\bb_l^*\BS_p(\I_K - \bh\bh^*)\BZ \bx\bx^*
\end{align*}
where $\E(\ba_l\ba_l^* - \I_N)^2 = (N+1)\I_N$ follows from~\eqref{lem:9JB}. Similarly,
\begin{align*}
\E(\CZ_l\CZ_l^*(\BZ)) 
& =  |\lag \bh, \bb_l\rag|^2 \bb_l^*\BS_p(\I_K - \bh\bh^*)\BS_p \bb_l  \bh\bh^*\BZ (\I_N + \bx\bx^*) 
\end{align*}
where $\E[(\ba_l\ba_l^* - \I_N)\bx\bx^*(\ba_l\ba_l^* - \I_N)] = \|\bx\|^2 \I_N + \bx\bx^*$ from~\eqref{lem:12JB} and the fact that $\bx$ is real.
Taking the sum of $\E(\CZ^*_l\CZ_l)$ and $\E(\CZ_l\CZ_l^*)$ over $l\in\Gamma_p$ gives
\begin{align*}
\left\| \sum_{l\in\Gamma_p} \E(\CZ_l^*\CZ_l) \right\| 
& = (N + 1) \left\| \sum_{l\in\Gamma_p} |\lag \bh, \bb_l\rag|^2 \BS_p \bb_l\bb_l^*\BS_p\right\| \\
& \leq \frac{2\mu^2_h N}{L} \| \BS_p \|  \leq \frac{2\mu^2_h N}{L} \cdot \frac{4L}{3Q} =  \frac{8\mu^2_h N}{3Q}
\end{align*}
and
\begin{align*}
 \left\|\sum_{l\in\Gamma_p} \E(\CZ_l\CZ_l^*)\right\| & \leq  2\left\| \sum_{l\in\Gamma_p} |\lag \bh, \bb_l\rag|^2 \bb_l^*\BS_p(\I_K - \bh\bh^*)\BS_p \bb_l \right\| \\
& \leq  2\max_{l\in\Gamma_p}\{ \bb_l^*\BS_p(\I_K - \bh\bh^*)\BS_p \bb_l  \} \sum_{l\in\Gamma_p} |\lag \bh, \bb_l\rag|^2 \\ 
& \leq  2\left[\|\BS_{p}\|^{2} \max_{l\in\Gamma_{p}}\|\bb_{l}\|^{2}\right] \cdot \|\BT_p\| \\
& \leq  \frac{32L^{2}}{9Q^{2}} \cdot \frac{\mum^2K}{L} \cdot  \frac{5Q}{4L}\\
&  \leq  \frac{40 \mum^2K}{9Q}.
\end{align*}
Thus the variance $\sigma^2$ is bounded above by
\begin{equation*}
\sigma^2 \leq C \frac{\max\{ \mum^2K, \mu^2_h N\}}{Q}
\end{equation*}
and $\log\left(\frac{\sqrt{Q}R}{\sigma}\right) \leq C_1\log L$ for some constant $C_1$.
Then we just use~\eqref{thm:bern} to estimate the deviation of $\CM_2$ from $\bzero$ by choosing $t = \alpha\log L$. Setting $Q\geq C_{\alpha} \max\{\mu^2_{\max}K, \mu^2_h N\} \log^2L/\delta^2$ gives us
\begin{align*}
\|\CM_2\| & \leq C\max\Big\{ \sqrt{ \frac{\max\{\mum^2K, \mu^2_h N\}}{Q} (\alpha+2)\log L },  \frac{\max\{\mum^2K, \mu^2_h N\}}{Q} (\alpha+2)\log^2 L  \Big\}  \leq \delta
\end{align*}
where $K$ and $N$ are properly assumed to be smaller than $L.$
In particular, we take $\delta = \frac{1}{16}$  and have
\begin{equation*}
\|\CM_2\| \leq \frac{1}{16}
\end{equation*}
with the probability at least $1 - L^{-\alpha}.$
\end{proof}

\subsection{Proof of $\mu \leq \frac{1}{4r}$ \label{s:incoherence}}

In this section, we aim to show that $\mu \leq \frac{1}{4r}$, where $\mu$ is defined in~\eqref{def:mu}, i.e., the second condition in~\eqref{cond:suffcond} holds with high probability.  The main idea here is first to show that a more general and stronger version of incoherent property, 
\begin{equation*}
\| \PP_{T_j}\A_{j,p}^*\A_{k,p}\BS_{k,p}\PP_{T_k} \| \leq \frac{1}{4r}
\end{equation*}
 holds with high probability for any $1\leq p\leq P$ and $j\neq k$. Since  the derivation is essentially the same for all different pairs of $(j,k)$ with $j\neq k$, without loss of generality, we take $j = 1$ and $k=2$ as an example throughout this section. We finish the proof by taking the union bound over all possible sets of $(j,k,p).$

\subsubsection{An explicit expression for $\PP_{T_2}\A_{2,p}^*\A_{1,p}\BS_{1,p}\PP_{T_1}$}

Following the same procedures as the previous section,  we have explicit expressions for $\A_{1,p}\PP_{T_1}$ and $\PP_{T_2}\A_{2,p}^*$,
\begin{align*}
\A_{1,p}\BS_{1,p}\PP_{T_1}(\BZ) & = \{  \lag \BZ, \bh_1\tbv_{1,l}^* + \tbu_{1,l}\bx_1^*\rag \}_{l\in\Gamma_p}, \\
 \PP_{T_2}\A_{2,p}^*(\bz) & = \sum_{l\in\Gamma_p} z_l (\bh_2\bv_{2,l}^* + \bu_{2,l}\bx_2^*).
\end{align*}
where $\tbu_{1,l}$, $\tbv_{1,l}$, $\bu_{2,l}$ and $\bv_{2,l}$ are defined in~\eqref{def:uv} except the notation, where we omit subscript $i$ in the previous section. 
By combining $\PP_{T_2}\A_{2,p}^*$ and $\A_{1,p}\BS_{1,p}\PP_{T_1}$, we arrive at
\begin{align}\label{eq:paasp}
\begin{split}
\PP_{T_2}\A_{2,p}^*\A_{1,p}\BS_{1,p}\PP_{T_1}(\BZ) =    \sum_{l\in\Gamma_p} \Big( \bh_2\bh_1^*\BZ\tbv_{1,l}\bv_{2,l}^* + \bh_2\tbu_{1,l}^* \BZ \bx_1 \bv_{2,l}^* + \bu_{2,l}\bh_1^*\BZ\tbv_{1,l} \bx_{2}^*
+ \bu_{2,l}\tbu_{1,l}^* \BZ \bx_1 \bx_2^* \Big).
\end{split}
\end{align}
Note that the expectations of all terms are equal to $\bzero$ because $\{ \bu_{1,l}, \bv_{1,l} \}$ is independent
of $\{ \bu_{2,l}, \bv_{2,l} \}$ and both $\bu_{i,l}$ and $\bv_{i,l}$ have zero mean. Define $\CM_{s, \mix}$ as
\begin{eqnarray*}
\begin{split}
\CM_{1,\mix}(\BZ) & := \sum_{l\in\Gamma_p}\bh_2\bh_1^*\BZ \tbv_{1,l}\bv_{2,l}^* =  \sum_{l\in\Gamma_p}\lag \BS_{1,p}\bh_1, \bb_{1,l}\rag \ol{\lag \bh_2, \bb_{2,l}\rag} \bh_2\bh_1^*\BZ
\ba_{1,l}\ba_{2,l}^*, \\ 
\CM_{2,\mix}(\BZ) & := \sum_{l\in\Gamma_p} \bh_2\tbu_{1,l}^*\BZ\bx_1\bv_{2,l}^* =   \sum_{l\in\Gamma_p}\lag \ba_{1,l}, \bx_1\rag \ol{\lag \bh_2, \bb_{2,l}\rag} \bh_2 \bb_{1,l}^* \BS_{1,p}(\I_{K_1} -
\bh_1\bh_1^*)\BZ\bx_1 \ba_{2,l}^*, \\ 
 \CM_{3,\mix}(\BZ) & := \sum_{l\in\Gamma_p} \bu_{2,l}\bh_1^*\BZ\tbv_{1,l}\bx_2^* =   \sum_{l\in\Gamma_p}\lag \ba_{2,l}, \bx_2 \rag \lag \BS_{1,p}\bh_1, \bb_{1,l}\rag (\I_{K_2} - \bh_2\bh_2^*)\bb_{2,l}
\bh_1^* \BZ \ba_{1,l} \bx_2^*, \\ 
\CM_{4,\mix}(\BZ) &:= \sum_{l\in\Gamma_p} \bu_{2,l}\tbu_{1,l}^* \BZ \bx_1 \bx_2^* =   \sum_{l\in\Gamma_p}\Big(\lag \ba_{1,l}, \bx_1\rag \lag\ba_{2,l}, \bx_2 \rag (\I_{K_2} -
\bh_2\bh_2^*)\bb_{2,l}\bb_{1,l}^* \BS_{1,p}(\I_{K_1} - \bh_1\bh_1^*) \BZ \bx_1\bx_2^*\Big), 
\end{split}
\end{eqnarray*}
and there holds
\begin{equation*}
\PP_{T_2}\A_{2,p}^*\A_{1,p}\BS_{1,p}\PP_{T_1}  = \sum_{s=1}^4 \CM_{s, \mix}.
\end{equation*}
Each $\CM_{s,\mix}$ can be treated as a $K_2N_2\times K_1N_1$ matrix because 
it is a linear operator from $\CC^{K_1\times N_1}$ to $\CC^{K_2\times N_2}$.

\begin{proposition}\label{prop:mix}
Under the assumption of~\eqref{cond:i} and~\eqref{cond:iso} and that $\{\ba_{i,l}\}$ are standard Gaussian random vectors of length $N_i,$
\begin{equation}\label{eq:mix}
\|\PP_{T_j}\A_{j,p}^*\A_{k,p}\BS_{k,p}\PP_{T_k} \| \leq \frac{ 1}{4r },\quad 1\leq j\neq k\leq r, 1\leq p\leq P
\end{equation}
holds  with probability at least $1 - L^{-\alpha + 1}$ if $Q\geq C_{\alpha+ \log(r)}r^2 \max\{\mum^2K, \mu^2_h N\}\log^2 L$ where $K := \max K_i$ and $N := \max N_i.$ 
\end{proposition}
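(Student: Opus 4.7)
\medskip\noindent\textbf{Proof proposal.} The plan is to mimic the strategy of Lemma~\ref{lem:CM1-ISO} and Proposition~\ref{prop:lcisop}, replacing the ``diagonal'' isometry calculation by a ``mixed'' calculation, and then pay an extra factor of $r$ inside the Bernstein bound (which becomes $r^2$ in the final sample complexity) so that we can afford the tighter threshold $\tfrac{1}{4r}$ and a union bound over the $\sim r^2$ pairs $(j,k)$. Concretely, for any fixed pair $(j,k)$ with $j\neq k$ and any $p$, we use the decomposition $\PP_{T_j}\A_{j,p}^*\A_{k,p}\BS_{k,p}\PP_{T_k}=\sum_{s=1}^4 \CM_{s,\mix}$ provided in the excerpt (with $(j,k)=(2,1)$ for concreteness) and aim to prove $\|\CM_{s,\mix}\|\leq \tfrac{1}{16r}$ for each $s$, so that the triangle inequality delivers the claim.

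\medskip\noindent The key structural observation, which actually makes the ``mixed'' case slightly simpler than the ``diagonal'' case, is that each $\CM_{s,\mix}$ is already a sum of \emph{zero-mean} summands $\CZ_{l,s}$ over $l\in\Gamma_p$. Indeed, $\{\ba_{1,l}\}$ and $\{\ba_{2,l}\}$ are independent Gaussian vectors, so every summand carries a factor of the form $\ba_{1,l}\ba_{2,l}^*$ or $\lag \ba_{1,l},\bx_1\rag \ba_{2,l}^*$, etc., whose expectation vanishes. In particular, no ``$\ba_l\ba_l^*-\I$'' centering is needed, which removes one source of technical bookkeeping compared with the proof of Lemma~\ref{lem:CM1-ISO}.

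\medskip\noindent With the decomposition in hand I would apply the matrix Bernstein inequality for sub-exponential summands (Theorem~\ref{thm:bern}) to each $\CM_{s,\mix}$ viewed as a linear operator from $\CC^{K_1\times N_1}$ to $\CC^{K_2\times N_2}$. This requires two ingredients:
\begin{itemize}
\item The sub-exponential bound $R_s=\max_{l\in\Gamma_p}\|\CZ_{l,s}\|_{\psi_1}$. Using the incoherence bounds~\eqref{cond:i}, $\|\BS_{i,p}\|\le 4L/(3Q)$ from~\eqref{cond:ST}, and the standard estimates $\|\ba_{i,l}\ba_{j,l}^*\|_{\psi_1}\lesssim \sqrt{N_iN_j}$, $\|\lag\ba_{i,l},\bx_i\rag\,\ba_{j,l}\|_{\psi_1}\lesssim \sqrt{N_j}$, $\|\lag\ba_{1,l},\bx_1\rag\lag\ba_{2,l},\bx_2\rag\|_{\psi_1}\lesssim 1$, I expect each $R_s\lesssim \max\{\mum^2 K,\mu_h^2 N\}/Q$, exactly as in Lemma~\ref{lem:CM1-ISO}.
\item The variance proxy $\sigma_s^2=\max\{\|\sum_l\E \CZ_{l,s}\CZ_{l,s}^*\|,\|\sum_l\E \CZ_{l,s}^*\CZ_{l,s}\|\}$. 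Because the $\ba_{1,l},\ba_{2,l}$ are independent, the moments factor and collapse to quantities such as $\sum_l |\lag \bh_2,\bb_{2,l}\rag|^2 \BS_{1,p}\bb_{1,l}\bb_{1,l}^*\BS_{1,p}$, which can be controlled exactly as in the diagonal case by $\sum_l |\lag\bh_i,\bb_{i,l}\rag|^2\leq \mu_h^2$, by $\|\BT_{i,p}\|\leq 5Q/(4L)$, and by $\|\bb_{i,l}\|^2\leq \mum^2K/L$. Once again the output is $\sigma_s^2 \lesssim \max\{\mum^2 K,\mu_h^2 N\}/Q$.
\end{itemize}

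\medskip\noindent Plugging these $R_s,\sigma_s^2$ into Theorem~\ref{thm:bern} with deviation $\delta=1/(16r)$ and $t=\alpha\log L$ yields
\begin{equation*}
\Pr\!\lp \|\CM_{s,\mix}\|>\tfrac{1}{16r}\rp \le L^{-\alpha}
\end{equation*}
as soon as $Q\gtrsim r^2\max\{\mum^2K,\mu_h^2 N\}\log^2 L$ (the quadratic dependence on $r$ is the price for requiring $\delta=1/(16r)$ rather than $\delta=1/16$). A union bound over the four values of $s$, the $r(r-1)<r^2$ ordered pairs $(j,k)$, and the $P=L/Q$ blocks $p$ absorbs the extra combinatorial factors into the constant $C_{\alpha+\log(r)}$, producing the advertised failure probability $L^{-\alpha+1}$. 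The main technical nuisance I foresee is the careful variance bookkeeping for $\CM_{3,\mix}$ and $\CM_{4,\mix}$, where the projectors $\I_{K_i}-\bh_i\bh_i^*$ interact with $\BS_{i,p}$ and one has to use $\|(\I_{K_i}-\bh_i\bh_i^*)\BS_{i,p}\bb_{i,l}\|^2\leq \|\BS_{i,p}\|^2\|\bb_{i,l}\|^2$ together with the incoherence of $\bh_i$ to keep the bound symmetric in $(\mum^2K,\mu_h^2 N)$; but this is a mechanical adaptation of the argument already carried out for $\CM_2$ in Lemma~\ref{lem:CM1-ISO} rather than a new idea.
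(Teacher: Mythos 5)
Your proposal follows essentially the same route as the paper: decompose $\PP_{T_j}\A_{j,p}^*\A_{k,p}\BS_{k,p}\PP_{T_k}$ into the four zero-mean pieces $\CM_{s,\mix}$ (zero-mean precisely because of the independence of $\ba_{j,l}$ and $\ba_{k,l}$, as you note), bound each by $\tfrac{1}{16r}$ via the matrix Bernstein inequality with $R,\sigma^2 \lesssim \max\{\mum^2K,\mu_h^2N\}/Q$, pay the factor $r^2$ for the deviation level $1/(16r)$, and finish with a union bound over $s$, the pairs $(j,k)$, and the blocks $p$, absorbing the loss into $C_{\alpha+\log(r)}$. This matches the paper's proof (Lemma~\ref{lem:CM1-mix} plus the union-bound argument) in both structure and all quantitative ingredients.
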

By setting $Q = L$, we immediately have $\mu \leq \frac{1}{4r}$, which is written into the following corollary. 
\begin{corollary}\label{for:mix}
Under the assumption of~\eqref{cond:i} and~\eqref{cond:iso} and that $\{\ba_{i,l}\}$ are standard Gaussian random vectors of length $N_i,$ there holds
\begin{equation}\label{eq:mix}
\|\PP_{T_j}\A_{j}^*\A_{k}\PP_{T_k} \| \leq \frac{1}{4r},\quad 1\leq j\neq k\leq r,
\end{equation}
with probability at least $1 - L^{-\alpha + 1}$ if $L\geq C_{\alpha+2\log(r)}r^2 \max\{\mum^2K, \mu^2_h N\}\log^2 L$ where $K := \max K_i$ and $N := \max N_i.$ In other words, $\mu \leq \frac{1}{4r}.$
\end{corollary}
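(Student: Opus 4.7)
The plan is to derive Corollary~\ref{for:mix} directly as the special case $P=1$, $Q=L$ of Proposition~\ref{prop:mix}. When $P=1$ the only index block is $\Gamma_1 = \{1,\ldots,L\}$, so $\A_{i,1} = \A_i$ and
$$\BT_{i,1} = \sum_{l=1}^L \bb_{i,l}\bb_{i,l}^* = \BB_i^*\BB_i = \I_{K_i},$$
whence $\BS_{i,1} = \I_{K_i}$. Consequently Proposition~\ref{prop:mix} applied with this trivial partition reads exactly as Corollary~\ref{for:mix}: the bound $\|\PP_{T_j}\A_j^*\A_k\PP_{T_k}\| \leq 1/(4r)$ is precisely $\mu \leq 1/(4r)$ in the notation of~\eqref{def:mu}. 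The threshold $Q \geq C_{\alpha+\log r} r^2 \max\{\mum^2 K,\muh^2 N\}\log^2 L$ becomes the stated threshold on $L$, with the $2\log r$ in the subscript accounting for the union bound over the $r(r-1)$ ordered pairs. So the only work is to prove Proposition~\ref{prop:mix}.

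For Proposition~\ref{prop:mix}, I would use the explicit decomposition already written out in the excerpt,
$$\PP_{T_2}\A_{2,p}^*\A_{1,p}\BS_{1,p}\PP_{T_1} = \sum_{s=1}^{4} \CM_{s,\mix},$$
and bound each summand in operator norm by the matrix Bernstein inequality. Each $\CM_{s,\mix}$, viewed as a $K_2 N_2 \times K_1 N_1$ matrix, is a sum $\sum_{l\in\Gamma_p} \CZ_{s,l}$ with $\E\CZ_{s,l}=\bzero$. The zero-mean property is immediate and comes for free: since $j\neq k$, the Gaussian vectors $\ba_{1,l}$ and $\ba_{2,l}$ are independent, so every summand factors into independent mean-zero pieces, and there is no identity piece to subtract (contrast with Lemma~\ref{lem:CM1-ISO}).

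For each $s$, I would estimate the uniform sub-exponential norm $R_s := \max_{l\in\Gamma_p}\|\CZ_{s,l}\|_{\psi_1}$ and the variance proxy $\sigma_s^2 := \max\bigl\{\|\sum_l \E\CZ_{s,l}^*\CZ_{s,l}\|,\|\sum_l \E\CZ_{s,l}\CZ_{s,l}^*\|\bigr\}$ using exactly the same tools as in the proof of Lemma~\ref{lem:CM1-ISO}: the pointwise bounds~\eqref{cond:i}, the operator estimate $\|\BS_{i,p}\|\leq 4L/(3Q)$ from~\eqref{cond:ST}, and the moment identities for Wishart-type expressions $\E[(\ba\ba^* -\I)\bx\bx^*(\ba\ba^* -\I)]$ used in the appendix. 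The independence of $\ba_{1,l}$ and $\ba_{2,l}$ makes essentially all cross terms vanish, so that $\sigma_s^2$ remains of order $\max\{\mum^2 K,\muh^2 N\}/Q$ — the same order as in the local isometry lemma. Feeding this into matrix Bernstein with a target deviation $1/(16r)$ yields
$$\|\CM_{s,\mix}\| \leq \frac{1}{16r} \qquad\text{with probability } \geq 1 - L^{-\alpha},$$
provided $Q \geq C_\alpha r^2 \max\{\mum^2 K, \muh^2 N\}\log^2 L$. The extra factor $r^2$ (compared with the $O(1)$ deviation in Lemma~\ref{lem:CM1-ISO}) is forced because the target deviation $1/(16r)$ enters the Bernstein bound quadratically in its sub-Gaussian tail; the sub-exponential tail contributes only an extra factor of $r$ and is dominated.

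A union bound over the four summands, the $P \leq L$ blocks $p$, and the $r(r-1)$ ordered pairs $(j,k)$ gives total failure probability at most $4r^2 P\cdot L^{-\alpha} \leq L^{-\alpha+1}$ after absorbing the $\log(r^2)$ loss into the constant $C_{\alpha+\log r}$ in the statement of Proposition~\ref{prop:mix}. The main obstacle I anticipate is the bookkeeping for $\CM_{4,\mix}$, whose summands are the most intricate quartic polynomial in the independent Gaussians $\ba_{1,l}$ and $\ba_{2,l}$: one must check that the independence of the two groups of Gaussians, combined with $\BB_i^*\BB_i = \I_{K_i}$, really does collapse $\sigma_4^2$ down to $O(\max\{\mum^2 K,\muh^2 N\}/Q)$ and does not produce a hidden factor $r$ that would spoil the $r^2$ scaling and upgrade it to $r^3$.
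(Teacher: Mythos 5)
Your proposal follows the paper's own route exactly: the paper likewise obtains Corollary~\ref{for:mix} by setting $Q=L$ (so $\BS_{i,1}=\I_{K_i}$ and $\A_{i,1}=\A_i$) in Proposition~\ref{prop:mix}, and proves that proposition by bounding each of the four zero-mean summands $\CM_{s,\mix}$ via the matrix Bernstein inequality with target deviation $1/(16r)$, which is precisely what forces the $r^2$ scaling, followed by a union bound over $(j,k,p)$ absorbed into $C_{\alpha+2\log(r)}$. The only difference is cosmetic: the paper works out $\CM_{2,\mix}$ in detail and leaves the other three terms (including the quartic $\CM_{4,\mix}$ you flag, which in fact poses no additional difficulty) to the reader.
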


The proof of Proposition~\ref{prop:mix} follows two steps. First we will show each $\| \CM_{s,\mix} \| \leq \frac{1}{16r}$ holds with high probability, followed by taking the union bound over all $j\neq k$ and $1\leq p\leq P$.
\begin{proof}
For any fixed set of $(j,k,p)$ with $j\neq k$, it has been shown, in Lemma~\ref{lem:CM1-mix}, that
\begin{equation*}
\|\PP_{T_j}\A_{j,p}^*\A_{k,p}\BS_{k,p}\PP_{T_k} \| \leq \frac{ 1}{4r}
\end{equation*}
with probability at least $1 - 4L^{-\alpha}$ if $Q \geq C_{\alpha} r^2 \max\{\mum^2K, \mu^2_h N\}\log^2L$.  Then we simply take the union bound over all $1\leq p\leq P$ and $1\leq j\neq k\leq r$ and it leads to
\begin{align*}
\Pr\left( \|\PP_{T_j}\A_{j,p}^*\A_{k,p}\BS_{k,p}\PP_{T_k} \| \leq \frac{1}{4r}, \quad \forall j\neq k, 1\leq p\leq P \right) \geq 1 - 4L^{-\alpha} P r^2/2 \geq 1 - 2L^{-\alpha + 1}r^2
\end{align*}
if $Q \geq C_{\alpha}r^2 \max\{\mum^2K, \mu^2_h N\}\log^2L$ where there are at most $Pr^2/2$ events and $L = PQ$. In order to make the probability of success at least $1 - L^{-\alpha +1}$, we can just choose $\alpha' = \alpha + 2\log r$, or equivalently,
$Q \geq C_{\alpha + 2\log(r)}r^2 \max\{\mum^2K, \mu^2_h N\}\log^2 L$. 
\end{proof}

\subsubsection{Estimation of $\|\CM_{s,\mix}\|$}

\begin{lemma}\label{lem:CM1-mix}
Under the assumptions of~\eqref{cond:i},~\eqref{cond:iso} and~\eqref{cond:ST} and that
$\ba_{i,l}\sim\mathcal{N}(\bzero, \I_{N_i})$ independently for $i=1,2$ and $l\in\Gamma_p$, then
\begin{equation*}
\|\CM_{s,\mix}\| \leq \frac{1}{16r} \qquad \text{for $s=1,2,3,4$},
\end{equation*}
holds with probability at least $1 - L^{-\alpha}$ if $Q\geq C_{\alpha} r^2 \max\{\mum^2K, \mu^2_h N\} \log^2 L$.
\end{lemma}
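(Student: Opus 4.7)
The plan is to mirror the structure of Lemma~\ref{lem:CM1-ISO}: for each fixed $s \in \{1,2,3,4\}$ I would decompose $\CM_{s,\mix} = \sum_{l\in\Gamma_p} \CZ_l$ into a sum of independent random operators from $\CC^{K_1\times N_1}$ to $\CC^{K_2\times N_2}$ (viewed as $K_2N_2 \times K_1N_1$ matrices), where each $\CZ_l$ already has mean zero because the randomness comes from a product involving \emph{either} the independent Gaussian vector $\ba_{1,l}$ \emph{or} $\ba_{2,l}$ (or both), and at least one of these enters linearly, making $\E \CZ_l = 0$. Then I would invoke the same matrix Bernstein bound \eqref{thm:bern} used before and track the two quantities $R := \max_l \|\CZ_l\|_{\psi_1}$ and $\sigma^2 := \max(\|\sum \E\CZ_l^*\CZ_l\|,\|\sum\E\CZ_l\CZ_l^*\|)$, but now aiming at a stricter deviation level $\tfrac{1}{16r}$.

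The bookkeeping for each $\CM_{s,\mix}$ reduces to bounding a deterministic coefficient times a product of Gaussian factors. For $\CM_{1,\mix}$ the deterministic part $|\lag \BS_{1,p}\bh_1,\bb_{1,l}\rag \ol{\lag \bh_2,\bb_{2,l}\rag}|$ is controlled by $\mu_h^2/Q$ via \eqref{cond:i}, and the random outer product $\ba_{1,l}\ba_{2,l}^*$ contributes a factor $\lesssim \sqrt{N_1N_2}$ in the Orlicz $\psi_1$ norm (product of two independent sub-Gaussians is sub-exponential), yielding $R_1 \lesssim \mu_h^2 N/Q$. For $\CM_{2,\mix}$ and $\CM_{3,\mix}$ one factor $\lag \bh,\bb\rag$ is replaced by a $\bb^{*}\BS$ factor controlled by $\mum\sqrt{KL}/Q$ while a Gaussian scalar $\lag \ba,\bx\rag$ enters, giving $R_s \lesssim \mu_h\mum\sqrt{KN}/Q \leq \max\{\mum^2 K,\mu_h^2 N\}/Q$ after Cauchy--Schwarz. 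For $\CM_{4,\mix}$ the deterministic part is $\|\bb_{2,l}\|\|\BS_{1,p}\|\|\bb_{1,l}\| \lesssim \mum^2 K/Q$, multiplied by $\lag \ba_{1,l},\bx_1\rag \lag \ba_{2,l},\bx_2\rag$, which is sub-exponential. In every case one obtains $R \leq C\,\max\{\mum^2 K,\mu_h^2 N\}/Q$.

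For the variances I would compute $\E \CZ_l^*\CZ_l$ and $\E \CZ_l\CZ_l^*$ using the independence of $\ba_{1,l}$ and $\ba_{2,l}$ together with $\E \ba_{i,l}\ba_{i,l}^* = \I_{N_i}$ (and the identities in \eqref{lem:9JB}--\eqref{lem:12JB} when squared Gaussian vectors appear). Summing over $l\in\Gamma_p$ and applying \eqref{cond:iso}, i.e.\ $\|\BT_{i,p}\| \leq 5Q/(4L)$ and $\|\BS_{i,p}\| \leq 4L/(3Q)$, together with $\sum_{l\in\Gamma_p}|\lag \bh_i,\bb_{i,l}\rag|^2 = \bh_i^{*}\BT_{i,p}\bh_i$ and similar telescoping identities, yields $\sigma^2 \leq C\,\max\{\mum^2 K,\mu_h^2 N\}/Q$. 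The computations are completely analogous to the four cases in Lemma~\ref{lem:CM1-ISO}, with the only substantive difference that one no longer needs to subtract an identity contribution.

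With $R$ and $\sigma^2$ in hand, the Bernstein tail bound~\eqref{thm:bern} at level $t = \alpha\log L$ gives
\[
\|\CM_{s,\mix}\| \leq C\max\!\Big\{\sqrt{\tfrac{\max\{\mum^2K,\mu_h^2N\}\,\alpha\log L}{Q}},\;\tfrac{\max\{\mum^2K,\mu_h^2N\}\,\alpha\log^2 L}{Q}\Big\}
\]
with probability at least $1 - L^{-\alpha}$. To push this below $\tfrac{1}{16r}$ one needs $Q \geq C_\alpha r^{2}\max\{\mum^2 K,\mu_h^2 N\}\log^2 L$, exactly as stated. I expect the main obstacle to be the bookkeeping for $\sigma^2$ in cases $s=2,3$, where the deterministic vectors involve $\BS_{i,p}(\I-\bh_i\bh_i^{*})\bb_{i,l}$ and mixing $\mum^2K$ with $\mu_h^2N$ bounds requires careful use of Cauchy--Schwarz to avoid a spurious product $\mum\mu_h\sqrt{KN}$ that would not fit into the $\max\{\mum^2K,\mu_h^2N\}$ form desired; handling the outer product $\ba_{1,l}\ba_{2,l}^{*}$ in the $\psi_1$ norm will also require the usual truncation-or-Orlicz lemma rather than a naive operator-norm bound.
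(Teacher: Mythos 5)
Your proposal is correct and follows essentially the same route as the paper: decompose each $\CM_{s,\mix}$ into a sum of independent mean-zero operators (zero mean following from the independence of $\ba_{1,l}$ and $\ba_{2,l}$), bound $R=\max_l\|\CZ_l\|_{\psi_1}$ and $\sigma^2$ by $C\max\{\mum^2K,\mu^2_hN\}/Q$ using~\eqref{cond:i},~\eqref{cond:ST} and the sub-exponential product lemmas, and apply the Bernstein bound~\eqref{thm:bern} at deviation level $\tfrac{1}{16r}$, which is exactly where the $r^2$ in the requirement on $Q$ comes from. The paper likewise carries out the details only for $\CM_{2,\mix}$ and leaves the other three cases to the reader, and your bookkeeping for the individual cases (including the Cauchy--Schwarz step $\mum\mu_h\sqrt{KN}\leq\max\{\mum^2K,\mu_h^2N\}$) matches the paper's computations.
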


\begin{proof}
We only prove the bound for $\CM_{2,\mix}$, the proofs of the bounds for $\CM_{1,\mix}, \CM_{3,\mix}$, and
$\CM_{4,\mix}$ use similar arguments and are left to the reader.

Following from the definition of $\CM_{2,\mix}$, 
\begin{equation*}
\CM_{2,\mix} :=  \sum_{l\in\Gamma_p} \CZ_l(\BZ),
\end{equation*}
where
\begin{equation*}
 \CZ_l(\BZ) =
 \lag \ba_{1,l}, \bx_1\rag \ol{\lag \bh_2, \bb_{2,l}\rag} \bh_2 \bb_{1,l}^* \BS_{1,p}(\I_{K_1} - \bh_1\bh_1^*)\BZ\bx_1 \ba_{2,l}^*.
\end{equation*}
Note that $\|\CZ_l\| \leq  |\lag \ba_{1,l}, \bx_1\rag \ol{\lag \bh_2, \bb_{2,l}\rag}| \| \bh_2 \bb_{1,l}^* \BS_{1,p} \| \| \bx_1 \ba_{2,l}^* \| $, and by using Lemma~\ref{lem:6JB} and~\ref{lemma:psi}, we have
\begin{align*}
\|\CZ_l\|_{\psi_1} 
 \leq &  |\lag \bh_2, \bb_{2,l}\rag|  \|\BS_{1,p}\bb_{1,l}\| \cdot \| (|\lag \ba_{1,l}, \bx_1\rag| \cdot\|\ba_{2,l}\|) \|_{\psi_1} \\
 \leq & C\frac{\mum\mu_h \sqrt{K_1}}{Q} \| (|\lag \ba_{1,l}, \bx_1\rag| \cdot\|\ba_{2,l}\|) \|_{\psi_1} \\
 \leq & C\frac{\mum\mu_h \sqrt{K_1N_2}}{Q} \\
 \leq & C\frac{\max\{\mum^2 K_1, \mu^2_h N_2\}}{Q}
\end{align*}
where 
$\| (|\lag \ba_{1,l}, \bx_1\rag| \cdot\|\ba_{2,l}\|) \|_{\psi_1} \leq C\sqrt{N_2}$ follows from Lemma~\ref{lemma:psi}.
We proceed to estimate $\sigma^2$ by first finding $\CZ_l^*(\BZ),$
\begin{equation*}
\CZ^*_l(\BZ) = \ol{\lag \ba_{1,l}, \bx_1\rag} \lag \bh_2, \bb_{2,l}\rag   (\I_{K_1} - \bh_1\bh_1^*)\BS_{1,p}\bb_{1,l} \bh_2^*  \BZ\ba_{2,l}\bx_1^*.
\end{equation*}

Both $\CZ^*_l\CZ_l(\BZ)$ and $\CZ_l\CZ^*_l(\BZ)$ have the following forms:
\begin{align*}
\CZ_l^*\CZ_l(\BZ) & =  |\lag \ba_{1,l}, \bx_1\rag \lag \bh_2, \bb_{2,l}\rag|^2 \|\ba_{2,l}\|^2  (\I_{K_1} - \bh_1\bh_1^*) \BS_{1,p}\bb_{1,l} \bb_{1,l}^*\BS_{1,p} (\I_{K_1} - \bh_1\bh_1^*)  \BZ\bx_1\bx_1^*
\end{align*}
and 
\begin{align*}
\CZ_l\CZ_l^*(\BZ) & =  |\lag \ba_{1,l}, \bx_1\rag \lag \bh_2, \bb_{2,l}\rag|^2   \bb_{1,l}^*\BS_{1,p} (\I_{K_1} - \bh_1\bh_1^*) \BS_{1,p}\bb_{1,l} \bh_2\bh_2^*  \BZ\ba_{2,l}\ba_{2,l}^*.
\end{align*}
The expectations of $\CZ_l^*\CZ_l$ and $\CZ_l\CZ_l^*$ are 
\begin{align*}
\E(\CZ_l^*\CZ_l(\BZ)) & = N_2| \lag \bh_2, \bb_{2,l}\rag|^2   (\I_{K_1} - \bh_1\bh_1^*) \BS_{1,p}\bb_{1,l} \bb_{1,l}^*\BS_{1,p} (\I_{K_1} - \bh_1\bh_1^*)  \BZ\bx_1\bx_1^*
\end{align*}
and 
\begin{align*}
\E(\CZ_l\CZ_l^*(\BZ)) & =  |\lag \bh_2, \bb_{2,l}\rag|^2 \bb_{1,l}^*\BS_{1,p}  (\I_{K_1} - \bh_1\bh_1^*)\BS_{1,p}\bb_{1,l} \bh_2\bh_2^*  \BZ.
\end{align*}
Taking the sum over $l\in\Gamma_p$ results in 
\begin{align*}
\left\| \sum_{l\in\Gamma_p} \E(\CZ_l^*\CZ_l) \right\| & = N_2 \left\| \sum_{l\in\Gamma_p} |\lag \bh_2, \bb_{2,l}\rag|^2  \BS_{1,p}\bb_{1,l}\bb_{1,l}^*\BS_{1,p}  \right\| \\
& \quad\leq  \frac{\mu^2_hN_2}{L} \left\| \sum_{l\in\Gamma_p} \BS_{1,p}\bb_{1,l}\bb_{1,l}^* \BS_{1,p}\right\| \\
& \quad \leq \frac{\mu^2_h N_2}{L} \cdot \|\BS_{1,p}\| \leq \frac{\mu^2_hN_2}{L} \cdot \frac{4L}{3Q} = \frac{4\mu^2_hN_2}{3Q}
\end{align*}
and
\begin{align*}
 \left\|\sum_{l\in\Gamma_p} \E(\CZ_l\CZ_l^*)\right\| & = \sum_{l\in\Gamma_p} |\lag \bh_2, \bb_{2,l}\rag|^2 \bb_{1,l}^*\BS_{1,p} (\I_{K_1} - \bh_1\bh_1^*)\BS_{1,p}\bb_{1,l} \\
&  = \sum_{l\in\Gamma_p} \|(\I_{K_1} - \bh_1\bh_1^*) \BS_{1,p}\bb_{1,l}\|^2 |\lag \bh_2, \bb_{2,l}\rag|^2  \\
& \leq \max_{l\in\Gamma_p} \{ \|\BS_{1,p}\|^2 \|\bb_{1,l}\|^2 \} \cdot \| \BT_{2,p} \|\\
& \leq \frac{16L^2}{9Q^2} \cdot \frac{\mum^2K_1}{L} \cdot \frac{5Q}{4L} \leq\frac{20\mum^2K_1}{9Q}.
\end{align*}
Thus the variance $\sigma^2$ is bounded by
\begin{equation*}
\sigma^2 \leq C\frac{\max\{\mum^2K_1, \mu^2_{h}N_2\}}{Q} \leq C\frac{\max\{\mum^2K, \mu^2_{h}N\}}{Q}.
\end{equation*}
Then we just apply~\eqref{thm:bern} to estimate the deviation of $\CM_{2,\mix}$ from $\bzero$
by choosing $t = \alpha\log L$. 

Letting $Q\geq C_{\alpha}\max\{\mum^2K,\muh^2N\} \log^2 L/\delta^2$ gives us
\begin{align*}
\|\CM_{2,\mix}\| & \leq  C\max\Big\{ \sqrt{ \frac{\max\{\mum^2K, \muh^2N\}}{Q} (\alpha + 2)\log L}, \frac{\max\{\mum^2K, \muh^2N\}}{Q} (\alpha + 2)\log^2 L  \Big\}  \leq \delta,
\end{align*}
with probability at least $1 - L^{-\alpha}$ where $K$ and $N$ are properly assumed to be smaller than $L.$
Let $\delta = \frac{1}{16r}$ and $Q\geq C_{\alpha} r^2 \max\{\mum^2K, \mu^2_hN\} \log^2L$, 
\begin{equation*}
\|\CM_{2,\mix}\| \leq \frac{1}{16r}
\end{equation*}
with the probability at least $1 - L^{-\alpha}.$
\end{proof}

\subsection{Constructing a dual certificate\label{s:dual}}

In this section, we will construct a $\blambda$ such that
\begin{equation}\label{eq:suffcond2}
\|\bh_i\bx_i^* - \PP_{T_i}(\A_i^*(\blambda))\|_F \leq (5r\gamma)^{-1}, \quad \|\PP_{\TB_i}(\A^*_i(\blambda))\| \leq \frac{1}{2}
\end{equation}
holds simultaneously for all $1\leq i\leq r$.
If such a $\blambda$ exists, then solving~\eqref{cvxprog} yields exact recovery according to Lemma~\ref{lemma:suffcond}. The difficulty of this mission is obvious since we require all $\A_i^*(\blambda)$ to be close to $\bh_i\bx_i^*$ and ``small" on $\TB_i$. However, it becomes possible with help of the incoherence between $\A_i$ and $\A_j$. The method to achieve this goal is to apply a well-known and widely used technique called \emph{golfing scheme}, developed by Gross in~\cite{gross11recovering}. 

\subsubsection{Construct an approximate dual certificate via golfing scheme}

The approximate dual certificate $\{ \BY_{i} := \A_i^*(\blambda) \}_{i=1}^r$ satisfying Lemma~\ref{lemma:suffcond} is constructed via a sequence of random matrices, by following the philosophy of golfing scheme. The constructed sequence $\{\BY_{i,p}\}_{p=1}^P$ would approach $\bh_i\bx_i^*$ on $T_i$ exponentially fast while keeping $\BY_{i,p}$ ``small" on $\TB_{i}$ at the same time.

\begin{enumerate}
\item Initialize $\BY_{i,0} = \bzero_{K_i\times N_i}$ for all $1\leq i\leq r$ and 
\begin{equation*}
\blambda_0 : = \sum_{j=1}^r\A_{j,1} (\BS_{j,1}\bh_j\bx_j^*) \in \CC^L.
\end{equation*}
\item For $p$ from $1$ to $P$ (where $P$ will be specified later in Lemma~\ref{lem:Y-hx}), we define the following recursive formula:
\begin{eqnarray}
\blambda_{p-1} & : = & \sum_{j=1}^r\A_{j,p} \lp \BS_{j,p}( \bh_j\bx_j^* - \PP_{T_j}(\BY_{j,p-1}))\rp, \label{eq:lambda}\\
\BY_{i,p}  & : = & \BY_{i,p-1} + \A_{i,p}^*(\blambda_{p-1}), \quad 1\leq i\leq r.\label{eq:gs1} 
\end{eqnarray}
\item $\BY_{i,p}$ denotes the result after $p$-th iteration  and let $\BY_i := \BY_{i,P}$, i.e., the final outcome for each $i$, and $\blambda :=\blambda_P.$ 
\end{enumerate}
Denote $\BW_{i,p}$ as the difference between $\BY_{i,p}$ and $\bh_i\bx_i^*$ on $T_i$, i.e.,
\begin{equation}\label{def:W}
\BW_{i,p} = \bh_i\bx_i^* - \PP_{T_i}(\BY_{i,p})\in T_i, \quad \BW_{i,0} = \bh_i\bx_i^*
\end{equation}
and~\eqref{eq:lambda} can be rewritten into the following form:
\begin{equation*}
\blambda_{p-1} =  \sum_{j=1}^r\A_{j,p} (\BS_{j,p} \BW_{j,p-1}).
\end{equation*}
Moreover, $\BW_{i,p}$ yields the following relation:
\begin{equation}\label{eq:ww}
\BW_{i,p} = \BW_{i,p-1} - \sum_{j=1}^r \PP_{T_i}\A_{i,p}^*\A_{j,p} (\BS_{j,p}\BW_{j,p-1})
\end{equation}
from~\eqref{eq:gs1} and~\eqref{def:W}.
\begin{remark}
Here we give an intuitional reason why $\blambda_p$ is constructed as~\eqref{eq:lambda}.
An important observation here is that each $\A_{i,p}^*(\blambda_{p-1})$ is an {\em unbiased} estimator of $\BW_{i,p-1}$, i.e.,
\begin{equation}\label{eq:Alambda}
\E(\A_{i,p}^*(\blambda_{p-1})) = \sum_{j=1}^r\E( \A_{i,p}^*  \A_{j,p} (\BS_{j,p} \BW_{j,p-1}) )  = \BW_{i,p-1}
\end{equation}
where $\E(\A_{i,p}^*\A_{j,p}(\BS_{j,p}\BW_{j,p-1})) = \bzero$ for all $j\neq i$ due to the independence between $\A_{j,p}$ and $\A_{i,p}$ and $\E(\A_{i,p}^*\A_{i,p}(\BS_{i,p}\BW_{i,p-1})) = \BW_{i,p-1}$.
Remember that $\{\BW_{j,p-1}\}_{j=1}^r$ are independent of $\{ \A_{i,p}\}_{i=1}^r$, which follows from the construction of sequences in~\eqref{eq:lambda} and~\eqref{eq:gs1}. More precisely, the expectation above should be treated as the conditional expectation of $\A_{i,p}^*(\blambda_{p-1})$ given $\{\BW_{j,p-1}\}_{j=1}^r$ are known.
\end{remark}

\subsubsection{$\|\PP_{T_i}(\BY_i) - \bh_i\bx_i^*\|_F$ decays exponentially fast}

\begin{lemma}\label{lem:Y-hx}
Conditioned on~\eqref{eq:lcisop} and~\eqref{eq:mix}, the golfing scheme~\eqref{eq:lambda} and~\eqref{eq:gs1} generate a sequence of $\{\BY_{i,p}\}_{p=1}^P$ such that
\begin{equation*}
 \|\BW_{i,p}\|_F = \|\PP_{T_i}(\BY_{i,p}) - \bh_i\bx_i^*\|_F \leq 2^{-p}
\end{equation*}
hold simultaneously for all $1\leq i\leq r.$
In particular, if $P \geq \log_2 (5r\gamma)$, 
\begin{equation*}
\|\PP_{T_i}(\BY_{i}) - \bh_i\bx_i^*  \|_F \leq 2^{-\log_2 (5r\gamma)} \leq \frac{1}{5r\gamma}
\end{equation*}
where $\BY_{i} := \BY_{i,P}$.
In other words, the first condition in~\eqref{eq:suffcond2} holds. 
\end{lemma}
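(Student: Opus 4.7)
The plan is to prove $\|\BW_{i,p}\|_F \leq 2^{-p}$ by induction on $p$, exploiting the contraction generated by the two events \eqref{eq:lcisop} and \eqref{eq:mix} on which we are conditioning. The base case is immediate: $\|\BW_{i,0}\|_F = \|\bh_i\bx_i^*\|_F = 1 = 2^0$, which uses the convention at the start of Section~\ref{s:sufficient} that $\bh_i$ and $\bx_i$ are unit vectors.

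For the inductive step, I would first verify that $\BW_{i,p} \in T_i$ for every $p$, which follows from the recursion \eqref{eq:ww} since every term on its right-hand side lies in the range of $\PP_{T_i}$. Using $\PP_{T_j}\BW_{j,p-1} = \BW_{j,p-1}$, I rewrite \eqref{eq:ww} by separating the diagonal piece from the cross terms:
\begin{equation*}
\BW_{i,p} = \big(\PP_{T_i} - \PP_{T_i}\A_{i,p}^*\A_{i,p}\BS_{i,p}\PP_{T_i}\big)\BW_{i,p-1} \;-\; \sum_{j\neq i} \PP_{T_i}\A_{i,p}^*\A_{j,p}\BS_{j,p}\PP_{T_j}\BW_{j,p-1}.
\end{equation*}

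Conditioning on the events \eqref{eq:lcisop} and \eqref{eq:mix}, Propositions \ref{prop:lcisop} and \ref{prop:mix} supply the deterministic operator-norm bounds
\begin{equation*}
\|\PP_{T_i} - \PP_{T_i}\A_{i,p}^*\A_{i,p}\BS_{i,p}\PP_{T_i}\| \leq \tfrac{1}{4}, \qquad \|\PP_{T_i}\A_{i,p}^*\A_{j,p}\BS_{j,p}\PP_{T_j}\| \leq \tfrac{1}{4r} \quad (j \neq i),
\end{equation*}
valid simultaneously for all $1 \leq i, j \leq r$ and all $1 \leq p \leq P$. Let $M_{p} := \max_{i}\|\BW_{i,p}\|_F$. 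Applying the triangle inequality to the display above and invoking these operator-norm bounds yields
\begin{equation*}
\|\BW_{i,p}\|_F \leq \tfrac{1}{4}\|\BW_{i,p-1}\|_F + \tfrac{1}{4r}\sum_{j\neq i}\|\BW_{j,p-1}\|_F \leq \Big(\tfrac{1}{4} + \tfrac{r-1}{4r}\Big) M_{p-1} = \tfrac{2r-1}{4r} M_{p-1} < \tfrac{1}{2} M_{p-1}.
\end{equation*}
Taking the maximum over $i$ gives $M_p \leq \tfrac{1}{2} M_{p-1}$, and iterating with $M_0 = 1$ closes the induction. Finally, choosing $P \geq \log_2(5r\gamma)$ forces $\|\BW_{i,P}\|_F \leq 2^{-P} \leq (5r\gamma)^{-1}$, which is precisely the first condition in \eqref{eq:suffcond2} since $\BW_{i,P} = \bh_i\bx_i^* - \PP_{T_i}(\BY_i)$.

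I do not anticipate any serious obstacle here: all of the heavy lifting (Bernstein-type concentration, Gaussian moments, and the partitioning in the golfing scheme) has been absorbed into the two propositions whose conclusions we take as hypotheses. The only point requiring mild care is the bookkeeping in the cross-term bound: Proposition \ref{prop:mix} must deliver the operator-norm control with constant $1/(4r)$ rather than an $r$-independent one, so that summing $r-1$ off-diagonal contributions together with the diagonal factor $1/4$ keeps the total contraction strictly below $1/2$. The $1/(4r)$ scaling is precisely what forced $L$ to grow like $r^2$ in Section~\ref{s:incoherence}, and it is what makes the golfing iteration converge geometrically at rate $1/2$ independently of $r$.
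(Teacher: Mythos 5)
Your proposal is correct and follows essentially the same route as the paper's own proof: the same splitting of the recursion \eqref{eq:ww} into the diagonal term controlled by \eqref{eq:lcisop} and the cross terms controlled by \eqref{eq:mix}, the same bound $\|\BW_{i,p}\|_F \leq \tfrac{1}{4}\|\BW_{i,p-1}\|_F + \tfrac{1}{4r}\sum_{j\neq i}\|\BW_{j,p-1}\|_F$, and the same contraction of the maximum by a factor $\tfrac12$ per iteration. The only (harmless) difference is that you spell out the arithmetic $\tfrac14 + \tfrac{r-1}{4r} = \tfrac{2r-1}{4r} < \tfrac12$, which the paper leaves implicit.
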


\begin{proof}
Directly following from~\eqref{eq:ww} leads to
\begin{equation}\label{eq:wp2}
\begin{split}
\BW_{i,p} & =  \BW_{i,p-1} - \PP_{T_i} \A^*_{i,p} \A_{i,p}(\BS_{i,p}\BW_{i,p-1})  - \sum_{j\neq i}^r \PP_{T_i} \A^*_{i,p} \A_{j,p}(\BS_{j,p}\BW_{j,p-1}).
\end{split}
\end{equation}
Note that $\BW_{j,p-1}\in T_j$ and thus $\BW_{j,p-1} = \PP_{T_j}(\BW_{j,p-1})$.
By applying~\eqref{eq:lcisop} and~\eqref{eq:mix},
\begin{equation*}
\|\BW_{i,p}\|_F \leq \frac{1}{4} \|\BW_{i,p-1}\|_F + \frac{1}{4r}\sum_{j\neq i} \|\BW_{j,p-1}\|_F, \quad 1\leq i\leq r.
\end{equation*}
It is easy to see that
\begin{equation*}
\max_{1\leq i\leq r}\| \BW_{i,p}\|_F \leq \frac{1}{2} \max_{1\leq i\leq r} \|\BW_{i,p-1}\|_F.
\end{equation*}
Recall that $\| \BW_{i,0}\|_F = \|\bh_i\bx_i^*\|_F = 1$ for all $1\leq i\leq r$ and by the induction above, we prove that
\begin{equation*}
\|\BW_{i,p}\|_F \leq 2^{-p}, \quad 1\leq p\leq P, \quad 1\leq i\leq r.
\end{equation*} 
\end{proof}

\subsubsection{Proof of $\|\PP_{\TB_i}(\BY_{i})\| \leq \frac{1}{2}$}

In the previous section, we have already shown that $\PP_{T_i}(\BY_{i,p})$ approaches $\bh_i\bx_i^*$ exponentially fast with respect to $p$.
The only missing piece of the proof is to show that $\|\PP_{\TB_i}(\BY_{i,{P}})\|$ is bounded by $\frac{1}{2}$ for all $1\leq i\leq r$, i.e., the second condition in~\eqref{cond:suff2} holds. 
Following directly from~\eqref{eq:lambda} and~\eqref{eq:gs1}, 
\begin{equation*}
\BY_i :=\BY_{i,P} = \sum_{p=1}^{P} \A_{i,p}^* (\blambda_{p-1}).
\end{equation*}
There holds
\begin{align*}
\|\PP_{\TB_i}(\BY_{i})\| 
& = \left\| \PP_{\TB_i}\left( \sum_{p=1}^{P} (\A_{i,p}^*(\blambda_{p-1}) - \BW_{i,p-1}) \right) \right\|  \leq \sum_{p=1}^{P} \| \A_{i,p}^*(\blambda_{p-1}) - \BW_{i,p-1}\|,
\end{align*}
where  $\PP_{\TB_i}(\BW_{i,p-1}) = \bzero$.
It suffices to demonstrate that $\| \A_{i,p}^*(\blambda_{p-1}) -  \BW_{i,p-1}\| \leq 2^{-p-1} $  for $1\leq p\leq P $  in order to justify $\|\PP_{\TB_i}(\BY_{i})\| \leq \frac{1}{2}$ since
\begin{equation*}
\|\PP_{T_i}(\BY_{i})\| \leq \sum_{p=1}^{P} 2^{-p-1} < \frac{1}{2}.
\end{equation*}
Before moving to the proof, we first define the quantity $\mu_p$ which will be useful in the proof,

\begin{equation}\label{def:mup}
\mu_p := \frac{Q}{\sqrt{L}}\max_{1\leq i\leq r, l\in\Gamma_{p+1}} \| \BW_{i,p}^*\BS_{i,p+1} \bb_{i,l}\|.
\end{equation}
In particular, $\mu_0 \leq \mu_h$ because of
\begin{equation*}
\mu_0 = \frac{Q}{\sqrt{L}} \max_{i,l\in \Gamma_1}\| \bx_i\bh_i^*\BS_{i,1} \bb_{i,l}\| =   \frac{Q}{\sqrt{L}} \max_{i,l\in \Gamma_1}\| \bh_i^*\BS_{i,1} \bb_{i,l}\| \leq \mu_h.
\end{equation*} 
and the definition of $\mu_h$ in~\eqref{def:muh}. Also we define $\bw_{i,l}$ as
\begin{equation}\label{def:w}
\bw_{i,l} := \BW^*_{i,p-1}\BS_{i,p} \bb_{i,l}\in\CC^{N_i}, \quad l\in\Gamma_p, \quad 1\leq i\leq r
\end{equation}
and there holds
\begin{equation}\label{def:mup-1}
\max_{1\leq i\leq r,l\in\Gamma_p} \|\bw_{i,l}\| \leq \frac{\sqrt{L}}{Q}\mu_{p-1}.
\end{equation}
\begin{remark}
The definition of $\mu_p$ is a little complicated but the idea is simple. Since we have already shown in Lemma~\ref{lem:Y-hx} that $\BW_{i,p}\in T_i$ is very close to $\bh_i\bx_i^*$ for large $p$, $\mu_p$ can be viewed as a measure of the incoherence between $\BW_{i,p}$ (an approximation of $\bh_i\bx_i^*$) and $\{\bb_{i,l}\}_{l\in\Gamma_{p+1}}$. We would like to have ``small" $\mu_p$, i.e., $\mu_p\leq \|\BW_{i,p}\|\mu_h \leq 2^{-p}\mu_h$ which guarantees that $\A_{i,p}^*(\blambda_{p-1})$ concentrates well around $\BW_{i,p-1}$ for all $i$ and $p.$ This insight leads us to the following lemma.
\end{remark}

\begin{lemma}
\label{lem:normbound}
Let $\mu_p$ be defined in~\eqref{def:mup}  and $\BW_{i,p}$ satisfy
\begin{equation*}
\mu_p \leq 2^{-p}\mu_h, \quad \|\BW_{i,p}\|_F \leq 2^{-p}, \quad 1\leq p\leq P,  1\leq i\leq r.
\end{equation*}
If $Q\geq C_{\alpha+\log(r)}r\max\{\mum^2K, \mu^2_h N \}\log^2 L$, then
\begin{equation*}
\| \A^*_{i,p} (\blambda_{p-1}) - \BW_{i,p-1} \| \leq 2^{-p-1}, 
\end{equation*}
simultaneously for $(p,i)$ with probability at least $1 - L^{-\alpha + 1}$. Thus,  the second condition in~\eqref{eq:suffcond2},
\begin{equation*}
\|\PP_{\TB_i}(\BY_{i})\| \leq \frac{1}{2}
\end{equation*}
holds simultaneously for all $1\leq i\leq r$.
\end{lemma}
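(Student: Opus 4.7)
The plan is to apply a matrix Bernstein inequality to
$$\A_{i,p}^{*}(\blambda_{p-1}) - \BW_{i,p-1} = \sum_{l\in\Gamma_{p}}\CZ_{l},$$
after conditioning on the $\sigma$-algebra $\mathcal{F}_{p-1}$ generated by $\{\A_{j,q}:1\leq j\leq r,\,q<p\}$. Conditionally on $\mathcal{F}_{p-1}$ the iterates $\BW_{j,p-1}$, hence the vectors $\bw_{j,l}=\BW_{j,p-1}^{*}\BS_{j,p}\bb_{j,l}$, are deterministic, while $\{\ba_{j,l}\}_{j,l\in\Gamma_{p}}$ are mutually independent Gaussians independent of $\mathcal{F}_{p-1}$. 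Writing $(\blambda_{p-1})_{l}=\sum_{j=1}^{r}\ba_{j,l}^{*}\bw_{j,l}$ and using $\E[\ba_{j,l}^{*}\bw_{j,l}\,\bb_{i,l}\ba_{i,l}^{*}\mid\mathcal{F}_{p-1}]=\delta_{ij}\bb_{i,l}\bw_{i,l}^{*}$ together with $\sum_{l\in\Gamma_{p}}\bb_{i,l}\bw_{i,l}^{*}=\BT_{i,p}\BS_{i,p}\BW_{i,p-1}=\BW_{i,p-1}$, I would arrive at the conditionally centered, conditionally independent summands
$$\CZ_{l}=\Bigl(\sum_{j=1}^{r}\ba_{j,l}^{*}\bw_{j,l}\Bigr)\bb_{i,l}\ba_{i,l}^{*}-\bb_{i,l}\bw_{i,l}^{*}.$$

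The key quantitative step is to control the sub-exponential size $R=\max_{l}\|\CZ_{l}\|_{\psi_{1}}$ and the variance proxy $\sigma^{2}=\max\{\|\sum_{l}\E[\CZ_{l}\CZ_{l}^{*}]\|,\|\sum_{l}\E[\CZ_{l}^{*}\CZ_{l}]\|\}$. Since $\sum_{j=1}^{r}\ba_{j,l}^{*}\bw_{j,l}$ is sub-Gaussian with $\psi_{2}$-norm at most $(\sum_{j}\|\bw_{j,l}\|^{2})^{1/2}\lesssim\sqrt{rL}\,\mu_{p-1}/Q$ by independence across $j$ and~\eqref{def:mup-1}, while $\|\ba_{i,l}\|$ has $\psi_{2}$-norm $\lesssim\sqrt{N}$ and $\|\bb_{i,l}\|\leq\mum\sqrt{K/L}$, the Orlicz product bound (Lemma~\ref{lemma:psi}) yields
$$R\lesssim \mum\sqrt{rKN}\,\mu_{p-1}/Q\leq \sqrt{r}\,\max\{\mum^{2}K,\mu_{h}^{2}N\}\cdot 2^{-(p-1)}/Q$$
upon substituting $\mu_{p-1}\leq 2^{-(p-1)}\mu_{h}$. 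A computation parallel to that of Lemma~\ref{lem:CM1-ISO}, but with the extra cross-user term contributing a factor of $\sum_{j}\|\bw_{j,l}\|^{2}\lesssim rL\mu_{p-1}^{2}/Q^{2}$ to the variance, gives
$$\sigma^{2}\lesssim r\,\max\{\mum^{2}K,\mu_{h}^{2}N\}\cdot 2^{-2(p-1)}/Q.$$

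Feeding these into~\eqref{thm:bern} with deviation parameter $t=(\alpha+\log r+\log P)\log L$ and using the standing hypothesis $Q\geq C_{\alpha+\log(r)}\,r\max\{\mum^{2}K,\mu_{h}^{2}N\}\log^{2}L$ shows that $\|\A_{i,p}^{*}(\blambda_{p-1})-\BW_{i,p-1}\|\leq 2^{-p-1}$ on an event of conditional probability at least $1-L^{-(\alpha+\log r+\log P)}$. A union bound over the $Pr$ pairs $(i,p)$, with $P\leq\log_{2}(5r\gamma)\lesssim\log L$, absorbs the $\log r$ and $\log P$ losses into the constant and produces the claimed overall failure probability $\mathcal{O}(L^{-\alpha+1})$. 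The second bound in~\eqref{eq:suffcond2} then follows immediately from the telescoping estimate
$$\|\PP_{\TB_{i}}(\BY_{i})\|\leq\sum_{p=1}^{P}\|\A_{i,p}^{*}(\blambda_{p-1})-\BW_{i,p-1}\|\leq\sum_{p=1}^{P}2^{-p-1}<\tfrac{1}{2}.$$

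The main obstacle is the careful accounting of the cross-user contributions: independence of the $\ba_{j,l}$ across $j$ gives only a $\sqrt{r}$ improvement inside the $\psi_{2}$-norm and an $r$ factor in $\sigma^{2}$, and it is precisely this $r$-factor that, together with $P\asymp\log(r\gamma)$, is responsible for the overall $r^{2}$ scaling of $L=PQ$ in Theorem~\ref{thm:main}. A secondary issue, which I would defer to a companion lemma, is to verify the recursive incoherence bound $\mu_{p}\leq 2^{-p}\mu_{h}$ that is taken as an input here; this requires a very similar Bernstein argument applied to the vector $\BW_{i,p}^{*}\BS_{i,p+1}\bb_{i,l}$, and it is the other place where one would need to prove that the required incoherence is preserved along all golfing iterations with high probability.
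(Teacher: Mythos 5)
Your proposal is correct and follows essentially the same route as the paper: the same decomposition $\A_{i,p}^*(\blambda_{p-1})-\BW_{i,p-1}=\sum_{l\in\Gamma_p}\CZ_l$ with $\sum_{l\in\Gamma_p}\bb_{i,l}\bw_{i,l}^*=\BT_{i,p}\BS_{i,p}\BW_{i,p-1}=\BW_{i,p-1}$, conditioning on the earlier blocks so that the $\bw_{j,l}$ are deterministic, the same $\psi_1$/variance estimates fed into the matrix Bernstein inequality, and the same union bound over the $rP$ pairs followed by the telescoping bound on $\|\PP_{\TB_i}(\BY_i)\|$. The only (harmless) deviation is that you exploit independence across $j$ to get a $\sqrt{r}$ in the Orlicz bound where the paper's triangle-inequality argument gives $r$; this does not change the required scaling of $Q$.
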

\begin{remark} 
The assumption $\mu_p \leq 2^{-p}\mu_h$ is justified in Lemma~\ref{lem:mup-half}.
\end{remark}
\begin{proof}
It is shown in~\eqref{eq:Alambda} that
\begin{equation*}
\E\left(  \A_{i,p}^*(\blambda_{p-1})  - \BW_{i,p-1} \right)  = 0.
\end{equation*}
For any fixed $1\leq i\leq r$, we rewrite $ \A_{i,p}^*(\blambda_{p-1})  - \BW_{i,p-1}$ into the sum of rank-1 matrices with mean $\bzero$ by~\eqref{eq:lambda} and~\eqref{def:AP},
\begin{equation}\label{eq:dualsum}
\begin{split}
\A_{i,p}^*(\blambda_{p-1}) -  \BW_{i,p-1} 
 = \sum_{l\in\Gamma_p} \Big(\bb_{i,l}\bb_{i,l}^* \BS_{i,p} \BW_{i,p-1}\left(\ba_{i,l}\ba_{i,l}^* - \I_{N_i}\right)  + \sum_{j\neq i}\bb_{i,l}\bb_{j,l}^* \BS_{j,p} \BW_{j,p-1}\ba_{j,l}\ba_{i,l}^*\Big).
\end{split}
\end{equation}
Denote $\CZ_l$ by
\begin{equation}\label{eq:dualcz}
\CZ_l := \bb_{i,l}\bw_{i,l}^*\left(\ba_{i,l}\ba_{i,l}^* - \I_{N_i}\right) 
+ \sum_{j\neq i}\bb_{i,l}\bw_{j,l}^*\ba_{j,l}\ba_{i,l}^*\in\CC^{K_i\times N_i}
\end{equation}
where $\bw_{j,l}$ is defined in~\eqref{def:w}.
The goal is to bound the operator norm of~\eqref{eq:dualsum}, i.e, $\|\sum_{l\in\Gamma_p} \CZ_l\|$, by $2^{-p-1}$. An important fact here is that $\mu_{p-1}$ is independent of all $\ba_{i,l}$ with $l\in\Gamma_p$ because $\mu_{p-1}$ is a function of $\{\ba_{i,k}\}_{k\in\Gamma_{s}, s< p}.$ 
Following from~\eqref{def:mup} and the assumption $\mu_p \leq 2^{-p}\mu_h$, we have
\begin{equation}\label{def:bw}
\|\bw_{i,l}\| \leq \frac{\sqrt{L}}{Q}\mu_{p-1} \leq \frac{\sqrt{L}}{Q} 2^{-p+1}\mu_h, \quad \forall l\in\Gamma_p.
\end{equation}
The proof is more or less a routine: estimate $\|\CZ_l\|_{\psi_1}$, $\sigma^2$ and apply~\eqref{thm:bern}.
For any fixed $l\in\Gamma_{p},$
\begin{align*}
 \|\CZ_l\| &
 \leq \| \bb_{i,l}\bw_{i,l}^*\left(\ba_{i,l}\ba_{i,l}^* - \I_{N_i}\right) \|
+ \sum_{j\neq i}\| \bb_{i,l}\bw_{j,l}^*\ba_{j,l}\ba_{i,l}^* \|  \\
& \leq \frac{\mum\sqrt{K_i}}{\sqrt{L}} \left[ \|\bw_{i,l}^*\left(\ba_{i,l}\ba_{i,l}^* - \I_{N_i}\right) \|
+ \sum_{j\neq i}\|\bw_{j,l}^*\ba_{j,l}\ba_{i,l}^* \| \right].\\
\end{align*}
Note that for $j\neq i$, $\bw_{j,l}^*\ba_{j,l} \sim \mathcal{N}(0, \|\bw_{j,l}\|^2)$ and $\| \ba_{i,l} \|^2 \sim \chi^2_{N_i}$. There holds
\begin{align*}
\|(|\bw_{j,l}^*\ba_{j,l}|\cdot \|\ba_{i,l}\| )\|_{\psi_1} 
& \leq C\sqrt{N_i} \| \bw_{j,l} \| 
\leq C\frac{2^{-p+1}\mu_h \sqrt{LN_i}}{Q}, \\
\|\bw_{i,l}^* (\ba_{i,l}\ba_{i,l}^*  - \I_{N_i})\|_{\psi_1} 
& \leq C\sqrt{N_i}\|\bw_{i,l}\| \leq C\frac{2^{-p+1} \mu_h \sqrt{LN_i}}{Q} 
\end{align*}
follow from~\eqref{lem:11JB},~\eqref{def:bw} and Lemma~\ref{lemma:psi}. Taking the sum over $j$, from $1$ to $r$, gives 
\begin{align*}
\|\CZ_l\|_{\psi_1} & \leq  C\frac{2^{-p+1}r\mum \mu_h\sqrt{K_i N_i}}{Q}  \leq
C\frac{2^{-p+1}r\max\{ \mum^2K, \mu^2_h N\}}{Q}
, \quad l\in\Gamma_p.
\end{align*}
Thus we have $R : = \max_{l\in\Gamma_p} \|\CZ_l\|_{\psi_1} \leq C\frac{2^{-p+1}r\max\{ \mum^2K, \mu^2_h N\}}{Q}.$
Now let's  move on to the estimation of $\sigma^2$. From~\eqref{eq:dualcz}, we have
\begin{eqnarray*}
\CZ_l^* & = & (\ba_{i,l}\ba_{i,l}^* - \I_{N_i})\bw_{i,l}\bb_{i,l}^* + \sum_{j\neq i}\ba_{i,l}\ba_{j,l}^* \bw_{j,l}\bb_{i,l}^*.
\end{eqnarray*}
The corresponding $\CZ_l^*\CZ_l$ and $\CZ_l\CZ_l^*$ have quite complicated expressions. However, all the cross terms have zero expectation, which simplifies $\E(\CZ_l^*\CZ_l)$ and $\E(\CZ_l\CZ_l^*)$ a lot. 
\begin{align*}
\E(\CZ_l^*\CZ_l ) 
& =  \E\Big(\|\bb_{i,l}\|^2 (\ba_{i,l}\ba_{i,l}^* - \I_{N_i})\bw_{i,l} \bw_{i,l}^* (\ba_{i,l}\ba_{i,l}^* - \I_{N_i})  + \|\bb_{i,l}\|^2 \sum_{j\neq i} |\bw_{j,l}^*\ba_{j,l}|^2 \ba_{i,l}\ba_{i,l}^*\Big) \\ 
& =  \|\bb_{i,l}\|^2 \left(\sum_{j=1}^r\|\bw_{j,l}\|^2\right)\I_{N_i} + \|\bb_{i,l}\|^2\bar{\bw}_{i,l}\bar{\bw}_{i,l}^*,
\end{align*}
which follows from~\eqref{lem:12JB}.
\begin{align*}
\E(\CZ_l\CZ_l^*) 
& = \E\Big( \|(\ba_{i,l}\ba_{i,l}^* - \I_{N_i})\bw_{i,l}\|^2 \bb_{i,l}\bb_{i,l}^* 
+ \sum_{j\neq i}\|\ba_{i,l}\|^2 |\lag \bw_{j,l}, \ba_{j,l}\rag|^2 \bb_{i,l}\bb_{i,l}^* \Big)\\
& = N_i\sum_{j=1}^r\|\bw_{j,l}\|^2 \bb_{i,l}\bb_{i,l}^* + \|\bw_{i,l}\|^2 \bb_{i,l}\bb_{i,l}^* 
\end{align*}
which follows from~\eqref{lem:9JB} and $\E \|\ba_{i,l}\|^2 =N_i.$

\begin{align*}
 \left\|\sum_{l\in\Gamma_p}  \E(\CZ_l^*\CZ_l)\right\|  & \leq  2 \sum_{l\in\Gamma_p}\left[\|\bb_{i,l}\|^2 \left(\sum_{j=1}^r\|\bw_{j,l}\|^2\right)\right] \\
& \quad  \leq \frac{2\mum^2K_i}{L} \sum_{j=1}^r \sum_{l\in\Gamma_p} \Tr(\BW_{j,p-1}^*\BS_{j,p} \bb_{j,l}\bb_{j,l}^*\BS_{j,p}  \BW_{j,p-1})  \\
& \quad \leq \frac{2\mum^2K_i}{L} \sum_{j=1}^r \|\BW_{j,{p-1}}\BW_{j,p-1}^*\|_*\|\BS_{j,p}\|  \\
& \quad \leq \frac{2\mum^2K_i}{L} \frac{4L}{3Q} \sum_{j=1}^r\|\BW_{j,{p-1}}\|_F^2 \\
&\quad  \leq C\frac{4^{-p+1}r\mum^2K_i}{Q}
\end{align*}
where the last inequality follows from $\|\BW_{i,p-1}\|_F \leq 2^{-p+1}$ and $\|\cdot\|_*$ is the dual norm of $\|\cdot\|$.
\begin{align*}
 \left\|\sum_{l\in\Gamma_p}  \E(\CZ_l\CZ_l^*)\right\| & =  \left\| \sum_{l\in\Gamma_p} \left[ N_i\sum_{j=1}^r\|\bw_{j,l}\|^2 \bb_{i,l}\bb_{i,l}^* + \|\bw_{i,l}\|^2 \bb_{i,l}\bb_{i,l}^*\right] \right\| \\
&  \leq \max_{j,l} \|\bw_{j,l}\|^2 \cdot \left\| \sum_{l\in\Gamma_p} \left[ rN_i\bb_{i,l}\bb_{i,l}^* + \bb_{i,l}\bb_{i,l}^*\right] \right\| \\
&  \leq  \frac{\mu^2_{p-1} L}{Q^2} \cdot 2rN_i \|\BT_{i,p}\|  =  \frac{5r\mu^2_{p-1}N_i}{2Q} \\
&  \leq  C\frac{4^{-p+1}r\mu^2_h N_i}{Q}
\end{align*}
where $\|\bw_{i,l}\| \leq \frac{\sqrt{L}\mu_{p-1}}{Q} \leq \frac{2^{-p+1}\sqrt{L}\mu_h}{Q}$. Finally we have an upper bound of $\sigma^2$:
\begin{equation*}
\sigma^2 \leq C\frac{4^{-p+1}r \max\{\mum^2K, \mu^2_h N\}}{Q}.
\end{equation*}
By using Bernstein inequality~\eqref{thm:bern} with $t = \alpha\log L$ and $\log\left(\frac{\sqrt{Q}R}{\sigma}\right) \leq C_1\log L$, we have
\begin{align*}
\left\|\sum_{l\in\Gamma_p} \CZ_l \right\| 
& \leq C_0 2^{-p+1} \max\Big\{\sqrt{ \alpha\frac{r\max\{\mu^2_{\max}K, \mu^2_h N\}}{Q} \log L }, \alpha \frac{r\max\{\mum^2K, \muh^2N\}\log^2 L}{Q} \Big\}
\end{align*}
To let $\|\sum_{l\in\Gamma_p} \CZ_l \| \leq 2^{-p+1}$ hold with probability at least $1 - L^{-\alpha}$, it suffices to let $Q \geq C_{\alpha} r\max\{\mu^2_{\max}K, \mu^2_h N\}\log^2 L.$
Then we take the union bound over all $1\leq p\leq P$ and $1\leq i\leq r$, i.e., totally $rP$ events and then
\begin{equation*}
\| \A^*_{i,p} (\blambda_{p-1}) - \BW_{i,p-1} \| \leq 2^{-p-1}
\end{equation*}
holds simultaneously for all $(p,i)$ with probability at least $1 - rPL^{-\alpha} \geq 1 - rL^{-\alpha + 1}$. To compensate the loss of probability from the union bound, we can choose $\alpha' = \alpha + \log r$, which gives $Q\geq  C_{\alpha+\log(r)}r \max\{\mum^2K, \mu^2_h N\}\log^2L$.

\end{proof}

\subsubsection{Proof of $\mu_p \leq \frac{1}{2}\mu_{p-1}$}

Recall that $\mu_p$ is defined in~\eqref{def:mup} as  $\mu_{p} = \frac{Q}{\sqrt{L}}\max_{1\leq i\leq r, l\in\Gamma_{p+1}} (\|\bb_{i,l}^*\BS_{i,p+1}\BW_{i,p} \| )$.
The goal is to show that $\mu_p \leq \frac{1}{2}\mu_{p-1}$ and thus $\mu_p \leq 2^{-p}\mu_h$ hold with high probability. 

\begin{lemma}\label{lem:mup-half}
Under the assumption of~\eqref{cond:i},~\eqref{cond:iso} and~\eqref{cond:ST} and that $\ba_{i,l}\sim\mathcal{N}(\bzero, \I_{N_i})$ independently for $1\leq i\leq r$ then 
\begin{equation*}
\mu_p \leq \frac{1}{2} \mu_{p-1},
\end{equation*}
with probability at least $1 - L^{-\alpha+1}$ if $Q\geq C_{\alpha+\log(r)}r^2 \max\{\mum^2K, \mu^2_h N\}\log^2 L$.
\end{lemma}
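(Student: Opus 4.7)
The plan is to mimic the golfing-scheme arguments already used in this section---most closely Lemma~\ref{lem:normbound}---applied now to the \emph{vector-valued} quantity $\bq_l^{*}\BW_{i,p}$ with $\bq_l:=\BS_{i,p+1}\bb_{i,l}$ for $l\in\Gamma_{p+1}$, rather than to an operator norm. The first crucial observation is that $\bq_l$ depends only on the deterministic matrix $\BB_i$ restricted to $\Gamma_{p+1}$, while the matrices $\BW_{j,p-1}$, $j=1,\dots,r$, depend on Gaussian randomness indexed by $\bigcup_{s<p}\Gamma_s$. Thus, after conditioning on the randomness outside $\Gamma_p$, both $\bq_l$ and $\{\BW_{j,p-1}\}_{j=1}^r$ become deterministic, while only $\{\ba_{j,l'}:l'\in\Gamma_p,\,1\le j\le r\}$ remains random---exactly the setup for a Bernstein-type tail bound.

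I would then derive an explicit ``sum of independent centered summands'' expression for $\bq_l^{*}\BW_{i,p}$. Starting from the recursion~\eqref{eq:ww} and using the identity $\PP_{T_i}(\bb_{i,l'}\ba_{i,l'}^{*})=\bh_i(\bh_i^{*}\bb_{i,l'})\ba_{i,l'}^{*}+(\ba_{i,l'}^{*}\bx_i)(\I_{K_i}-\bh_i\bh_i^{*})\bb_{i,l'}\bx_i^{*}$, together with $\BT_{i,p}\BS_{i,p}=\I_{K_i}$, one obtains
\[
\bq_l^{*}\BW_{i,p}\;=\;-\sum_{l'\in\Gamma_p}\mathcal{Y}_{l'}, \qquad \mathcal{Y}_{l'}:=\mathcal{Z}_{l'}-\E\mathcal{Z}_{l'},
\]
where
\[
\mathcal{Z}_{l'}=\Big(\sum_{j=1}^{r}\bw_{j,l'}^{*}\ba_{j,l'}\Big)\Big(c_l\gamma_{l'}\ba_{i,l'}^{*}+\beta_{l,l'}(\ba_{i,l'}^{*}\bx_i)\bx_i^{*}\Big),
\]
with $c_l=\bq_l^{*}\bh_i$, $\gamma_{l'}=\bh_i^{*}\bb_{i,l'}$, $\beta_{l,l'}=\bq_l^{*}(\I_{K_i}-\bh_i\bh_i^{*})\bb_{i,l'}$, and $\bw_{j,l'}$ as in~\eqref{def:w}. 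A direct calculation using $\sum_{l'\in\Gamma_p}\bb_{i,l'}\bb_{i,l'}^{*}=\BT_{i,p}$ confirms $\E\sum_{l'}\mathcal{Z}_{l'}=\bq_l^{*}\BW_{i,p-1}$, which is what produces the centered sum above.

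Next I would apply the Bernstein inequality~\eqref{thm:bern} to $\sum_{l'}\mathcal{Y}_{l'}$, regarded as a sum of independent zero-mean $1\times N_i$ random matrices. The incoherence assumptions~\eqref{cond:i},~\eqref{cond:ST} together with the inductive hypothesis $\|\bw_{j,l'}\|\le\tfrac{\sqrt{L}}{Q}\mu_{p-1}$ supply the deterministic bounds $|c_l|\le\tfrac{\sqrt{L}}{Q}\mu_h$, $|\gamma_{l'}|\le\tfrac{\mu_h}{\sqrt{L}}$, and $|\beta_{l,l'}|\lesssim\tfrac{\mum^{2}K}{Q}$. Combining these with the usual sub-Gaussian/sub-exponential product rules for the random factors yields
\[
R\;\lesssim\;\frac{r\,\sqrt{L}\,\mu_{p-1}\max\{\mum^{2}K,\mu_h^{2}N\}}{Q^{2}}, \qquad \sigma^{2}\;\lesssim\;\frac{r\,L\,\mu_{p-1}^{2}\max\{\mum^{2}K,\mu_h^{2}N\}}{Q^{3}},
\]
where the variance estimate uses the telescoping identity $\sum_{l'}\|\bw_{j,l'}\|^{2}\le\|\BS_{j,p}\|\,\|\BW_{j,p-1}\|_F^{2}\lesssim\tfrac{L}{Q}\,4^{-p+1}$, which follows from~\eqref{cond:ST} and Lemma~\ref{lem:Y-hx}.

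Finally, applying~\eqref{thm:bern} with $t=\alpha\log L$ gives $\|\bq_l^{*}\BW_{i,p}\|\le\tfrac12\cdot\tfrac{\sqrt{L}}{Q}\mu_{p-1}$ with conditional probability at least $1-L^{-\alpha}$, provided $Q$ exceeds the threshold imposed by $R\log^{2}L$ and $\sigma\sqrt{\log L}$. A union bound over the at most $rQ\le rL$ pairs $(i,l)\in\{1,\dots,r\}\times\Gamma_{p+1}$, combined with the standard relabeling $\alpha\mapsto\alpha+\log r$, then delivers $\mu_p\le\tfrac12\mu_{p-1}$ with probability at least $1-L^{-\alpha+1}$ under the hypothesis $Q\ge C_{\alpha+\log r}\,r^{2}\max\{\mum^{2}K,\mu_h^{2}N\}\log^{2}L$; the extra factor of $r$ relative to Lemma~\ref{lem:normbound} comes from the $r$-fold user sum inside $\sum_j\bw_{j,l'}^{*}\ba_{j,l'}$, which inflates both $R$ and $\sigma^{2}$ by a factor of $r$. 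The main technical obstacle is the bookkeeping in Step~2: one must verify that the four cross-terms appearing after expanding $\PP_{T_i}\A_{i,p}^{*}\A_{j,p}(\BS_{j,p}\BW_{j,p-1})$ (cf.~\eqref{eq:paap}--\eqref{eq:paasp}), summed over all $r$ users, cancel precisely against $\bq_l^{*}\BW_{i,p-1}$ so that Bernstein is applied to a genuinely zero-mean sum.
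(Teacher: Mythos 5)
Your proposal is correct and follows the same architecture as the paper's proof: condition on the randomness indexed by $\bigcup_{s<p}\Gamma_s$ so that $\{\BW_{j,p-1}\}$ and $\bq_l=\BS_{i,p+1}\bb_{i,l}$ are deterministic, express $\bq_l^*\BW_{i,p}$ as a centered sum of independent terms over $l'\in\Gamma_p$ (your verification that $\E\sum_{l'}\mathcal{Z}_{l'}=\bq_l^*\BW_{i,p-1}$ via $\BT_{i,p}\BS_{i,p}=\I_{K_i}$ and $\BW_{i,p-1}\in T_i$ is exactly the cancellation the paper exploits), invoke the Bernstein inequality~\eqref{thm:bern} with the incoherence bounds~\eqref{cond:i} and~\eqref{cond:ST}, and union over $(i,l,p)$. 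The one genuine difference is how the sum over users is aggregated: the paper splits $-\BW_{i,p}=\Pi_1+\Pi_2$, further splits each into the $\bh_i\bh_i^*$ and $(\I_{K_i}-\bh_i\bh_i^*)(\cdot)\bx_i\bx_i^*$ components of $\PP_{T_i}$, bounds each of the $2(r-1)$ cross-user pieces separately by $\tfrac{\sqrt{L}\mu_{p-1}}{8rQ}$, and sums by the triangle inequality --- this last step is what forces $Q\gtrsim r^2\max\{\mum^2K,\mu_h^2N\}\log^2L$ in the paper's version. You instead apply Bernstein once to the combined sum $\sum_{l'}\mathcal{Y}_{l'}$, letting the independence across users enter through $\E|\sum_j\bw_{j,l'}^*\ba_{j,l'}|^2=\sum_j\|\bw_{j,l'}\|^2$, so that $\sigma^2$ (and $R$) inflate only by a factor $r$; this actually yields the conclusion already under $Q\gtrsim r\max\{\mum^2K,\mu_h^2N\}\log^2L$, a factor of $r$ better than the paper for this lemma (though immaterial for Theorem~\ref{thm:main}, whose $r^2$ bottleneck comes from the mutual incoherence condition). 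Your closing remark that the $r$-fold inflation of $R$ and $\sigma^2$ is what produces the $r^2$ in the hypothesis is a small bookkeeping slip --- an $r$-fold inflation of the variance costs only a factor $r$ in $Q$, not $r^2$ --- but since the stated hypothesis supplies $r^2\ge r$, the proof goes through with room to spare.
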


\begin{proof}
In order to show that $\mu_p \leq \frac{1}{2}\mu_{p-1}$, it is equivalent to prove
\begin{equation}\label{eq:muhalf-goal}
\frac{Q}{\sqrt{L}} \| \bb_{i,l}^*\BS_{i,p+1}\BW_{i,p} \| \leq \frac{1}{2} \mu_{p-1}
\end{equation}
for all $l\in\Gamma_{p+1}$ and $1\leq i\leq r$. 

From now on, we fix $i$ as well as $l\in \Gamma_{p+1}$, and show that $\frac{Q}{\sqrt{L}} \| \bb_{i,l}^*\BS_{i,p+1}\BW_{i,p} \| \leq \frac{1}{2} \mu_{p-1}$ holds with high probability. Then taking the union bound over $(i,l)$ completes the proof.
Following from~\eqref{eq:wp2} and~\eqref{def:w}, there holds
\begin{align*}
-\BW_{i,p} 
& = \underbrace{\PP_{T_i}\left(  \sum_{k\in\Gamma_{p}} \bb_{i,k}\bw^*_{i,k}(\ba_{i,k}\ba_{i,k}^* - \I_{N_i}) \right)}_{\Pi_1} 
+ \underbrace{\sum_{j\neq i}\PP_{T_i}\left( \sum_{k\in\Gamma_{p}} \bb_{i,k}\bw^*_{j,k}\ba_{j,k}\ba_{i,k}^*\right)}_{\Pi_2}.
\end{align*}
Obviously,~\eqref{eq:muhalf-goal} follows directly from the following two inequalities, 
\begin{equation}\label{eq:pi12}
 \| \bb_{i,l}^*\BS_{i,p+1}\Pi_1 \| \leq \frac{\sqrt{L}}{4Q} \mu_{p-1}, \quad  \| \bb_{i,l}^*\BS_{i,p+1}\Pi_2 \| \leq \frac{\sqrt{L}}{4Q} \mu_{p-1}.
\end{equation}

\paragraph{Step 1: Proof of $\| \bb_{i,l}^*\BS_{i,p+1}\Pi_1 \| \leq \frac{\sqrt{L}\mu_{p-1}}{4Q}$}
For a fixed $l\in\Gamma_{p+1}$,
\begin{align*}
\bb_{i,l}^*\BS_{i,p+1}\Pi_1 
& =  \sum_{k\in\Gamma_{p}} \bb_{i,l}^*\BS_{i,p+1}\Big[  \bh_i\bh_i^*\bb_{i,k} \bw^*_{i,k} (\ba_{i,k}\ba_{i,k}^* - \I_{N_i})   +  (\I_{K_i} - \bh_i\bh_i^*)\bb_{i,k}\bw^*_{i,k} (\ba_{i,k}\ba_{i,k}^* - \I_{N_i}) \bx_i\bx_i^*\Big] 
\end{align*}
where $\PP_{T_i}$ has an explicit form in~\eqref{cond:proj}.
Define
\begin{equation}\label{def:muhalf-bz}
\bz_k  :=  (\ba_{i,k}\ba_{i,k}^* - \I_{N_i}) \bw_{i,k}\bb_{i,k}^* \bh_i\bh_i^* \BS_{i,p+1}\bb_{i,l} \in\CC^{N_i}
\end{equation}
and
\begin{equation}\label{def:muhalf-z}
z_k: = \bb_{i,l}^*\BS_{i,p+1} (\I_{K_i} - \bh_i\bh_i^*)\bb_{i,k} \bw_{i,k}^* (\ba_{i,k}\ba_{i,k}^* - \I_{N_i})\bx_i.
\end{equation}
There holds
\begin{equation}\label{eq:bzz}
\|\bb^*_{i,l}\BS_{i,p+1}\Pi_1\| \leq \left\|\sum_{k\in\Gamma_p} \bz_k \right\| + \left|\sum_{k\in\Gamma_p} z_k \right|.
\end{equation}
Our goal now is to bound both $\|\sum_{k\in\Gamma_p} \bz_k \|$ and $|\sum_{k\in\Gamma_p} z_k |$ by $\frac{\sqrt{L}\mu_{p-1}}{8Q}$.
First we take a look at $\sum_{k\in\Gamma_{p}} \bz_k$. For each $k$, 
\begin{align*}
\|\bz_k\|_{\psi_1} 
& = |\bb_{i,l}^* \BS_{i,p+1}\bh_i| \cdot |\lag\bh_i, \bb_{i,k} \rag|\cdot \| (\ba_{i,k}\ba_{i,k}^* - \I_{N_i}) \bw_{i,k} \|_{\psi_1}  \\
& \leq C\frac{\sqrt{L}\mu_h}{Q}  \frac{\mu_h}{\sqrt{L}} \sqrt{N_i}\|\bw_{i,k}\| = C\frac{\mu^2_h\sqrt{N_i} \|\bw_{i,k}\|}{Q}
\end{align*}
which follows from~\eqref{cond:i} and $\| (\ba_{i,k}\ba_{i,k}^* - \I_{N_i}) \bw_{i,k} \|_{\psi_1} \leq C\sqrt{N_i}\|\bw_{i,k}\|$ in~\eqref{lem:11JB}. 
The expectation of $\E(\bz_k^*\bz_k)$ and $\E(\bz_k\bz_k^*)$ can be easily computed, 
\begin{align*}
\E(\bz_k^*\bz_k) 
& = |\bb_{i,l}^*\BS_{i,p+1}\bh_i|^2 |\bh_i^*\bb_{i,k}|^2 \E[\bw_{i,k}^*(\ba_{i,k}\ba_{i,k}^*- \I_{N_i})^2 \bw_{i,k}] \\
& =  (N_i + 1) |\bb_{i,l}^*\BS_{i,p+1}\bh_i|^2 |\bh_i^*\bb_{i,k}|^2 \|\bw_{i,k}\|^2, \\
\E(\bz_k\bz_k^*) 
& = |\bb_{i,l}^*\BS_{i,p+1}\bh_i|^2 |\bh_i^*\bb_{i,k}|^2 \\
& \qquad \E [(\ba_{i,k}\ba_{i,k}^*- \I_{N_i}) \bw_{i,k}\bw_{i,k}^*(\ba_{i,k}\ba_{i,k}^*- \I_{N_i})] \\
& = |\bb_{i,l}^*\BS_{i,p+1}\bh_i|^2 |\bh_i^*\bb_{i,k}|^2 ( \|\bw_{i,k}\|^2 \I_{N_i} + \bar{\bw}_{i,k}\bar{\bw}_{i,k}^*),
\end{align*}
which 
follows from~\eqref{lem:9JB} and~\eqref{lem:12JB}.

\begin{align*}
 \left\|\sum_{k\in\Gamma_{p}}\E(\bz_k^*\bz_k)\right\| & \leq 2N_i  |\bb_{i,l}^*\BS_{i,p+1}\bh_i|^2 \max_{k\in\Gamma_p} \{  \|\bw_{i,k}\|^2 \} \sum_{k\in\Gamma_p} |\bh_i^*\bb_{i,k}|^2 \\
& \leq \frac{2N_i L\mu^2_h}{Q^2} \max_{k\in\Gamma_p} \|\bw_{i,k}\|^2 \|\BT_{i,p}\| \\
& \leq \frac{2N_iL\mu^2_h}{Q^2} \frac{5Q}{4L} \max_{k\in\Gamma_p} \|\bw_{i,k}\|^2 \\
& = \frac{5\mu^2_hN_i \max_{k\in\Gamma_p} \|\bw_{i,k}\|^2}{2Q}.
\end{align*}
The estimation of $\|\sum_{k\in\Gamma_p} \E(\bz_k\bz_k^*)\|$ is quite similar to that of $\|\sum_{k\in\Gamma_{p}}\E(\bz_k^*\bz_k)\|$ and thus we give the result directly without going to the details,
\begin{equation*}
\left\|\sum_{k\in\Gamma_p} \E(\bz_k\bz_k^*) \right\| \leq\frac{5\mu^2_h \max_{k\in\Gamma_p} \|\bw_{i,k}\|^2}{2Q}.
\end{equation*}
Therefore, 
\begin{equation*}
R := \max_{k\in\Gamma_p} \|\bz_k\|_{\psi_1} \leq C \frac{\mu^2_h \sqrt{N}}{Q} \max_{k\in\Gamma_p} \|\bw_{i,k}\|
\end{equation*}
and similarly, we have 
\begin{equation*}
\sigma^2 \leq C\frac{\mu^2_h N\max_{k\in\Gamma_p} \|\bw_{i,k}\|^2}{Q}.
\end{equation*}
Then we just apply~\eqref{thm:bern} with $t = \alpha\log L$ and $\log(\sqrt{Q}R/\sigma) \leq C_1\log L$ to estimate $\|\sum_{k\in\Gamma_p} \bz_k\|$, 
\begin{align}\label{ineq:bz}
\begin{split}
\left\|\sum_{k\in\Gamma_p} \bz_k\right\|
& \leq C \max_{k\in\Gamma_p} \|\bw_{i,k}\|^2   \max\left\{ \sqrt{\frac{\alpha \mu^2_h N}{Q} \log L}, 
\frac{\alpha \mu^2_h \sqrt{N}}{Q}\log^2 L) \right\}.
\end{split}
\end{align}
Note that $\max_{k\in\Gamma_p} \|\bw_{i,k}\| \leq \frac{\sqrt{L}\mu_{p-1}}{Q}$ in~\eqref{def:mup-1} and thus it suffices to let $Q \geq C_{\alpha}\mu^2_h N \log^2 L$  to ensure that  $\|\sum_{k\in\Gamma_p} \bz_k\| \leq \frac{\sqrt{L}\mu_{p-1}}{8Q}$ holds with probability at least $1 - L^{-\alpha}.$

\bigskip

Concerning $z_k$ in~\eqref{def:muhalf-z}, we first estimate $\|z_k\|_{\psi_1}$:
\begin{align*}
\|z_k\|_{\psi_1} 
& = |\bb_{i,l}^*\BS_{i,p+1} (\I_{K_i} - \bh_i\bh_i^*)\bb_{i,k}| \| \bw_{i,k}^* (\ba_{i,k}\ba_{i,k}^* - \I_{N_i})\bx_i\|_{\psi_1} \\
& = \|\bb_{i,l}\| \|\BS_{i,p+1}\| \|\bb_{i,k}\| \| \bw_{i,k}^* (\ba_{i,k}\ba_{i,k}^* - \I_{N_i})\bx_i\|_{\psi_1} \\
& \leq C\frac{\mum^2K_i}{L}\cdot \frac{4L}{3Q} \|\bw_{i,k}\|  \leq C\frac{\mum^2K_i}{Q} \|\bw_{i,k}\|
\end{align*}
where $\| \bw_{i,k}^* (\ba_{i,k}\ba_{i,k}^* - \I_{N_i})\bx_i\|_{\psi_1}  \leq C\|\bw_{i,k}\| $ in~\eqref{lem:13JB}. Thus $R: = \max\{\| \bz_k \|_{\psi_1}\} \leq C\frac{\mum^2K_i}{Q} \max_{k\in\Gamma_p}\|\bw_{i,k}\|.$
Furthermore,
\begin{align*}
\E |z_k|^2 & = |\bb_{i,l}^*\BS_{i,p+1} (\I_{K_i} - \bh_i\bh_i^*)\bb_{i,k}|^2 \E\left[ \bw_{i,k}^* (\ba_{i,k}\ba_{i,k}^* - \I_{N_i})\bx_i\bx_i^*(\ba_{i,k}\ba_{i,k}^* - \I_{N_i}) \bw_{i,k}\right]\\
& = |\bb_{i,l}^*\BS_{i,p+1} (\I_{K_i} - \bh_i\bh_i^*)\bb_{i,k}|^2 \bw_{i,k}^*(\I_{N_i} + \bx_i\bx_i^*)\bw_{i,k} 
\end{align*}
which 
follows from~\eqref{lem:12JB}.
The variance $\sum_{k\in\Gamma_p} z_k$ is bounded by
\begin{align*}
\sigma^2 
& \leq \bb_{i,l}^* \BS_{i,p+1} (\I_{K_i} - \bh_i\bh_i) \BT_{i,p} (\I_{K_i} - \bh_i\bh_i^*) \BS_{i,p+1}\bb_{i,l}  \cdot \max_{k\in\Gamma_p} \bw_{i,k}^*(\I_{N_i} + \bx_i\bx_i^*)\bw_{i,k} \\
& \leq \|\bb_{i,l}\|^2 \|\BS_{i,p+1}\|^2 \|\BT_{i,p}\|\max_{k\in\Gamma_p} \bw_{i,k}^*(\I_{N_i} + \bx_i\bx_i^*)\bw_{i,k} \\
& \leq 2\frac{\mum^2K_i}{L} \frac{16L^2}{9Q^2} \frac{5Q}{4L} \max_{k\in\Gamma_p}\|\bw_{i,k}\|^2 \\
& =  \frac{40\mum^2K_i}{9Q} \max_{k\in\Gamma_p}\|\bw_{i,k}\|^2.
\end{align*}
Similar to what we have done in~\eqref{ineq:bz},
\begin{align}\label{ineq:z}
\begin{split}
\left|\sum_{k\in\Gamma_p} z_k\right|
& \leq C \max_{k\in\Gamma_p} \|\bw_{i,k}\|^2   \max\left\{ \sqrt{\frac{\alpha \mum^2 K}{Q} \log L}, 
\frac{\alpha \mum^2K}{Q}\log^2 L \right\}.
\end{split}
\end{align}
Note that $\max_{k\in\Gamma_p} \|\bw_{i,k}\| \leq \frac{\sqrt{L}\mu_{p-1}}{Q}$ and thus $Q \geq C_{\alpha}\mum^2 K \log^2 L$  guarantees that  $|\sum_{k\in\Gamma_p} z_k| \leq \frac{\sqrt{L}\mu_{p-1}}{8Q}$ holds with probability at least $1 - L^{-\alpha}.$
Combining~\eqref{ineq:bz} and~\eqref{ineq:z} gives 
\begin{align}\label{ineq:pi1}
\begin{split}
& \Pr\left( \| \bb_{i,l}^*\BS_{i,p}\Pi_1 \| \geq \frac{\sqrt{L}\mu_{p-1}}{4Q} \right) 
 \leq \Pr\left(\|\sum_{k\in\Gamma_p}\bz_k\| \geq \frac{\sqrt{L}\mu_{p-1}}{8Q}\right) + \Pr\left(|\sum_{k\in\Gamma_p}z_k| \geq \frac{\sqrt{L}\mu_{p-1}}{8Q}\right) 
\leq 2L^{-\alpha},
\end{split}
\end{align}
if $Q\geq C_{\alpha}\max\{ \mum^2K, \mu^2_h N \}\log^2L.$

\paragraph{Step 2: Proof of $\| \bb_{i,l}^*\BS_{i,p+1}\Pi_2 \| \leq \frac{\sqrt{L}\mu_{p-1}}{4Q}$}
For any fixed $l\in\Gamma_{p+1}$, 
\begin{eqnarray*}
\bb_{i,l}^*\BS_{i,p+1} \Pi_2 
 = \bb_{i,l} ^*\BS_{i,p+1} \sum_{j\neq i} \PP_{T_i}\left( \sum_{k\in\Gamma_{p}} \bb_{i,k}\bw_{j,k}^*\ba_{j,k}\ba_{i,k}^*\right).
\end{eqnarray*}
Now we rewrite $\bb_{i,l}^*\BS_{i,p+1} \Pi_2 $ into
\begin{equation*}
\bb_{i,l}^*\BS_{i,p+1} \Pi_2  =  \sum_{j\neq i}\left(\sum_{k\in\Gamma_p} \bz_{j,k}^* + z_{j,k} \bx_i^*\right) 
\end{equation*}
where
\begin{align}
\bz_{j,k} & :=   \bb_{i,l}^*\BS_{i,p+1} \bh_i\bh_i^* \bb_{i,k}\bw_{j,k}^* \ba_{j,k}\ba_{i,k},  \\
z_{j,k} & := \bb_{i,l}^*\BS_{i,p+1}  (\I_{K_i} - \bh_i\bh_i^*)\bb_{i,k}\bw_{j,k}^* \ba_{j,k}\ba_{i,k}^* \bx_i. \label{eq:muhalf-zjk}
\end{align}
By the triangle inequality,
\begin{equation}\label{eq:Pi2}
\| \bb_{i,l}^*\BS_{i,p+1}\Pi_2\| \leq \sum_{j\neq i, j\leq r}\left[\|\sum_{k\in\Gamma_p} \bz_{j,k}\| + |\sum_{k\in\Gamma_p} z_{j,k} | \right].
\end{equation}
To bound $\| \bb_{i,l}^*\BS_{i,p+1}\Pi_2\|$ by $\frac{\sqrt{L}\mu_{p-1}}{4Q}$, it suffices to prove that for all $1\leq j\leq r$, 
\begin{equation}\label{eq:zjk-bd}
\left\|\sum_{k\in\Gamma_p} \bz_{j,k}\right\| \leq \frac{\sqrt{L}\mu_{p-1}}{8rQ}, \quad |\sum_{k\in\Gamma_p} z_{j,k} | \leq \frac{\sqrt{L}\mu_{p-1}}{8rQ}.
\end{equation}
For $\sum_{k\in\Gamma_p} \bz_{j,k}$, 
\begin{align*}
\|\bz_{j,k}\|_{\psi_1} 
& \leq |\bb_{i,l}^*\BS_{i,p+1}\bh_i||\bh_i^* \bb_{i,k}| (|\bw_{j,k}^* \ba_{j,k}| \cdot \|\ba_{i,k}^*\|)_{\psi_1} \\
& \leq C\frac{\sqrt{L}\mu_h}{Q} \frac{\mu_h}{\sqrt{L}} \sqrt{N_i}\|\bw_{j,k}\| \\
& \leq C\frac{\muh^2 \sqrt{N_i} \max_{k\in\Gamma_p} \|\bw_{j,k}\|}{Q}.
\end{align*}
where $(|\bw_{j,k}^* \ba_{j,k}| \cdot \|\ba_{i,k}^*\|)_{\psi_1} \leq C\sqrt{N_i}\|\bw_{j,k}\|$ follows from Lemma~\ref{lemma:psi}.
Now we move on to the estimation of $\sigma^2.$

\begin{align*}
\left\| \sum_{k\in\Gamma_p} \E \bz_{j,k}^*\bz_{j,k} \right\| & = \sum_{k\in\Gamma_p} \Big( | \bb_{i,l}^*\BS_{i,p+1} \bh_i\bh_i^* \bb_{i,k} |^2 \E\left( |\bw_{j,k}^* \ba_{j,k}|^2 \|\ba_{i,k}\|^2\right) \Big) \\
& =  N_i\sum_{k\in\Gamma_p} | \bb_{i,l}^*\BS_{i,p+1} \bh_i|^2|\bh_i^* \bb_{i,k} |^2 | \|\bw_{j,k}\|^2 \\
& \leq N_i \frac{L\mu^2_h}{Q^2} \max_{k\in\Gamma_p} \|\bw_{j,k}\|^2 \|\BT_{i,p}\| \\
& \leq \frac{5\mu^2_h N_i \max_{k\in\Gamma_p} \|\bw_{j,k}\|^2}{4Q}
\end{align*}
and similarly, 
\begin{equation*}
\left\| \sum_{k\in\Gamma_p} \E \bz_{j,k}\bz_{j,k}^* \right\| 
 \leq  \frac{5\mu^2_h\max_{k\in\Gamma_p} \|\bw_{j,k}\|^2}{4Q}.
\end{equation*}
Thus $\sigma^2 \leq C\frac{\mu^2_hN_i\max_{k\in\Gamma_p} \|\bw_{j,k}\|^2}{Q}$.
By applying Bernstein inequality~\eqref{thm:bern}, we have
\begin{align*}
\begin{split}
\left\|\sum_{k\in\Gamma_p} \bz_{j,k}\right\| 
& \leq 
C \max_{k\in\Gamma_p} \|\bw_{j,k}\|^2 \max\left\{ \sqrt{\frac{\alpha \muh^2 N}{Q} \log L}, 
\frac{\alpha \muh^2N}{Q}\log^2 L \right\}
\end{split}
\end{align*}
where $\max_{k\in\Gamma_p} \|\bw_{j,k}\| \leq \frac{\sqrt{L}\mu_{p-1}}{Q}.$

Choosing $Q\geq C_{\alpha}r^2\mu^2_h N\log^2L$ leads to
\begin{equation}\label{eq:bz-jk}
\left\|\sum_{k\in\Gamma_p} \bz_{j,k}\right\| \leq   \frac{\sqrt{L}\mu_{p-1}}{8rQ}
\end{equation}
with probability at least $1 - L^{-\alpha}$ for a fixed $j:1\leq j\leq r.$

\bigskip

For $\sum_{k\in\Gamma_p} z_{j,k}$ defined in~\eqref{eq:muhalf-zjk} and fixed $j$,
\begin{align*}
R: & = \max_{k\in\Gamma_p} |z_{j,k}| \leq \max_{k\in\Gamma_p}\| \bb_{i,l}^*\BS_{i,p+1}  (\I_{K_i} - \bh_i\bh_i^*)\bb_{i,k} \|  \cdot \max_{k\in\Gamma_p}\|\bw_{j,k}^* \ba_{j,k}\ba_{i,k}^* \bx_i\|_{\psi_1} \\
& \leq  C\frac{\mu^2_{\max}K_i}{L} \frac{4L}{3Q} \max_{k\in\Gamma_p}\|\bw_{j,k}\| = C\frac{\mu^2_{\max}K_i\max_{k\in\Gamma_p}\|\bw_{j,k}\|}{Q} \\
\end{align*} 
where $\|\bw_{j,k}^* \ba_{j,k}\ba_{i,k}^* \bx_i\|_{\psi_1}  \leq C\|\bw_{j,k}\|$ follows from Lemma~\ref{lemma:psi}.
Now we proceed to compute the  variance by
\begin{align*}
\sigma^2 & : = \sum_{k\in\Gamma_p}\E |z_{j,k}|^2  =  \sum_{k\in\Gamma_p} |\bb_{i,l}^*\BS_{i,p+1}  (\I_{K_i} - \bh_i\bh_i^*)\bb_{i,k}|^2 \|\bw_{j,k}\|^2 \\
& \leq \max_{k\in\Gamma_p}\|\bw_{j,k}\|^2  \|\BS_{i,p+1}\|^2 \|\BT_{i,p}\| \|\bb_{i,l}\|^2 \\
& \leq \max_{k\in\Gamma_p}\|\bw_{j,k}\|^2 \frac{16L^2}{9Q^2} \frac{5Q}{4L} \frac{\mum^2K_i}{L} \\
& \leq C\frac{\max_{k\in\Gamma_p}\|\bw_{j,k}\|^2 \mum^2K_i}{Q}.
\end{align*}
Then we apply the Bernstein inequality to get an upper bound of $|\sum_k z_{j,k}|$ for fixed $j$,
\begin{align*}
\left|\sum_{k\in\Gamma_p} z_{j,k}\right| 
& \leq C \max_{k\in\Gamma_p} \|\bw_{j,k}\|^2 \max\Big\{ \sqrt{\frac{\alpha \mum^2 K}{Q} \log L}, \frac{\alpha \mum^2K}{Q}\log^2 L) \Big\} \leq \frac{\sqrt{L}\mu_{p-1}}{8rQ}
\end{align*}
with probability $1 - L^{-\alpha}$ if $Q\geq C_{\alpha} r^2\mu^2_{\max}K \log^2 L$. 
Thus, combined with~\eqref{eq:bz-jk}, we have proven that for fixed $j$, 
\begin{equation*}
\left\|\sum_{k\in\Gamma_p} \bz_{j,k} \right\| + \left|\sum_{k\in\Gamma_p}z_{j,k}\right| \leq \frac{\sqrt{L}\mu_{p-1}}{4rQ}
\end{equation*}
holds with probability at least $1 - 2L^{-\alpha}.$ By taking the union bound over $1\leq j\leq r$ and using~\eqref{eq:Pi2}, we can conclude that 
\begin{equation*}
\|\bb_{i,l}^*\BS_{i,p+1}\Pi_2\| \leq \frac{\sqrt{L}\mu_{p-1}}{4Q}
\end{equation*}
with probability $1 - rL^{-\alpha}$ if $Q\geq C_{\alpha} r^2\mu^2_{\max}K \log^2 L$.

\paragraph{Final step: Proof of~\eqref{eq:muhalf-goal}}

To sum up, we have already shown that for fixed $i$ and $l\in\Gamma_p$, 
\begin{align*}
& \frac{Q}{\sqrt{L}} \| \bb_{i,l}^*\BS_{i,p+1}\BW_{i,p} \| 
\leq  \left\|\sum_{k\in\Gamma_p} \bz_k \right\| + \left|\sum_{k\in\Gamma_p} z_k \right| + \sum_{j\neq i}\left[\left\|\sum_{k\in\Gamma_p} \bz_{j,k}\right\| + \left|\sum_{k\in\Gamma_p} z_{j,k} \right| \right] \leq \frac{1}{2}\mu_{p-1}
\end{align*}
with probability at least $1 - (r+ 2)L^{-\alpha}$  if $Q\geq C_{\alpha}r^2 \max\{\mu^2_{\max}K, \mu^2_h N\}\log^2L$. Then we take union bound over all $1\leq i\leq r$ and $l\in\Gamma_p$ and $1\leq p\leq P$ and obtain
\begin{align*}
& \Pr\left(\frac{Q}{\sqrt{L}}\max_{i,l,p} \| \bb_{i,l}^*\BS_{i,p+1}\BW_{i,p} \|  \leq \frac{1}{2}\mu_{p-1} \right)  \geq 1 - r(r+2)PQL^{-\alpha} = 1 - r(r+2)L^{-\alpha+1}.
\end{align*}
If we choose a slightly larger $\alpha$ as $\tilde{\alpha} = \alpha + 2\log r$, i.e., $Q\geq C_{\alpha+2\log(r)}r^2 \max\{\mu^2_{\max}K, \mu^2_h N\}\log^2L$, then $\mu_{p} \leq \frac{1}{2}\mu_{p-1}$ holds for all $p$ with probability at least $1 - L^{-\alpha+1}.$ 
\end{proof}

\subsection{Proof of Theorem~\ref{thm:main} \label{mainproof}}

We now assemble the various intermediate and auxiliary results to establish Theorem~\ref{thm:main}.
We recall that Theorem~\ref{thm:main} follows immediately from Lemma~\ref{lemma:suffcond}, which in turn
hinges on the validity of the conditions~\eqref{cond:suffcond} and~\eqref{cond:suff2}. Let us focus
on condition~\eqref{cond:suffcond} first, i.e., we need to show that
\begin{gather}
\max_{1\leq i\leq r} \, \|\PP_{T_i}\A_i^*\A_i \PP_{T_i} - \PP_{T_i}\| \leq \frac{1}{4}, \label{cond:suffcond1} \\
  \max_{1\leq j\neq k \leq r}\, \|\PP_{T_j}\A_j^*\A_k\PP_{T_k}\|  \leq \frac{1}{4r},  \label{cond:suffcond2} \\
\max_{1\leq i\leq r} \, \|\A_i\| \leq \gamma. \label{cond:suffcond3}
\end{gather}
Under the assumptions of Theorem~\ref{thm:main}, Proposition~\ref{prop:lcisop} ensures that 
condition~\eqref{cond:suffcond1} holds with probability at least $1-L^{-\alpha+1}$ if
$Q\geq C_{\alpha+\log(r)}\max\{\mum^2K, \mu^2_h N\}\log^2 L$ where $K := \max K_i$ and $N := \max N_i.$ 
Moving on to the incoherence condition~\eqref{cond:suffcond2}, Proposition~\ref{prop:mix} implies that this condition 
holds with probability at least $1 - L^{-\alpha + 1}$ if $Q\geq C_{\alpha+\log(r)}r^2 \max\{\mum^2K, \mu^2_h N\}\log^2 L$. Furthermore, $\gamma$ in condition~\eqref{cond:suffcond3} is bounded by $\sqrt{N(\log NL/2) + \alpha \log L}$ with probability $1 - rL^{-\alpha}$ according to Lemma 1 in~\cite{RR12}.
We now turn our attention to condition~\eqref{cond:suff2}. Under the assumption that properties~\eqref{eq:lcisop}
and~\eqref{eq:mix} hold, Lemma~\ref{lem:Y-hx} implies the first part of condition~\eqref{cond:suff2}. 
The two properties \eqref{eq:lcisop} and~\eqref{eq:mix} have been established in Propositions~\ref{prop:lcisop}
and~\ref{prop:mix}, respectively. The second part of the approximate dual certificate condition
in~\eqref{cond:suff2} is established in Lemma~\ref{lem:normbound} with the aid of Lemma~\ref{lem:mup-half}, with probability at least $1 - 2L^{-\alpha+1}$ if $Q\geq C_{\alpha+\log(r)}r^2 \max\{ \mum^2 K, \mu^2_h N \}\log^2L$.

By ``summing up'' all the probabilities of failure in each substep,
\begin{equation*}
\Pr( \hBX_i = \BX_i, \forall 1\leq i\leq r)  \geq 1 - 5L^{-\alpha + 1}
\end{equation*}
if $Q\geq C_{\alpha+\log(r)}r^2 \max\{ \mum^2 K, \mu^2_h N \}\log^2 L$. Since $L = PQ$ and $P$ is chosen to be greater than $\log_2(5r\gamma)$, it suffices to let $L$ yield:
\begin{equation*}
L \geq C_{\alpha+\log(r)}r^2 \max\{ \mum^2 K, \mu^2_h N \}\log^2 L \log\gamma
\end{equation*}
with $\gamma \leq \sqrt{N\log(NL/2) + \alpha\log L}.$
Thus, the sufficient conditions stated in Lemma~\ref{lemma:suffcond} are fulfilled with probability at least
$1 - {\cal O}(L^{-\alpha+1})$, hence Theorem~\ref{thm:main} follows now directly from Lemma~\ref{lemma:suffcond}.

\subsection{Stability theory -- Proof of Theorem~\ref{thm:noise}\label{s:stability}}

Since we do not assume $\{\BX_i\}_{i=1}^r$  are of the same size, notation will be an issue during the discussion. We introduce a few notations in order to make the derivations easier.
Recall $\sum_{i=1}^r\A_i(\BZ_i)$ is actually a linear mapping from $\CC^{K_1\times N_1} \oplus\cdots \oplus \CC^{K_r\times N_r}$ to $\CC^L.$ This linear operator can be easily written into matrix form:
define $\BPhi := \left[\BPhi_1 | \cdots |\BPhi_r\right]$ with $\BPhi_i\in \CC^{L\times K_iN_i } $ and $\BPhi \in \CC^{ L\times (\sum_{i=1}^r K_iN_i)}$ as
\begin{equation*}
\BPhi_i\VEC(\BZ_i) := \VEC(\A_i(\BZ_i)); \BPhi
\begin{bmatrix}
\VEC(\BZ_1) \\
\vdots \\
 \VEC(\BZ_r) 
\end{bmatrix}
: =  \VEC(\sum_{i=1}^r\A_i(\BZ_i))
\end{equation*}
where $\BZ_i \in\CC^{K_i\times N_i}.$ The operation ``$\VEC$" vectorizes a matrix into a column vector. $\BPhi$ and $\BPhi_i$ are well-defined here. 
It could be be shown by slightly modifying the proof of Lemma 2 in~\cite{RR12} that
\begin{equation*}
\BPhi\BPhi^* = \sum_{i=1}^r\BPhi_i\BPhi_i^*\in\CC^{L\times L}
\end{equation*}
is well conditioned, which means the largest and smallest eigenvalues of $\BPHI\BPHI^*$, denoted by $\lambda_{\max}^2$ and $\lambda_{\min}^2$ respectively, are of the same scale. More precisely,
\begin{equation}\label{eq:BPh-cm}
0.48 \mu^2_{\min}\frac{\sum_{i=1}^rK_iN_i}{L} \leq \lambda_{\min}^2 \leq \lambda_{\max}^2 \leq 4.5\mu^2_{\max}\frac{\sum_{i=1}^r K_iN_i}{L}
\end{equation}
with probability at least $1 - O(L^{-\alpha + 1})$ if $\sum_{i=1}^r K_iN_i \geq \frac{C_{\alpha}}{\mu^2_{\min}}L\log^2L$ with $\mumin^2$ defined in~\eqref{def:mumax}. Note that $\sum_{i=1}^rK_i N_i$ is usually much larger than $L$ in applications. 

\bigskip

Let $\BE_i = \hBX_i - \BX_{i}\in\CC^{K_i\times N_i}, 1\leq i\leq r$ be the difference between $\hBX_i$ and $\BX_{i}$.  Define 
\begin{equation*}
\be_i := \VEC(\BE_i), \quad \be : = 
\begin{bmatrix}
\be_1 \\
\vdots \\
\be_r 
\end{bmatrix}
\in\CC^{(\sum_{i=1}^rK_iN_i)\times 1},
\end{equation*}
where $\be$ is a long vector consisting of all $\be_i$, $1\leq i\leq r$. 
We also consider $\be$ being projected on $\Ran(\BPhi^*)$, denoted by $\be_{\BPhi}$, 
\begin{equation*}
\be_{\BPhi} : = \BPhi^*(\BPhi\BPhi^*)^{-1}\BPhi \be
\end{equation*}
where $\BPhi\be = \sum_{i=1}^r\BPhi_i\be_i  =  \sum_{i=1}^r\A_i(\BE_i)$.  From~\eqref{cvx:noise},  we know that 
\begin{align}\label{eq:2eta}
\begin{split}
\|\BPhi \be\|_F 
& = \left\|\sum_{i=1}^r \A_i(\BE_i)\right\|_F\\
&  \leq \left\|\sum_{i=1}^r \A_i(\hat{\BX}_i ) - \hby\right\|_F + \left\|\sum_{i=1}^r \A_i(\BX_i) - \hby\right\|_F \leq 2\eta
\end{split}
\end{align}
since both $\{\hBX_i\}_{i=1}^r$ and $\{\BX_i\}_{i=1}^r$ are inside the feasible set.
Similarly, define $\be_{\BPHIB} := \be - \be_{\BPHI}\in\Null(\BPHI)$ and denote $\BH_i\in\CC^{K_i\times N_i}$ and $\BJ_i\in\CC^{K_i\times N_i}$, $1\leq i\leq r,$ as matrices satisfying
\begin{equation}\label{def:HJ}
\be_{\BPHIB} := 
\begin{bmatrix}
\VEC(\BH_1) \\
\vdots \\
\VEC(\BH_r) 
\end{bmatrix},
\quad
\be_{\BPHI} := 
\begin{bmatrix}
\VEC(\BJ_1) \\
\vdots \\
\VEC(\BJ_r) 
\end{bmatrix}
\end{equation}
where $\sum_{i=1}^r\A_i(\BH_i)= \BPHI\be_{\BPHIB}= \bzero$ and $\BH_i + \BJ_i = \BE_i$ follows from the definition of $\BH_i$ and $\BJ_i.$  

Define $\BP_{T_i}$ as the projection matrix from $\VEC(\BZ)$ to $\VEC(\PP_{T_i}(\BZ))$, as
\begin{equation*}
\BP_{T_i}\VEC(\BZ) = \VEC(\PP_{T_i}(\BZ)), \quad \BP_{T_i} \in\CC^{(K_iN_i)\times (K_iN_i)}
\end{equation*}
and 
\begin{align*}
\BP_T : = 
\begin{bmatrix}
\BP_{\BT_1} & \cdots & \bzero \\
\vdots & \ddots &\vdots \\
\bzero & \cdots & \BP_{\BT_r}
\end{bmatrix}, 
\BP_{\TB} : = 
\begin{bmatrix}
\I_{K_1N_1} - \BP_{\BT_1} &\cdots & \bzero \\
\vdots & \ddots & \vdots \\
\bzero & \cdots &\I_{K_rN_r}  -  \BP_{\BT_r}
\end{bmatrix}.
\end{align*}
Actually the definitions above immediately give the following equations:\begin{align}\label{eq:orth}
\begin{split}
\BP_T \be = 
\begin{bmatrix}
\BP_{T_1} \be_1 \\
\vdots \\
\BP_{T_r} \be_r 
\end{bmatrix}, \quad
\BP_T \be_{\BPHIB} = 
\begin{bmatrix}
\VEC(\BH_{1,T_1}) \\
\vdots \\
\VEC(\BH_{r,T_r}) 
\end{bmatrix}, \quad \BP_{\TB} \be_{\BPHIB} = 
\begin{bmatrix}
\VEC(\BH_{1,\TB_1}) \\
\vdots \\
\VEC(\BH_{r,\TB_r}) 
\end{bmatrix}.
\end{split}
\end{align}

We will prove that if the observation $\hby$ is contaminated by noise, the minimizer $\hat{\BX}_i$ to the convex program~\eqref{cvx:noise} yields, 
\begin{equation*}
\|\be\|  \leq C \frac{r\lambda_{\max}\sqrt{\max\{K, N\}}}{\lambda_{\min} (1 - \beta - 2r\gamma\alpha)}\eta.
\end{equation*}

\begin{proof}
The proof basically follows similar arguments as~\cite{RR12,CanPlan10}.
First we decompose $\be$ into several linear subspaces. By using orthogonality and Pythagorean Theorem,
\begin{equation}\label{eq:pty}
\|\be\|_F^2 = \| \be_{\BPhi} \|^2 + \|\BP_T\be_{\BPHIB} \|_F^2 + \|\BP_{\TB}\be_{\BPHIB} \|_F^2.
\end{equation}
Following from~\eqref{eq:orth},~\eqref{lowbd} and~\eqref{upperbd} gives an estimate of the second term in~\eqref{eq:pty},
\begin{align*}
\|\BP_T \be_{\BPHIB}\|^2_F  & = \sum_{i=1}^r\|\BH_{i,T_i}\|_F^2  \leq 2 \left\| \sum_{i=1}^r \A_i(\BH_{i, T_i}) \right\|_F^2 \\
& = 2 \left\| \sum_{i=1}^r \A_i(\BH_{i, \TB_i}) \right\|_F^2 \leq 2\gamma^2\left(\sum_{i=1}^r \| \BH_{i, \TB_i} \|_F\right)^2  \\
&  \leq 2r\gamma^2 \sum_{i=1}^r \| \BH_{i, \TB_i} \|_F^2  \leq 2r \gamma^2 \|  \BP_{\TB}\be_{\BPHIB}\|_F^2\\
&  \leq  2r\lambda_{\max}^2  \|  \BP_{\TB}\be_{\BPHIB}\|_F^2
\end{align*}
where $\max \|\A_i\| \leq \gamma$, $\lambda^2_{\max}$ is largest eigenvalue of $\BPHI\BPHI^*$ and obviously $\gamma \leq \lambda_{\max}$. The second equality holds since $\sum_{i=1}^r\A_i(\BH_i) = \bzero.$
For the third term in~\eqref{eq:pty}, by reversing the arguments in the proof of Lemma~\ref{lemma:suffcond}, we have
\begin{align*}
\|  \BP_{\TB}\be_{\BPHIB}\|_F 
& = \sqrt{ \sum_{i=1}^r\| \BH_{i, \TB_i}\|^2_F} \leq  \sum_{i=1}^r\| \BH_{i, \TB_i}\|_F\\
 & \leq \frac{1}{1 - \beta - 2r\gamma\alpha} \sum_{i=1}^r \lag \BH_{i}, \bh_i\bx_i^*\rag + \| \BH_{i, \TB_i} \|_* \\
& \leq \frac{1}{1 - \beta - 2r\gamma \alpha} \sum_{i=1}^r \left[ \|\BX_{i} + \BH_i \|_* - \|\BX_{i}\|_* \right] \\
& \leq \frac{1}{1 - \beta - 2r\gamma \alpha} \sum_{i=1}^r \left[ \|\BX_{i} + \BH_i \|_* - \|\hBX_{i}\|_* \right]
\end{align*}
where the first equality comes from~\eqref{eq:orth}, the third inequality is due to Lemma~\ref{lem:suff-1st} and the last inequality follows from $\sum_{i=1}^r\|\hBX_i\|_*  \leq \sum_{i=1}^r \|\BX_{i}\|_*$ in~\eqref{inq:noiseobj}. From the definition of $\BH_i$ and $\BJ_i$ in~\eqref{def:HJ}, $\hBX_i = \BX_{i} + \BE_i = \BX_{i} + \BH_i + \BJ_i$ and triangle inequality gives,
\begin{align*}
\|\BP_{\TB} \be_{\BPHIB}\|_F 
& \leq \frac{1}{1 - \beta - 2r\gamma \alpha} \sum_{i=1}^r \|\BJ_i\|_*   \leq \frac{\sqrt{\max\{K, N\}}}{1 - \beta - 2r\gamma \alpha} \sum_{i=1}^r \|\BJ_i\|_F.
\end{align*}
In other words,
\begin{align}\label{ineq:P2}
\begin{split}
\|\BP_{\TB}\be_{\BPHIB}\|^2_F  
& \leq \frac{r\max\{K, N\}}{(1 - \beta - 2r\gamma \alpha)^2} \sum_{i=1}^r \|\BJ_i\|_F^2   \leq \frac{r\max\{K, N\}}{(1 - \beta - 2r\gamma \alpha)^2}   \|\be_{\BPhi}\|_F^2
\end{split}
\end{align}
where $\|\be_{\BPhi}\|_F^2 = \sum_{i=1}^r \|\BJ_i\|_F^2$ follows from~\eqref{def:HJ}.

By combining all those estimations together, i.e., $\|\BP_T \be_{\BPHIB}\|^2_F \leq 4r\lambda_{\max}^2  \|  \BP_{\TB}\be_{\BPHIB}\|_F^2$,~\eqref{ineq:P2} and~\eqref{eq:pty}, we arrive at
\begin{eqnarray*}
\|\be\|_F^2 & \leq  \|\be_{\BPhi}\|_F^2 + (2r\lambda_{\max}^2 + 1)\|\BP_{\TB}\be_{\BPHIB}\|_F^2 \leq C \frac{r^2\lambda_{\max}^2\max\{K, N\}}{(1 - \beta - 2r\gamma\alpha)^2} \|\be_{\BPHI}\|_F^2.
\end{eqnarray*}
Note that $\be_{\BPhi} : = \BPhi^*(\BPhi\BPhi^*)^{-1}\BPhi \be$, 
\begin{equation*}
\|\be_{\BPHI}\|_F \leq \frac{1}{\lambda_{\min} } \|\BPHI\be\|_F
\end{equation*}
where $\lambda_{\min}^2$ is the smallest eigenvalue of $\BPHI\BPHI^*$.
By applying $\|\BPHI\be\| \leq 2\eta$ in~\eqref{eq:2eta}, we have
\begin{align*}
\|\be\|_F & \leq 
C \frac{r\lambda_{\max}\sqrt{\max\{K, N\}}}{\lambda_{\min} (1 - \beta - 2r\gamma\alpha)} \|\BPHI\be\|_F \leq C \frac{r\lambda_{\max}\sqrt{\max\{K, N\}}}{\lambda_{\min} (1 - \beta - 2r\gamma\alpha)}\eta.
\end{align*}
In particular, if we choose $\alpha = (5r\gamma)^{-1}$ and $\beta = \frac{1}{2}$ according to Lemma~\ref{lemma:suffcond}, then $\frac{1}{1- \beta - 2r\gamma \alpha} = 10$.
This completes the proof of Theorem~\ref{thm:noise}.
\end{proof}

\section{Conclusion \label{s:conclusion}}

We have developed a theoretical and numerical framework for simultaneously blindly deconvolve and demix multiple transmitted signals from just one received signal.
The reconstruction of the transmitted signals and the impulse responses can be accomplished by solving a semidefinite program. 
Our findings  are of interest for a variety of applications, in particular for the area of multiuser wireless communications. Our theory provides a bound for
the number of measurements needed to guarantee successful recovery. While this bound scales quadratically in the number of unknown signals, it seems that our theory is somewhat pessimistic. Indeed, numerical experiments indicate, surprisingly, that the proposed algorithm succeeds already even if the number of measurements is fairly close to the theoretical limit with respect to the number of degrees of freedom. It would be very desirable to develop a theory that can explain this remarkable phenomenon.

Hence, this paper does not only provide answers, but it also triggers several follow-up questions. Some key questions are: (i)~Can we derive a theoretical bound that scales linearly in $r$, rather than
quadratic in $r$ as our current theory? (ii)~Is it possible to develop satisfactory theoretical bounds for deterministic matrices $\BA_i$?
 (iii)~Do there exist faster numerical algorithms that do not need to resort to solving a semidefinite program 
 (say in the style of the phase retrieval Wirtinger-Flow algorithm~\cite{CLS14}) with provable performance guarantees?\footnote{After completion of this manuscript we have developed such algorithms for both $r=1$ and $r > 1$, see~\cite{LLSW16} and~\cite{LS17} respectively. However, the sampling complexity in~\cite{LS17} is still sub-optimal, i.e., $L \gtrsim r^2(K+N)$ instead of $L\gtrsim r(K+N)$.}
(iv)~Can we develop a theoretical framework where the signals $\bx_i$ belong to some non-linear subspace, e.g.\ for sparse $\bx_i$? 
(v)~How do the relevant parameters change when we have multiple (but less than $r$) receive signals? Answers to these questions
could be particularly relevant in connection with the future Internet-of-Things.

\section*{Acknowledgement}

The authors want to thank Holger Boche and Benjamin Friedlander for insightful discussions on the topic of the paper. The authors are also grateful to the anonymous referees for their careful reading of this paper and many suggestions, which helped to improve the manuscript. 

\section{Appendix}
\label{s:appendix}

\subsection{Some useful auxiliary results}

The key concentration inequality we use throughout our paper comes from Proposition 2 in~\cite{KolVal11,VK11}.

\begin{theorem}\label{BernGaussian}
Consider a finite sequence of $\CZ_l$ of independent centered random matrices with dimension $M_1\times M_2$. Assume that  $\|\CZ_l\|_{\psi_1} \leq R$ where the norm $\|\cdot\|_{\psi_1}$ of a matrix is defined as
\begin{equation}\label{def:psi-1}
\|\BZ\|_{\psi_1} := \inf_{u \geq 0} \{ \E[ \exp(\|\BZ\|/u)] \leq 2 \}.
\end{equation}
and introduce the random matrix 
\begin{equation*}
\BS = \sum_{l=1}^Q \CZ_l.
\end{equation*}
Compute the variance parameter
\begin{equation}\label{sigmasq}
\sigma^2 = \max\Big\{ \| \sum_{l=1}^Q \E(\CZ_l\CZ_l^*)\|, \| \sum_{l=1}^Q \E(\CZ_l^* \CZ_l)\| \Big\},
\end{equation}
then for all $t \geq 0$, we have the tail bound on
the operator norm of $\BS$, 
\begin{align}\label{thm:bern}
\begin{split}
\|\BS\| &
 \leq C_0 \max\Big\{ \sigma \sqrt{t + \log(M_1 + M_2)},  R\log\left( \frac{\sqrt{Q}R}{\sigma}\right)(t + \log(M_1 + M_2)) \Big\}
\end{split}
\end{align}
with probability at least $1 - e^{t}$ where $C_0$ is an absolute constant. 
\end{theorem}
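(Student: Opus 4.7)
The plan is to reduce this to the standard matrix Bernstein inequality for bounded random matrices (Tropp-style) via Hermitian dilation plus a careful truncation argument. First I would pass from rectangular matrices to self-adjoint ones by the usual dilation $\CZ_l \mapsto \widetilde{\CZ}_l := \begin{bmatrix} 0 & \CZ_l \\ \CZ_l^* & 0 \end{bmatrix}$, which is an $(M_1+M_2)\times(M_1+M_2)$ Hermitian random matrix with $\|\widetilde{\CZ}_l\| = \|\CZ_l\|$, with $\E\widetilde{\CZ}_l = 0$, and with matrix variance $\widetilde{\sigma}^2 = \sigma^2$ exactly the quantity defined in~\eqref{sigmasq}. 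This reduces the problem to a one-sided bound on $\|\sum_l \widetilde{\CZ}_l\|$, so henceforth I can assume the summands are Hermitian.

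Next I would handle the fact that the $\widetilde{\CZ}_l$ are only subexponential (not bounded) by truncation at a level $\tau = C_1 R\bigl(t+\log(M_1+M_2)\bigr)$ for a suitable constant $C_1$. Write $\widetilde{\CZ}_l = \widetilde{\CZ}_l^{\le \tau} + \widetilde{\CZ}_l^{>\tau}$ where the first piece is $\widetilde{\CZ}_l \mathbf{1}\{\|\widetilde{\CZ}_l\|\le\tau\}$. Since $\|\widetilde{\CZ}_l\|_{\psi_1}\le R$, the definition of the $\psi_1$-norm gives $\Pr(\|\widetilde{\CZ}_l\|>\tau) \le 2\exp(-\tau/R)$, so a union bound over $l=1,\dots,Q$ shows that with probability at least $1-e^{-t}/2$, the tail piece $\sum_l \widetilde{\CZ}_l^{>\tau}$ is identically zero provided $\tau$ absorbs the $\log Q$ factor (which is the source of the $\log(\sqrt{Q}R/\sigma)$ term). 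One also needs to bound the bias $\|\E \widetilde{\CZ}_l^{\le\tau}\| = \|\E \widetilde{\CZ}_l^{>\tau}\|$, which is $O(R e^{-\tau/(2R)})$ by the $\psi_1$ tail bound, and after summing contributes negligibly once $\tau$ is large enough.

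For the truncated, recentered part $\widetilde{\CW}_l := \widetilde{\CZ}_l^{\le\tau} - \E\widetilde{\CZ}_l^{\le\tau}$, which is Hermitian, mean-zero, and satisfies $\|\widetilde{\CW}_l\|\le 2\tau$ almost surely, I would apply Tropp's matrix Bernstein inequality. Its variance proxy $\|\sum_l \E\widetilde{\CW}_l^2\|$ differs from $\sigma^2$ only by the small correction coming from truncation and from the bias subtraction, both of which are $o(\sigma^2)$ under the chosen $\tau$. The standard matrix Bernstein bound then yields
\[
\left\|\sum_l \widetilde{\CW}_l\right\| \;\le\; C_2\max\!\left\{\sigma\sqrt{t+\log(M_1+M_2)},\; \tau\bigl(t+\log(M_1+M_2)\bigr)\right\}
\]
with probability at least $1-e^{-t}/2$, and substituting $\tau = C_1 R\log(\sqrt{Q}R/\sigma)\cdot\bigl(t+\log(M_1+M_2)\bigr)/\bigl(t+\log(M_1+M_2)\bigr)$ (i.e.\ absorbing the $\log Q$ into $R$) produces exactly the bound~\eqref{thm:bern}.

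The main obstacle will be choosing the truncation level $\tau$ so that three constraints are simultaneously met: (i) the tail $\sum_l \widetilde{\CZ}_l^{>\tau}$ vanishes with the required probability after a union bound over $Q$ terms, which forces $\tau \gtrsim R(\log Q + t + \log(M_1+M_2))$; (ii) the truncation bias $\sum_l \E\widetilde{\CZ}_l^{>\tau}$ is dominated by the sub-Gaussian term $\sigma\sqrt{\log(M_1+M_2)+t}$; and (iii) the Bernstein bound for $\sum_l \widetilde{\CW}_l$ has the sub-Gaussian regime controlled by $\sigma$ and the heavy-tail regime by $\tau$ in such a way that the resulting maximum matches~\eqref{thm:bern}. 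Optimizing these constraints is exactly what produces the characteristic $R\log(\sqrt{Q}R/\sigma)$ factor in the linear-in-$t$ term, and verifying that the constant $C_0$ can be made absolute (independent of $Q$, $R$, $\sigma$) is the delicate bookkeeping step.
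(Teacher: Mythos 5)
First, a point of reference: the paper does not prove this statement at all --- it is imported verbatim as ``Proposition 2 in \cite{KolVal11,VK11}'' and used as a black box. So your proposal is being measured against the proof in that cited literature, which indeed proceeds exactly along the lines you describe: Hermitian dilation to reduce to self-adjoint summands (your computation that the dilation preserves the operator norm and that its matrix variance equals $\sigma^2$ in~\eqref{sigmasq} is correct), followed by truncation and an application of the bounded matrix Bernstein inequality. At that level your strategy is the right one.

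There is, however, a concrete gap in how you execute the truncation, and it is precisely at the step you defer to ``delicate bookkeeping.'' You force the tail piece $\sum_l \widetilde{\CZ}_l^{>\tau}$ to vanish on an event of probability $1-e^{-t}/2$ via a union bound over all $Q$ summands; this requires $\tau \gtrsim R\,(t+\log Q)$, and feeding that $\tau$ into the bounded Bernstein inequality produces a linear term $\tau\,(t+\log(M_1+M_2)) \asymp R\,(t+\log Q)\,(t+\log(M_1+M_2))$, which is \emph{quadratic} in $t$ and carries a $\log Q$ rather than a $\log(\sqrt{Q}R/\sigma)$. This cannot be massaged into~\eqref{thm:bern}: for large $t$ the claimed bound grows only linearly in $t$, and no absolute constant $C_0$ absorbs the discrepancy (your parenthetical ``absorbing the $\log Q$ into $R$'' would make the constant depend on $Q$). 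You have also misattributed the origin of the factor $\log(\sqrt{Q}R/\sigma)$: it does not come from the union bound, but from choosing $\tau \asymp R\log(\sqrt{Q}R/\sigma)$ so that the variance inflation of the excess part, $Q R^2 e^{-\tau/R}$, is dominated by $\sigma^2$. With that (smaller) $\tau$ the excess part $\sum_l \widetilde{\CZ}_l^{>\tau}$ does \emph{not} vanish with the required probability, and the correct argument bounds it separately --- e.g.\ via $\bigl\|\sum_l \widetilde{\CZ}_l^{>\tau}\bigr\| \le \sum_l \|\widetilde{\CZ}_l\|\mathbf{1}\{\|\widetilde{\CZ}_l\|>\tau\}$ followed by a \emph{scalar} Bernstein inequality for this nonnegative sum, whose sub-exponential linear term is $O(Rt)$ with no extra factor of $t+\log(M_1+M_2)$ and no $\log Q$. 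Combining that with the bounded Bernstein bound for the truncated, recentered part is what yields~\eqref{thm:bern}. Your remaining ingredients (the $\psi_1$ tail bound $\Pr(\|\CZ_l\|>\tau)\le 2e^{-\tau/R}$, the $O(Re^{-\tau/(2R)})$ control of the recentering bias, and the near-preservation of the variance proxy under truncation) are all correct and are exactly what is needed once the excess part is handled this way.
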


For convenience we collect some results used throughout the proofs. 
Before we proceed, we note that there is a quantity equivalent  to $\|\cdot\|_{\psi_1}$ defined in~\eqref{def:psi-1}, i.e., 
\begin{equation}\label{def:psi-2}
c_1 \sup_{q\geq 1}q^{-1} (\E |z|^q)^{1/q} \leq \|z\|_{\psi_1} \leq c_2  \sup_{q\geq 1}q^{-1} (\E |z|^q)^{1/q},
\end{equation}
where $c_1$ and $c_2$ are two universal positive constants, see Section 5.2.4 in~\cite{Ver10}. Therefore, $\sup_{q\geq 1}q^{-1} (\E |z|^q)^{1/q}$ will be used to quantify $\|z\|_{\psi_1}$ in this section since it is easier to use in explicit calculations.

\begin{lemma} \label{lem:6JB}
Let $z$ be a random variable which obeys $\Pr\{ |z| > u \} \leq a e^{-b u }$, then
\begin{equation*}
\|z\|_{\psi_1} \leq (1 + a)/b
\end{equation*}
which is proven in Lemma 2.2.1 in~\cite{VW96}. Moreover, it is easy to verify that for a scalar $\lambda\in \CC$
\begin{equation*}
\|\lambda z\|_{\psi_1}  = |\lambda| \|z\|_{\psi_1}.
\end{equation*}
For another independent random variable $\bw$ with an exponential tail
\begin{equation}\label{lem:zw}
\|z + w\|_{\psi_1} \leq C(\|z\|_{\psi_1} + \|w\|_{\psi_1})
\end{equation}
for some universal contant $C$.
\end{lemma}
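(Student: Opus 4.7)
The lemma bundles three standard facts about the Orlicz norm $\|\cdot\|_{\psi_1}$ from~\eqref{def:psi-1}, and I would verify them separately. For the first claim (tail to $\psi_1$) I would argue directly from the definition: writing $e^{|z|/u}-1 = \int_0^{|z|/u} e^s\,ds$ and applying Fubini yields the layer-cake identity
\begin{equation*}
\E[e^{|z|/u}] - 1 \;=\; \int_0^\infty e^s\,\Pr\{|z| > us\}\,ds \;\le\; a\int_0^\infty e^{(1-bu)s}\,ds \;=\; \frac{a}{bu-1},
\end{equation*}
valid for any $u > 1/b$. The right-hand side is at most $1$ (so that $\E[e^{|z|/u}] \le 2$) exactly when $u \ge (1+a)/b$, and taking the infimum in~\eqref{def:psi-1} gives $\|z\|_{\psi_1} \le (1+a)/b$, matching the bound attributed to~\cite{VW96}.

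The positive homogeneity $\|\lambda z\|_{\psi_1} = |\lambda|\,\|z\|_{\psi_1}$ is immediate: the reparametrisation $u' := u/|\lambda|$ in $\E[\exp(|\lambda z|/u)] = \E[\exp(|z|/u')]$ puts the two defining infima in bijection, so the value of one equals $|\lambda|$ times the value of the other. For the subadditivity bound~\eqref{lem:zw} the cleanest route is via the equivalent moment characterisation~\eqref{def:psi-2}, which says that $\|\cdot\|_{\psi_1}$ differs from $\sup_{q\ge 1} q^{-1}(\E|\cdot|^q)^{1/q}$ by at most the multiplicative factors $c_1,c_2$. Minkowski's inequality in $L^q$ gives $(\E|z+w|^q)^{1/q} \le (\E|z|^q)^{1/q} + (\E|w|^q)^{1/q}$, so dividing by $q$ and taking $\sup_{q\ge 1}$ yields the triangle inequality for the moment-based quantity; converting back through~\eqref{def:psi-2} then produces~\eqref{lem:zw} with constant $C = c_2/c_1$.

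The only ``obstacle'' here is bookkeeping of constants: the loss of sharpness in $C$ comes entirely from the two-sided comparison in~\eqref{def:psi-2} between the Orlicz and moment definitions. I note also that independence of $z$ and $w$ is not actually required for~\eqref{lem:zw}, since the estimate is in essence a triangle inequality; the lemma is nevertheless invoked in the main text (for example when controlling $\|\CZ_l\|_{\psi_1}$) precisely for independent summands, so that the ensuing Bernstein-type estimate~\eqref{thm:bern} can be applied term by term.
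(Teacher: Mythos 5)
Your proof of the subadditivity bound~\eqref{lem:zw} is exactly the paper's argument: pass to the equivalent moment quantity in~\eqref{def:psi-2}, apply Minkowski in $L^q$, and convert back (your constant $C=c_2/c_1$ is in fact the one that follows; the paper's displayed $c_1c_2$ is a typo), and your observation that independence is not needed is correct. The only difference is that you additionally supply an explicit layer-cake computation for the tail-to-$\psi_1$ bound, which the paper simply outsources to Lemma~2.2.1 of~\cite{VW96}; your computation is correct.
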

\begin{proof}
We only prove~\eqref{lem:zw} by using the equivalent quantity introduced in~\eqref{def:psi-2}. Suppose that both $z$ and $w$ yield~\eqref{def:psi-2}, there holds
\begin{eqnarray*}
\|z+w\|_{\psi_1} & \leq & c_2 \sup_{q\geq 1}q^{-1}(\E |z + w|^q)^{1/q} \\
& \leq & c_2 \sup_{q\geq 1}q^{-1} \left[(\E |z|)^{1/q} + (\E |w|)^{1/q}\right] \\
& \leq & c_1c_2 (\|z\|_{\psi_1} + \|w\|_{\psi_1}),
\end{eqnarray*}
where the second inequality follows from triangle inequality on $L^p$ spaces.
\end{proof}

\begin{lemma} \label{lem:multiple1}
Let $\bu\in\RR^n \sim \mathcal{N}(\bzero, \I_n)$, then $\|\bu\|^2 \sim \chi^2_{n}$ and
\begin{equation}
\label{lem:8JB}
\| \|\bu\|^2 \|_{\psi_1} = \| \lag\bu, \bu\rag \|_{\psi_1}  \leq 2 n.
\end{equation}
Furthermore,
\begin{equation}
\label{lem:9JB}
\E\ls (\bu\bu^* - \I_n)^2 \rs = (n + 1)\I_n.
\end{equation}
\end{lemma}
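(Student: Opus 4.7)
My plan is to handle the two assertions separately, since they are independent calculations about the standard Gaussian vector $\bu \in \RR^n$ and amount to routine moment computations.

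For the first bound $\|\|\bu\|^2\|_{\psi_1} \le 2n$, I would work directly from the Orlicz-norm definition~\eqref{def:psi-1}. The key input is that $\|\bu\|^2 \sim \chi^2_n$ has the explicit moment generating function $\E \exp(t\|\bu\|^2) = (1-2t)^{-n/2}$ for $t < 1/2$. Choosing $u = 2n$ (so that $t = 1/u = 1/(2n) < 1/2$ for $n \ge 1$), the goal reduces to verifying
\begin{equation*}
\E\exp\!\left(\tfrac{\|\bu\|^2}{2n}\right) = \left(1 - \tfrac{1}{n}\right)^{-n/2} \le 2,
\end{equation*}
which follows from the elementary fact that $(1 - 1/n)^{-n}$ increases monotonically to $e$; hence $(1-1/n)^{-n/2} \le \sqrt{e} < 2$ (the small-$n$ cases being checked directly). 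By definition of the Orlicz norm, this yields $\|\|\bu\|^2\|_{\psi_1} \le 2n$.

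For the second equation, I would expand the square using $\bu^*\bu = \|\bu\|^2$:
\begin{equation*}
(\bu\bu^* - \I_n)^2 = \bu\bu^*\bu\bu^* - 2\bu\bu^* + \I_n = \|\bu\|^2 \bu\bu^* - 2\bu\bu^* + \I_n.
\end{equation*}
Taking expectations, the middle term gives $-2\I_n$ since $\E[\bu\bu^*] = \I_n$. The only nontrivial piece is $\E[\|\bu\|^2 \bu\bu^*]$, which I would compute entry-wise:
\begin{equation*}
\left(\E[\|\bu\|^2 \bu\bu^*]\right)_{ij} = \sum_{k=1}^n \E[u_i u_j u_k^2].
\end{equation*}
For $i \neq j$, each summand is an odd function of $u_i$ and thus vanishes by symmetry of the standard Gaussian; for $i = j$, the sum equals $\E[u_i^4] + \sum_{k \neq i} \E[u_i^2 u_k^2] = 3 + (n-1) = n+2$ using $\E u_i^4 = 3$ and independence. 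Therefore $\E[\|\bu\|^2 \bu\bu^*] = (n+2)\I_n$, and combining yields
\begin{equation*}
\E[(\bu\bu^* - \I_n)^2] = (n+2)\I_n - 2\I_n + \I_n = (n+1)\I_n.
\end{equation*}

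There is no real obstacle here: the first part is a one-line MGF check, the second a short Isserlis-type Gaussian moment computation. If any care is needed, it is only in verifying the $\psi_1$ estimate uniformly in $n$, which is dispatched by the monotonicity of $(1-1/n)^n$.
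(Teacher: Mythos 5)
The paper states this lemma without proof (it is an auxiliary fact in the spirit of Lemmas 10--13 of \cite{RR12}, which the very next lemma cites), so there is no in-paper argument to compare against. Your overall strategy --- the $\chi^2_n$ moment generating function for the $\psi_1$ bound, and an entrywise Gaussian moment computation for the matrix identity --- is the natural one, and your proof of~\eqref{lem:9JB} is correct and in exactly the same style as the paper's own proof of the companion identity~\eqref{lem:12JB}.

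However, the justification of the $\psi_1$ bound contains a genuine error. The sequence $(1-1/n)^{-n}=\bigl(1+\tfrac{1}{n-1}\bigr)^{n}$ \emph{decreases} to $e$ (it is the classical decreasing sequence $(1+1/m)^{m+1}$ with $m=n-1$); it does not increase to $e$. Consequently your intermediate bound $(1-1/n)^{-n/2}\le\sqrt{e}$ is false for every finite $n$ (at $n=3$ one gets $(3/2)^{3/2}\approx 1.84>\sqrt{e}\approx 1.65$). What rescues the conclusion for $n\ge 2$ is precisely the correct monotonicity: since $(1-1/n)^{-n/2}$ is decreasing in $n$, its maximum over $n\ge 2$ is attained at $n=2$, where it equals exactly $2$, so $\E\exp\bigl(\|\bu\|^2/(2n)\bigr)\le 2$ and hence $\|\,\|\bu\|^2\,\|_{\psi_1}\le 2n$ under definition~\eqref{def:psi-1}. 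Note also that $n=1$ is a genuine exception: there $t=1/(2n)=1/2$ sits on the boundary of the domain of the $\chi^2_1$ moment generating function, $\E\exp(u_1^2/2)=\infty$, and in fact $\|u_1^2\|_{\psi_1}=8/3>2$, so the stated inequality fails for $n=1$ under the literal definition. This is harmless for the paper, which uses $\psi_1$ norms only up to universal constants, but your parenthetical ``small-$n$ cases being checked directly'' would have to surface it. With the monotonicity corrected and $n=1$ either excluded or assigned a slightly larger constant, your argument is complete.
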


\begin{lemma}[Lemma 10-13 in~\cite{RR12}] \label{lem:multiple2}
Let $\bu\in\RR^n\sim\mathcal{N}(\bzero, \I_n)$ and $\bq\in\CC^n$ be any deterministic vector, then the following properties hold
\begin{equation}
|\lag \bu, \bq\rag|^2 \sim \|\bq\|^2 \chi^2_1,
\label{lem:7JBa}
\end{equation}
\begin{equation}
\| |\lag \bu, \bq\rag|^2  \|_{\psi_1} \leq C \|\bq\|^2,
\label{lem:7JB}
\end{equation}
\begin{equation}
\label{lem:10JB}
\| |\lag \bu, \bq\rag |^2 - \|\bq\|^2 \|_{\psi_1} \leq C \|\bq\|^2,
\end{equation}
\begin{equation}
\label{lem:11JB}
\| (\bu\bu^* - \I_n)\bq\|_{\psi_1} \leq C\sqrt{n}\|\bq\|,
\end{equation}
\begin{equation}
\label{lem:12JB}
\E\ls (\bu\bu^* - \I_n)\bq\bq^* (\bu\bu^* - \I_n)\rs = \|\bq\|^2 \I_n + \bar{\bq}\bar{\bq}^*.
\end{equation}
Let $\bp\in\CC^n$ be another deterministic vector, then
\begin{equation}\label{lem:13JB}
\| \lag \bu, \bq \rag \lag \bp, \bu\rag  - \lag \bq, \bp\rag\|_{\psi_1} \leq \|\bq\|\|\bp\|.
\end{equation}
\end{lemma}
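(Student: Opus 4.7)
The plan is to verify each of the six identities separately using the equivalent characterization $\|z\|_{\psi_1} \asymp \sup_{q\geq 1} q^{-1}(\E|z|^q)^{1/q}$ recorded in~\eqref{def:psi-2}, together with two workhorse tools: Isserlis' (Wick's) theorem for Gaussian fourth moments, and the fact that a product of two sub-Gaussian scalars is sub-exponential with $\psi_1$-norm bounded by the product of their $\psi_2$-norms.

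For the scalar bounds~\eqref{lem:7JBa},~\eqref{lem:7JB}, and~\eqref{lem:10JB}, I write $\lag\bu,\bq\rag=\bq^*\bu$. When $\bq\in\RR^n$, rotational invariance of the standard Gaussian immediately gives $\bq^*\bu\sim\mathcal{N}(0,\|\bq\|^2)$, so $|\bq^*\bu|^2\sim\|\bq\|^2\chi^2_1$; for complex $\bq$, I split into real and imaginary parts via $|\bq^*\bu|^2=(\Real(\bq)^T\bu)^2+(\operatorname{Im}(\bq)^T\bu)^2$, still a sum of squared Gaussians with total variance $\|\bq\|^2$. Since $\chi^2_1$ has universal $\psi_1$-norm,~\eqref{lem:7JB} follows by rescaling, and~\eqref{lem:10JB} follows because centering a sub-exponential variable changes its $\psi_1$-norm by at most a factor of two.

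For the vector bound~\eqref{lem:11JB}, I decompose $(\bu\bu^*-\I_n)\bq=(\bu^*\bq)\bu-\bq$ and apply the triangle inequality: $\|(\bu\bu^*-\I_n)\bq\|\leq|\bu^*\bq|\cdot\|\bu\|+\|\bq\|$. Here $|\bu^*\bq|$ is sub-Gaussian with $\psi_2$-norm $\lesssim\|\bq\|$, while $\|\bu\|$ is sub-Gaussian with $\psi_2$-norm $\lesssim\sqrt{n}$ by the usual Gaussian concentration around $\sqrt{n}$, so the product is sub-exponential with $\psi_1$-norm $\lesssim\sqrt{n}\|\bq\|$; the deterministic $\|\bq\|$ tail is absorbed. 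The same product-of-sub-Gaussians argument dispatches~\eqref{lem:13JB}, since the centered scalar $(\bq^*\bu)(\bu^*\bp)-\lag\bq,\bp\rag$ is the centered product of two sub-Gaussians with norms $\lesssim\|\bq\|$ and $\lesssim\|\bp\|$, hence sub-exponential with norm $\lesssim\|\bq\|\|\bp\|$.

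The one identity demanding genuine computation is~\eqref{lem:12JB}. I would expand
\begin{equation*}
(\bu\bu^*-\I_n)\bq\bq^*(\bu\bu^*-\I_n)=\bu\bu^*\bq\bq^*\bu\bu^*-\bu\bu^*\bq\bq^*-\bq\bq^*\bu\bu^*+\bq\bq^*,
\end{equation*}
note that the two cross terms each contribute $\bq\bq^*$ in expectation since $\E[\bu\bu^*]=\I_n$, and reduce the leading term to $\E[|\bu^*\bq|^2\bu\bu^*]$ via $\bu\bu^*\bq\bq^*\bu\bu^*=|\bu^*\bq|^2\bu\bu^*$. Applying Isserlis' theorem entrywise, $\E[u_iu_ju_ku_l]=\delta_{ij}\delta_{kl}+\delta_{ik}\delta_{jl}+\delta_{il}\delta_{jk}$, yields $\E[|\bu^*\bq|^2\bu\bu^*]=\|\bq\|^2\I_n+\bq\bq^*+\bar{\bq}\bar{\bq}^*$; assembling the four pieces, the three $\bq\bq^*$ contributions algebraically cancel and what remains is $\|\bq\|^2\I_n+\bar{\bq}\bar{\bq}^*$. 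The only real obstacle is notational: because $\bu$ is real while $\bq$ may be complex, one must distinguish $\bq\bq^*$ (entries $q_k\bar q_l$) from $\bar{\bq}\bar{\bq}^*$ (entries $\bar q_k q_l$) to avoid a spurious conjugation sign when matching Wick contractions. Otherwise no depth is required, as the lemma is essentially Lemmas 10--13 of~\cite{RR12} imported verbatim.
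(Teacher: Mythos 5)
Your proposal is correct and matches the paper's treatment: the paper cites Lemmas 10--13 of~\cite{RR12} verbatim for~\eqref{lem:7JBa}--\eqref{lem:11JB} and~\eqref{lem:13JB} (your sub-Gaussian/sub-exponential product arguments are the standard way those are established), and proves only~\eqref{lem:12JB}, doing so by exactly your route --- reducing to $\E\ls |\lag \bu,\bq\rag|^2\bu\bu^*\rs - \bq\bq^*$ and evaluating the fourth moment entrywise to get $\|\bq\|^2\I_n + \bq\bq^* + \bar{\bq}\bar{\bq}^*$, including the same care about $\bq\bq^*$ versus $\bar{\bq}\bar{\bq}^*$ for complex $\bq$.
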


\begin{proof}
~\eqref{lem:7JBa} to~\eqref{lem:11JB} and~\eqref{lem:13JB} directly follow from Lemma 10-13 in~\cite{RR12}, except for small differences in the constants. 
We only prove~\eqref{lem:12JB} here.
\begin{equation*}
\E\ls (\bu\bu^* - \I_n)\bq\bq^* (\bu\bu^* - \I_n)\rs = \E\ls |\lag \bu, \bq\rag|^2 \bu\bu^*\rs - \bq\bq^*.
\end{equation*}
For each $(i,j)$-th entry of $R_{ij} =  |\lag \bu, \bq\rag|^2 u_i u_j = \bq^* \ls u_iu_j \bu\bu^*\rs \bq$. 
\begin{equation*}
\E\ls u_iu_j \bu\bu^*\rs 
= 
\begin{cases}
\BE_{ij} + \BE_{ji} & i\neq j \\
\I_n + \BE_{ii} & i = j
\end{cases}
\end{equation*}
where $\BE_{ij}$ is an $n\times n$ matrix with  the $(i,j)$-th entry equal to 1 and the others being $0.$ The expectation of $R_{ij}$
\begin{equation*}
\E R_{ij} = 
\begin{cases}
q_i^*q_j + q_j^*q_i & i \neq j\\
\|\bq\|^2 + |q_i|^2 & i = j
\end{cases}
\end{equation*}
and
\begin{align*}
\E\ls |\lag \bu, \bq\rag|^2 \bu\bu^*\rs - \bq\bq^* & = \|\bq\|^2 \I_n + \bq\bq^* + \bar{\bq}\bar{\bq}^* - \bq\bq^* \\
& = \|\bq\|^2 \I_n + \bar{\bq}\bar{\bq}^*
\end{align*}
where $\bar{\bq}$ is the complex conjugate of $\bq$.
\end{proof}

\begin{lemma}\label{lemma:psi}
Assume $\bu\sim\mathcal{N}(\bzero,\I_{n})$ and $\bv\sim\mathcal{N}(\bzero,\I_{m})$ are two independent Gaussian random vectors, then
\begin{equation*}
\left\| \|\bu\|^2 + \|\bv\|^2\right\|_{\psi_1}  \leq n+m
\end{equation*}
and
\begin{equation*}
\left\| \|\bu\| \cdot \|\bv\|\right\|_{\psi_1} \leq C\sqrt{mn}.
\end{equation*}

\end{lemma}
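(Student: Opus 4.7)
The plan is to exploit the independence of $\bu$ and $\bv$ by concatenating them into a single standard Gaussian vector. Set $\bw := [\bu^{*}, \bv^{*}]^{*}\in \RR^{n+m}$, so that $\bw \sim \mathcal{N}(\bzero, \I_{n+m})$ and $\|\bu\|^2 + \|\bv\|^2 = \|\bw\|^2$, which is distributed as $\chi^2_{n+m}$. The first bound then follows immediately by applying~\eqref{lem:8JB} in ambient dimension $n+m$ to $\bw$: this gives $\bigl\|\,\|\bw\|^2\,\bigr\|_{\psi_1}$ of order $n+m$, and the absolute constant is absorbed into the statement.

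For the product bound, the plan is to reduce a product of norms to a weighted sum of squared norms via AM-GM. For any $t>0$ one has the deterministic pointwise inequality
\begin{equation*}
\|\bu\|\cdot\|\bv\| \;\le\; \tfrac{1}{2}\bigl(t\,\|\bu\|^2 + t^{-1}\,\|\bv\|^2\bigr).
\end{equation*}
Since the $\psi_1$ norm is monotone with respect to pointwise domination of nonnegative random variables (immediate from~\eqref{def:psi-1}), combining this with the $\psi_1$ triangle inequality~\eqref{lem:zw}, positive homogeneity, and~\eqref{lem:8JB} applied to $\bu$ and $\bv$ separately yields
\begin{equation*}
\bigl\|\,\|\bu\|\cdot\|\bv\|\,\bigr\|_{\psi_1}\;\le\;\tfrac{C}{2}\bigl(t\cdot 2n + t^{-1}\cdot 2m\bigr).
\end{equation*}
Choosing $t=\sqrt{m/n}$ balances the two terms, giving the bound $2C\sqrt{mn}$, which is the claimed estimate.

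Neither step presents a real obstacle; the only minor subtlety is that~\eqref{lem:zw} introduces a universal constant, which is harmlessly folded into the final $C$. As an alternative self-contained route, one could work directly with the equivalent moment characterization~\eqref{def:psi-2}, use independence of $\bu$ and $\bv$ to factor $\E(\|\bu\|\|\bv\|)^q = \E\|\bu\|^q\cdot\E\|\bv\|^q$, and invoke the standard Gaussian moment bound $(\E\|\bu\|^q)^{1/q}\lesssim \sqrt{n}+\sqrt{q}$ before optimizing over $q\ge 1$; but the AM-GM route is shorter, avoids the optimization in $q$, and immediately reuses the already-proved estimate~\eqref{lem:8JB}.
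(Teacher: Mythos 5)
Your proof is correct, but it takes a genuinely different route from the paper's for the product bound. The paper proves $\left\| \|\bu\|\cdot\|\bv\|\right\|_{\psi_1}\leq C\sqrt{mn}$ by passing to the equivalent moment characterization~\eqref{def:psi-2}, using independence to factor $\E\left(\|\bu\|^q\|\bv\|^q\right)=\E\|\bu\|^q\cdot\E\|\bv\|^q$, and then splitting the supremum over $q$ to land on $\sqrt{\left\|\|\bu\|^2\right\|_{\psi_1}\left\|\|\bv\|^2\right\|_{\psi_1}}\leq 2\sqrt{mn}$ --- essentially your ``alternative self-contained route.'' Your primary argument instead uses the pointwise AM--GM bound $\|\bu\|\|\bv\|\leq\frac{1}{2}\left(t\|\bu\|^2+t^{-1}\|\bv\|^2\right)$, monotonicity of $\|\cdot\|_{\psi_1}$ on nonnegative variables, the triangle inequality~\eqref{lem:zw}, and~\eqref{lem:8JB}, then optimizes $t=\sqrt{m/n}$. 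This is more elementary and, notably, never uses the independence of $\bu$ and $\bv$, so it proves a slightly more general statement; the price is an extra universal constant from~\eqref{lem:zw}, which is harmless here. For the first bound your concatenation $\bw=[\bu^*,\bv^*]^*$ with $\|\bw\|^2\sim\chi^2_{n+m}$ is a touch cleaner than the paper's triangle-inequality argument (it avoids the constant in~\eqref{lem:zw} entirely), though both incur the same cosmetic factor-of-two mismatch with the stated bound $n+m$ coming from the ``$\leq 2n$'' in~\eqref{lem:8JB}; the paper is equally cavalier about this constant, so no objection.
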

\begin{proof}
Let us start with the first one. 
\begin{equation*}
\left\| \|\bu\|^2 + \|\bv\|^2\right\|_{\psi_1}  \leq  \| \|\bu\|^2 \|_{\psi_1} +  \| \|\bu\|^2 \|_{\psi_1} \leq n + m,
\end{equation*}
which directly follows from~\eqref{lem:zw} and~\eqref{lem:8JB}.
Following from independence, 
\begin{align*}
\left\| \|\bu\| \cdot \|\bv\|\right\|_{\psi_1}  & \leq c_2  \sup_{q\geq 1} q^{-1} (\E \|\bu\|^q \|\bv\|^q)^{1/q}  \leq c_2\sup_{q} q^{-1} (\E \|\bu\|^q )^{1/q} (\E \|\bv\|^q )^{1/q}.
\end{align*}
Let $t = q/2$, 

\begin{align*}
 \left\| \|\bu\| \cdot \|\bv\|\right\|_{\psi_1}  & \leq c_2 \sup_{t\geq 1} \frac{1}{2t} (\E \|\bu\|^{2t} )^{1/2t} (\E \|\bv\|^{2t} )^{1/2t}  \\
& \leq \frac{c_2}{2} \left( \sup_{t\geq 1} \frac{1}{t} (\E \|\bu\|^{2t} )^{1/t} \right)^{1/2}  \left( \sup_{t\geq 1} \frac{1}{t} (\E \|\bv\|^{2t} )^{1/t} \right)^{1/2} \\
&  \leq \frac{c_1c_2}{2} \sqrt{\| \bu \|_{\psi_1}  \cdot \| \bv \|_{\psi_1}} \leq  C\sqrt{mn},
\end{align*}
where $\|\bu\|^2\sim\chi^2_n$ and $\|\bv\|^2 \sim \chi^2_m$ and  $\|\bu\|_{\psi_1}$ and $\|\bv\|_{\psi_1}$ are given by~\eqref{lem:8JB}.
\end{proof}

\subsection{A useful fact about the ``low-frequency" DFT matrix}
\label{sub:fourier}
Suppose that $\BB$ is a ``low-frequency'' Fourier matrix, i.e., 
\begin{equation*}
\BB = \frac{1}{\sqrt{L}} (e^{-2\pi i lk/L})_{l,k} \in \CC^{L\times K},
\end{equation*}
where $1\leq k\leq K$ and $1\leq l\leq L$ with $K \le L$.
Assume there exists a $Q$ such that $L = QP$ with $Q \geq K$. We choose $\Gamma_p = \{p, P + p, \cdots , (Q - 1)P + p\}$ with $1\leq p < P$ such that $|\Gamma_p| = Q$, $\bigcup_{1\leq p\leq P} \Gamma_p = \{1,\cdots, L\}$ and they are mutually disjoint. Let $\BB_p$ be the $Q\times K$ matrix by choosing its rows from those of $\BB$ with indices in $\Gamma_p$. Then we can rewrite $\BB_p$ as 
\begin{equation*}
\BB_{p} = \frac{1}{\sqrt{L}}(e^{-2\pi i (tP - P +  p)k/(PQ)})_{1\leq t\leq Q, 1\leq k\leq K} \in \CC^{Q\times K},
\end{equation*}
and it actually equals
\begin{equation*}
\BB_{p} = \frac{1}{\sqrt{L}} (e^{-2\pi i tk/Q} e^{2\pi i (P - p)k/(PQ)} )_{1\leq t\leq Q, 1\leq k\leq K}\in \CC^{Q\times K}.
\end{equation*}
Therefore
\begin{equation*}
\BB_{p} = \sqrt{\frac{Q}{L}}\BF_{Q}  \diag (e^{2\pi i (P - p)/(PQ)}, \cdots, e^{2\pi i K (P - p)/(PQ)}  )
\end{equation*}
where $\BF_Q$ is the first $K$ columns of a $Q\times Q$ DFT matrix with $\BF_Q^*\BF_Q = \I_K$. There holds
\begin{equation*}
\sum_{l\in\Gamma_p} \bb_l\bb_l^* = \BB_p^*\BB_p = \frac{Q}{L} \I_K
\end{equation*}
where $\bb_l$ is the $l$-th column of $\BB^*.$


\begin{thebibliography}{10}

\bibitem{ACD15}
A.~Ahmed, A.~Cosse, and L.~Demanet.
\newblock A convex approach to blind deconvolution with diverse inputs.
\newblock In {\em Computational Advances in Multi-Sensor Adaptive Processing
  (CAMSAP), 2015 IEEE 6th International Workshop on}, pages 5--8. IEEE, 2015.

\bibitem{RR12}
A.~Ahmed, B.~Recht, and J.~Romberg.
\newblock Blind deconvolution using convex programming.
\newblock {\em Information Theory, IEEE Transactions on}, 60(3):1711--1732,
  2014.

\bibitem{ALMT14}
D.~Amelunxen, M.~Lotz, M.~B. McCoy, and J.~A. Tropp.
\newblock Living on the edge: Phase transitions in convex programs with random
  data.
\newblock {\em Information and Inference}, page iau005, 2014.

\bibitem{BPGD13}
C.~Bilen, G.~Puy, R.~Gribonval, and L.~Daudet.
\newblock Convex optimization approaches for blind sensor calibration using
  sparsity.
\newblock {\em Signal Processing, IEEE Transactions on}, 62(18):4847--4856,
  2014.

\bibitem{CR08}
E.~Cand\`es and B.~Recht.
\newblock {Exact matrix completion via convex optimization}.
\newblock {\em Foundations of Computational Mathematics}, 9(6):717--772, 2009.

\bibitem{CR11}
E.~Cand\`{e}s and B.~Recht.
\newblock Simple bounds for low-complexity model reconstruction.
\newblock Technical report, 2011.

\bibitem{CLS14}
E.~J. Candes, X.~Li, and M.~Soltanolkotabi.
\newblock Phase retrieval via {W}irtinger flow: Theory and algorithms.
\newblock {\em Information Theory, IEEE Transactions on}, 61(4):1985--2007,
  2015.

\bibitem{CanPlan10}
E.~J. Candes and Y.~Plan.
\newblock Matrix completion with noise.
\newblock {\em Proceedings of the IEEE}, 98(6):925--936, 2010.

\bibitem{candes07sparsity}
E.~J. Cand\`{e}s and J.~Romberg.
\newblock Sparsity and incoherence in compressive sampling.
\newblock {\em Inverse problems}, 23(3):969, 2007.

\bibitem{candes2006}
E.~J. Candes, J.~K. Romberg, and T.~Tao.
\newblock Stable signal recovery from incomplete and inaccurate measurements.
\newblock {\em Communications on pure and applied mathematics},
  59(8):1207--1223, 2006.

\bibitem{CSV11}
E.~J. Cand\`{e}s, T.~Strohmer, and V.~Voroninski.
\newblock Phaselift: Exact and stable signal recovery from magnitude
  measurements via convex programming.
\newblock {\em Communications on Pure and Applied Mathematics},
  66(8):1241--1274, 2013.

\bibitem{candes_plan2011}
E.~Cand$\grave{\text{e}}$s and Y.~Plan.
\newblock {A} {P}robabilistic and {RIP}less {T}heory of {C}ompressed {S}ensing.
\newblock {\em {IEEE} {T}ransactions on {I}nformation {T}heory},
  57(11):7235--7254, 2011.

\bibitem{Cardoso2002}
J.-F. Cardoso, H.~Snoussi, J.~Delabrouille, and G.~Patanchon.
\newblock Blind separation of noisy {G}aussian stationary sources. application
  to cosmic microwave background imaging.
\newblock {\em arXiv preprint astro-ph/0209466}, 2002.

\bibitem{chaudhuri2014blind}
S.~Chaudhuri, R.~Velmurugan, and R.~Rameshan.
\newblock Blind deconvolution methods: A review.
\newblock In {\em Blind Image Deconvolution: Methods and Convergence}, pages
  37--60. Springer, 2014.

\bibitem{Chi16}
Y.~Chi.
\newblock Guaranteed blind sparse spikes deconvolution via lifting and convex
  optimization.
\newblock {\em IEEE Journal of Selected Topics in Signal Processing},
  10(4):782--794, 2016.

\bibitem{Fazel02}
M.~Fazel.
\newblock {\em Matrix rank minimization with applications}.
\newblock PhD thesis, PhD thesis, Stanford University, 2002.

\bibitem{FR13}
S.~Foucart and H.~Rauhut.
\newblock {\em A mathematical introduction to compressive sensing}.
\newblock Springer, 2013.

\bibitem{FW91}
B.~Friedlander and A.~J. Weiss.
\newblock Self-calibration for high-resolution array processing.
\newblock In S.~Haykin, editor, {\em Advances in Spectrum Analysis and Array
  Processing, Vol. {II}}, chapter~10, pages 349--413. Prentice-Hall, 1991.

\bibitem{gesbert1999blind}
D.~Gesbert, B.~C. Ng, and A.~J. Paulraj.
\newblock Blind space-time receivers for {CDMA} communications.
\newblock In {\em CDMA Techniques for Third Generation Mobile Systems}, pages
  285--302. Springer, 1999.

\bibitem{cvx}
M.~Grant, S.~Boyd, and Y.~Ye.
\newblock Cvx: Matlab software for disciplined convex programming, 2008.

\bibitem{gross11recovering}
D.~Gross.
\newblock Recovering low-rank matrices from few coefficients in any basis.
\newblock {\em Information Theory, IEEE Transactions on}, 57(3):1548--1566,
  2011.

\bibitem{HW78HM}
A.~Hedayat, W.~Wallis, et~al.
\newblock Hadamard matrices and their applications.
\newblock {\em The Annals of Statistics}, 6(6):1184--1238, 1978.

\bibitem{KolVal11}
V.~Koltchinskii et~al.
\newblock Von {N}eumann entropy penalization and low-rank matrix estimation.
\newblock {\em The Annals of Statistics}, 39(6):2936--2973, 2011.

\bibitem{VK11}
V.~Koltchinskii, K.~Lounici, A.~B. Tsybakov, et~al.
\newblock Nuclear-norm penalization and optimal rates for noisy low-rank matrix
  completion.
\newblock {\em The Annals of Statistics}, 39(5):2302--2329, 2011.

\bibitem{li2001direct}
X.~Li and H.~H. Fan.
\newblock Direct blind multiuser detection for {CDMA} in multipath without
  channel estimation.
\newblock {\em Signal Processing, IEEE Transactions on}, 49(1):63--73, 2001.

\bibitem{LLSW16}
X.~Li, S.~Ling, T.~Strohmer, and K.~Wei.
\newblock Rapid, robust, and reliable blind deconvolution via nonconvex
  optimization.
\newblock {\em arXiv preprint arXiv:1606.04933}, 2016.

\bibitem{LLB16ID}
Y.~Li, K.~Lee, and Y.~Bresler.
\newblock Identifiability in blind deconvolution with subspace or sparsity
  constraints.
\newblock {\em IEEE Transactions on Information Theory}, 62(7):4266--4275,
  2016.

\bibitem{LS15}
S.~Ling and T.~Strohmer.
\newblock Self-calibration and biconvex compressive sensing.
\newblock {\em Inverse Problems}, 31(11):115002, 2015.

\bibitem{LS16b}
S.~Ling and T.~Strohmer.
\newblock Simultaneous blind deconvolution and blind demixing via convex
  programming.
\newblock In {\em Signals, Systems and Computers, 2016 50th Asilomar Conference
  on}, pages 1223--1227. IEEE, 2016.

\bibitem{LS17}
S.~Ling and T.~Strohmer.
\newblock Regularized gradient descent: A nonconvex recipe for fast joint blind
  deconvolution and demixing.
\newblock {\em arXiv preprint arXiv:1703.08642}, 2017.

\bibitem{LXQZ09}
J.~Liu, J.~Xin, Y.~Qi, F.-G. Zheng, et~al.
\newblock A time domain algorithm for blind separation of convolutive sound
  mixtures and {$L_1$} constrained minimization of cross correlations.
\newblock {\em Communications in Mathematical Sciences}, 7(1):109--128, 2009.

\bibitem{mccoy2014convexity}
M.~B. McCoy, V.~Cevher, Q.~T. Dinh, A.~Asaei, and L.~Baldassarre.
\newblock Convexity in source separation: Models, geometry, and algorithms.
\newblock {\em Signal Processing Magazine, IEEE}, 31(3):87--95, 2014.

\bibitem{mccoy2014sharp}
M.~B. McCoy and J.~A. Tropp.
\newblock Sharp recovery bounds for convex demixing, with applications.
\newblock {\em Foundations of Computational Mathematics}, 14(3):503--567, 2014.

\bibitem{Recht11MC}
B.~Recht.
\newblock A simpler approach to matrix completion.
\newblock {\em The Journal of Machine Learning Research}, 12:3413--3430, 2011.

\bibitem{RechtSIAM}
B.~Recht, M.~Fazel, and P.~Parrilo.
\newblock Guaranteed minimum rank solutions of matrix equations via nuclear
  norm minimization.
\newblock {\em SIAM Review}, 52(3):471--501, 2010.

\bibitem{SZY11}
N.~Shamir, Z.~Zalevsky, L.~Yaroslavsky, and B.~Javidi.
\newblock Blind source separation of images based on general cross correlation
  of linear operators.
\newblock {\em Journal of Electronic Imaging}, 20(2):023017--023017, 2011.

\bibitem{shin2007blind}
C.~Shin, R.~W. Heath~Jr, and E.~J. Powers.
\newblock Blind channel estimation for {MIMO-OFDM} systems.
\newblock {\em Vehicular Technology, IEEE Transactions on}, 56(2):670--685,
  2007.

\bibitem{SSZ05}
S.~Shwartz and M.~Zibulevsky.
\newblock {\em Efficient blind separation of convolutive image mixtures}.
\newblock Technion-IIT, Department of Electrical Engineering, 2005.

\bibitem{Stewart90}
G.~W. Stewart.
\newblock Perturbation theory for the singular value decomposition.
\newblock {\em Technical report CS-TR 2539, University of Maryland}, September
  1990.

\bibitem{sudhakar2010double}
P.~Sudhakar, S.~Arberet, and R.~Gribonval.
\newblock Double sparsity: Towards blind estimation of multiple channels.
\newblock In {\em Latent Variable Analysis and Signal Separation}, pages
  571--578. Springer, 2010.

\bibitem{SGP15}
M.~N. Syed, P.~G. Georgiev, and P.~M. Pardalos.
\newblock {Blind Signal Separation Methods in Computational Neuroscience}.
\newblock In {\em Modern Electroencephalographic Assessment Techniques}, pages
  291--322. Springer, 2015.

\bibitem{Toumi14}
I.~Toumi, S.~Caldarelli, and B.~Torr{\'e}sani.
\newblock A review of blind source separation in {NMR} spectroscopy.
\newblock {\em Progress in nuclear magnetic resonance spectroscopy}, 81:37--64,
  2014.

\bibitem{VW96}
A.~van~der Vaart and J.~Wellner.
\newblock {\em Weak convergence and empirical processes}.
\newblock Springer Series in Statistics. Springer-Verlag, New York, 1996.
\newblock With applications to statistics.

\bibitem{Ver98}
S.~Verdu.
\newblock {\em Multiuser Detection}.
\newblock Cambridge University Press, 1998.

\bibitem{Ver10}
R.~Vershynin.
\newblock Introduction to the non-asymptotic analysis of random matrices.
\newblock In Y.~C. Eldar and G.~Kutyniok, editors, {\em Compressed Sensing:
  Theory and Applications}, chapter~5. Cambridge University Press, 2012.

\bibitem{WP98}
X.~Wang and H.~V. Poor.
\newblock Blind equalization and multiuser detection in dispersive {CDMA}
  channels.
\newblock {\em Communications, IEEE Transactions on}, 46(1):91--103, 1998.

\bibitem{Wedin72}
P.-{\AA}. Wedin.
\newblock Perturbation bounds in connection with singular value decomposition.
\newblock {\em BIT Numerical Mathematics}, 12(1):99--111, 1972.

\bibitem{WGMM13}
J.~Wright, A.~Ganesh, K.~Min, and Y.~Ma.
\newblock Compressive principal component pursuit.
\newblock {\em Information and Inference}, 2(1):32--68, 2013.

\bibitem{WBSJ14}
G.~Wunder, H.~Boche, T.~Strohmer, and P.~Jung.
\newblock Sparse signal processing concepts for efficient {5G}~system design.
\newblock {\em IEEE Access}, 3:195---208, 2015.

\end{thebibliography}
\bibliographystyle{abbrv}

%





\end{document}